\newcommand{\mathset}[1]{\ensuremath {\mathbb {#1}}}
\newcommand{\N}{\mathset {N}}
\newcommand{\eps}{\varepsilon}
\newcommand{\R}{\mathset{R}}
\newcommand{\Z}{\mathset{Z}}
\DeclareMathOperator{\diag}{\text{diag}}
\DeclareMathOperator{\dist}{d}
\DeclareMathOperator{\aff}{aff}
\newcommand{\lowdimapprox}[1][]{(4k+3)(d#1-k-1)+\sqrt{k+1}}
\newtheorem{theorem} {Theorem}[section]
\newtheorem{lemma}[theorem]{Lemma}
\newtheorem{corollary}[theorem]{Corollary}
\newtheorem{claim}[theorem]{Claim}
\newtheorem{remark}[theorem]{Remark}
\newenvironment{alg}{\begin{algorithm}[htbp]}{\end{algorithm}}
\title{Approximate $k$-flat Nearest Neighbor
  Search\thanks{WM and PS
were supported in part by DFG Grants MU 3501/1 and MU 3501/2.
YS was supported by the Deutsche Forschungsgemeinschaft within
the research training group `Methods for Discrete Structures'
(GRK 1408).}}
\author
{
Wolfgang Mulzer\thanks{Institut f\"ur Informatik,
Freie Universit\"at Berlin,
\{\texttt{mulzer,pseiferth,yannikstein}\}\texttt{@inf.fu-berlin.de}.}
\and
Huy L. Nguy\~{\^{e}}n\thanks{Simons Institute, UC Berkeley
\texttt{hlnguyen@cs.princeton.edu}.}
\and
Paul Seiferth\footnotemark[2]
\and
Yannik Stein\footnotemark[2]
}
\begin{document}
\maketitle

\begin{abstract}
Let $k$ be a nonnegative integer.
In the \emph{approximate $k$-flat nearest neighbor}
($k$-ANN) problem, we are given a set
$P \subset \R^d$ of $n$ points in
$d$-dimensional space and a fixed approximation
factor $c > 1$. Our goal is to
preprocess $P$ so that we can efficiently
answer \emph{approximate $k$-flat nearest
neighbor queries}: given a $k$-flat $F$,
find a point in $P$ whose distance to
$F$ is within a factor  $c$ of the
distance between $F$ and the closest
point in $P$. The case $k = 0$ corresponds to
the well-studied approximate nearest neighbor
problem, for which a plethora of results are known,
both in low and high dimensions.
The case $k = 1$ is called \emph{approximate line
nearest neighbor}. In this case, we are aware of only
one provably efficient data structure, due to
Andoni, Indyk, Krauthgamer,
and Nguy\~{\^{e}}n (AIKN)~\cite{AndoniInKrNg09}.
For $k \geq 2$, we know of no previous results.

We present the first efficient data structure that
can handle approximate nearest neighbor queries for
arbitrary $k$. We use a data structure
for $0$-ANN-queries as a black box, and the performance
depends on the parameters of the $0$-ANN
solution: suppose we have an $0$-ANN structure
with query time $O(n^{\rho})$ and space requirement
$O(n^{1+\sigma})$, for $\rho, \sigma > 0$.
Then we can answer $k$-ANN queries in time
$O(n^{k/(k + 1 - \rho) + t})$ and space
$O(n^{1+\sigma k/(k + 1 - \rho)} + n\log^{O(1/t)} n)$.
Here, $t > 0$ is an arbitrary constant and the $O$-notation
hides exponential factors in $k$, $1/t$, and $c$ and
polynomials in $d$.

Our approach generalizes the techniques of AIKN for
$1$-ANN: we partition $P$ into \emph{clusters} of increasing
radius, and we build a low-dimensional data structure for
a random projection of $P$. Given a query flat $F$, the
query can be answered directly in clusters whose radius is
``small'' compared to $d(F, P)$ using a grid. For the
remaining points, the low dimensional approximation turns out to
be precise enough.
Our new data structures also give an improvement in the space
requirement over the previous result for $1$-ANN: we can achieve
near-linear space and sublinear query time, a further step
towards practical applications where space  constitutes
the bottleneck.
\end{abstract}
\newpage
\setcounter{page}{1}

\section{Introduction}

Nearest neighbor search is a fundamental
problem in computational geometry, with
countless applications in databases,
information retrieval, computer vision,
machine learning, signal processing,
etc.~\cite{Indyk04}.  Given a set
$P \subset \R^d$ of $n$ points in
$d$-dimensional
space, we would like to preprocess $P$
so that for any query point $q \in \R^d$,
we can quickly find the point in $P$
that is closest to $q$.

There are efficient
algorithms if the dimension $d$ is
``small''~\cite{Clarkson88,Meiser93}.
However, as $d$ increases, these
algorithms quickly become inefficient:
either the query time approaches linear
or the space grows exponentially with
$d$. This phenomenon is usually called
the ``curse of dimensionality''.
Nonetheless, if one is satisfied with
just an \emph{approximate} nearest
neighbor whose distance to the query
point $q$ lies within some factor
$c=1+\eps$, $\eps > 0$, of the distance between
$q$ and the actual nearest neighbor,
there are efficient solutions even for high
dimensions. Several methods are known,
offering trade-offs between the
approximation factor, the space
requirement, and the query time (see,
e.g.,~\cite{Andoni09,AndoniInNgRa14}
and the references therein).

From a practical perspective, it is
important to keep both the query time
and the space small. Ideally, we would
like algorithms with almost linear
(or at least sub-quadratic) space
requirement and sub-linear query time.
Fortunately, there are solutions with
these guarantees. These methods
include \emph{locality sensitive hashing}
(LSH)~\cite{IndykMo98,KushilevitzOsRa98}
and a more recent approach that improves
upon LSH~\cite{AndoniInNgRa14}. Specifically,
the latter algorithm achieves query time
$n^{7/(8c^{2})+O(1/c^{3})}$ and space
$n^{1+7/(8c^2)+O(1/c^{3})}$, where
$c$ is the approximation factor.

Often, however, the query object is more
complex than a single point. Here,
the complexity of the problem is much less
understood. Perhaps the simplest such
scenario occurs when the query object is
a $k$-dimensional flat, for some small
constant $k$. This is called the
\emph{approximate $k$-flat nearest neighbor}
problem~\cite{AndoniInKrNg09}. It constitutes
a natural generalization of approximate
nearest neighbors, which corresponds
to $k=0$.  In practice, low-dimensional
flats are used to model data subject
to linear variations.  For example, one
could capture the appearance of a
physical object under different lighting
conditions or under different
viewpoints (see~\cite{BasriHaZe07} and the references therein).

So far, the only known algorithm with
worst-case guarantees is for $k=1$, the
\emph{approximate line nearest neighbor}
problem. For this case, Andoni, Indyk,
Krauthgamer, and Nguy\~{\^{e}}n (AIKN) achieve
sub-linear query time $d^{O(1)} n^{1/2+t}$
and space $d^{O(1)} n^{O(1/\eps^2 + 1/t^2)}$,
for arbitrarily small $t > 0$. For the
``dual'' version of the problem, where
the query is a point but the data set
consists of $k$-flats, three results are
known~\cite{BasriHaZe07,Mahabadi15,Magen07}. The
first algorithm is essentially a
heuristic with some control of the
quality of
approximation~\cite{BasriHaZe07}.
The second algorithm provides
provable guarantees and a very fast
query time of
$(d+\log n + 1/\eps)^{O(1)}$~\cite{Magen07}.
The third result, due to Mahabadi, is very recent
and improves the space requirement of Magen's
result~\cite{Mahabadi15}. Unfortunately, these algorithms
all suffer from very high space requirements, thus limiting
their applicability in practice. In fact, even the basic LSH
approach for $k=0$ is already too expensive for large
datasets and additional theoretical work and heuristics are
required to reduce the memory usage and make LSH suitable
for this setting~\cite{Panigrahy06,LvJoWaChLi07}.
For $k \geq 2$, we know of no results in the
theory literature.

\noindent
\textbf{Our results.}
We present the first efficient data structure
for general approximate $k$-flat nearest
neighbor search. Suppose we have a data structure
for approximate \emph{point} nearest neighbor
search with query time $O(n^{\rho} + d\log n)$ and space
$O(n^{1+\sigma} + d\log n)$, for some constants $\rho, \sigma > 0$.
Then our algorithm achieves
query time $O(d^{O(1)}n^{k/(k + 1 - \rho) + t})$
and space
$O(d^{O(1)} n^{1+\sigma k/(k+1-\rho)} + n\log^{O(1/t)} n)$, where
$t > 0$ can be made arbitrarily small.
The constant factors for the query time depend on
$k$, $c$, and $1/t$. Our main result is as follows.

\begin{theorem}\label{thm:main}
  Fix an integer $k \geq 1$ and an approximation factor $c > 1$.
  Suppose we have a data structure
  for approximate \emph{point} nearest neighbor
  search with query time $O(n^{\rho} + d\log n)$ and space
  $O(n^{1+\sigma} + d \log n)$, for some constants $\rho, \sigma > 0$.
  Let $P \subset \R^d$ be a $d$-dimensional $n$-point set.
  For any parameter $t > 0$, we
  can construct a data structure with
  $O(d^{O(1)} n^{1+ k \sigma /(k+1 - \rho)} + n\log^{O(1/t)} n)$ space
  that can answer the following queries in
  expected time $O(d^{O(1)}n^{k/(k+1-\rho) + t})$:
  given a $k$-flat $F \subset \R^d$, find a point $p \in P$
  with $d(p, F) \leq cd(P, F)$.
\end{theorem}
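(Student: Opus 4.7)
The plan is to generalize the AIKN approach for $1$-ANN to arbitrary $k$. The structure has two complementary components: a \emph{clustering} scheme to deal with dense regions of $P$, and a \emph{random projection} into a low-dimensional space to deal with the remaining, ``spread out'' points.

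First, I would build the clustering. Using the black-box $0$-ANN data structure, repeatedly identify a point $c_i \in P$ whose neighborhood contains many surviving points inside a ball of a chosen radius $r_i$; the cluster $C_i$ collects all such points. After removing $C_i$ from $P$, repeat with a geometrically increasing radius $r_{i+1} \approx 2 r_i$, obtaining a sequence of clusters of strictly increasing radii. For each cluster $C_i$ store a grid of sidelength proportional to $r_i$, so that $k$-flat queries restricted to $C_i$ can be answered by a direct sweep whenever $r_i$ is ``small'' compared to the query distance $d(F,P)$. When $r_i$ is instead large relative to $d(F,P)$, the cluster contributes only few points, and these are handled together with the rest.

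Second, project $P$ into a low-dimensional space $\R^{d'}$ via a Johnson--Lindenstrauss-type random map with $d'$ depending only on $k$, $c$, and $\log n$. A $\eps$-net argument over the (compactly parameterized) family of ``relevant'' $k$-flats shows that for every query flat $F$ and every $p \in P$ whose distance to $F$ is not much larger than $d(F,P)$, the projected distance from $p$ to the image of $F$ is preserved up to a $(1\pm\eps)$ factor. In the low-dimensional image, the problem becomes tractable enough that one can afford a grid/recursion-based $k$-ANN structure. At query time, given $F$: enumerate $O(\log n)$ geometric guesses of $R \approx d(F,P)$; for each guess, inspect every cluster $C_i$ with $r_i \ll R$ by scanning the appropriate cells of its grid, and in parallel issue a query to the low-dimensional data structure. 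Return the best candidate found.

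The main obstacle is balancing the cost of the grid sweeps in the small clusters against the cost of the low-dimensional query. Let $s$ denote the size threshold at which a cluster is considered ``too large'' to sweep directly; then the sweep cost grows with $s$ while the low-dimensional cost shrinks as $s$ increases, because fewer points survive into the projected problem. Optimizing $s$, using that the clustering relies on $\tilde O(n^\rho)$-time calls to the black-box $0$-ANN, is precisely what produces the exponent $k/(k+1-\rho)$ and the space $n^{1+k\sigma/(k+1-\rho)}$; the arithmetic of this trade-off is the central calculation. A secondary delicate point is to prove the net-based projection lemma with a target dimension that is independent of $d$ yet still controls distortion uniformly over all query flats, which is the source of the $\log^{O(1/t)} n$ additive term in the space bound.
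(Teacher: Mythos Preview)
Your high-level picture (clusters for dense parts, random projection for the rest, balance a size parameter to get the exponent $k/(k+1-\rho)$) matches the paper, but two of your concrete mechanisms are wrong in ways that break the argument.

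First, the clusters are not balls around points; they are \emph{$k$-flat clusters}: $m$ points lying within distance $\alpha$ of some $k$-flat $K$. This is essential for both halves of the construction. On the cluster side, since the query $F$ is itself a $k$-flat, one projects the cluster onto $K$ and onto $K^\perp$; in $K$ one builds a $k$-dimensional partition tree (giving $O(m^{1-1/k})$ relevant cells per query), and at each node one attaches a $(d-k)$-dimensional $0$-ANN structure for the $K^\perp$-projections. \emph{This} is where the black box enters, producing the per-cluster query time $m^{1-1/k+\rho/k}$ and space $m^{1+\sigma}$, and hence the exponents $k/(k+1-\rho)$ and $1+k\sigma/(k+1-\rho)$ after setting $m=n^{k/(k+1-\rho)}$. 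Your proposal instead uses the $0$-ANN structure to \emph{find} dense balls and then handles each cluster with ``a grid of sidelength proportional to $r_i$''; a grid in $\R^d$ has exponentially many cells and gives no route to the $m^{1-1/k+\rho/k}$ bound, so the trade-off you describe cannot produce the claimed exponent. On the projection side, the residual point set must be \emph{$k$-flat-cluster-free}: this is exactly what guarantees that, after projecting to constant dimension, the reported set $R$ near the projected flat has size at most $m$. Ball-cluster-freeness does not imply this, so with your clustering the low-dimensional query could return $\Theta(n)$ points.

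Second, the projection is to a fixed dimension $d' = 2/t+2$ depending only on $t$, not on $\log n$, and no $\eps$-net over flats is used; a direct JL-type tail bound for a single point--flat pair plus a union bound over the $n$ points suffices. The $\log^{O(1/t)} n$ factor in the space comes from the low-dimensional $k$-flat near-neighbor structure (a partition-tree/slab recursion of depth $d'-k-1$), not from the target dimension of the random map. Likewise, the distance estimate $\widetilde r \approx d(F,P)$ is obtained by a single $n^t$-approximate query to this projection structure, not by $O(\log n)$ geometric guesses.
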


\begin{center}
\begin{tabular}{|c|c|c|}
  \hline
  Algorithm & $\rho$ & $\sigma$\\
  \hline
  AINR~\cite{AndoniInNgRa14} & $7/8c^2 + O(1/c^3)$&
  $7/8c^2 + O(1/c^3)$ \\
  LSH1~\cite[Theorem~3.2.1]{Andoni09} & $1/c^2$ & $1/c^2$\\
  LSH2~\cite[Theorem~3.4.1]{Andoni09} & $O(1/c^2)$ & $0$\\
  \hline
\end{tabular}
\end{center}
The table above gives an overview of
some approximate point nearest neighbor structures
that can be used in Theorem~\ref{thm:main}.
The result by AINR gives the current best query performance for large
enough values of $c$. For smaller $c$, an approach
using locality sensitive hashing (LSH1) may be preferable.
With another variant of locality sensitive hashing (LSH2),
the space can be made almost linear, at the expense of a slightly
higher query time. The last result (and related practical
results, e.g.,~\cite{LvJoWaChLi07}) is of particular
interest in applications as the memory consumption is a major bottleneck in practice. It also improves
over the previous algorithm by AIKN for line queries.

Along the way towards Theorem~\ref{thm:main},
we present a novel data structure for
$k$-flat \emph{near} neighbor reporting when the dimension $d$ is
constant. The space requirement in this case is
$O_d(n \log^{O(d)} n)$ and the query time is
$O_d(n^{k/(k+1)}\log^{d-k-1} n + |R|)$, where $R$ is the answer set.
We believe that this data structure may be of independent
interest and may lead to further applications.
Our results provide a vast generalization of the result in  AIKN
and shows for the first time
that it is possible to achieve provably efficient
nearest neighbor search for higher-dimensional query objects.

\noindent
\textbf{Our techniques.}
Our general strategy is similar to the approach by AIKN.
The data structure  consists of two  main structures: the
\emph{projection structure} and the \emph{clusters}.
The projection structure works by projecting the point set
to a space of constant dimension and by answering the nearest
neighbor query in that space. As we will see, this suffices
to obtain a rough estimate for the distance, and it can be
used to obtain an exact answer if the point set is ``spread
out''.

Unfortunately, this does not need to be the case.
Therefore, we partition the point set into a sequence
of \emph{clusters}.
A cluster consists of $m$ points and a $k$-flat $K$
such that all points in the cluster are ``close'' to $K$,
where $m$ is a parameter to be optimized.
Using a rough estimate from the projection structure, we can
classify the clusters as \emph{small} and \emph{large}.
The points in the large clusters are spread out and
can be handled through projection. The points in the small
clusters are well behaved and can be handled directly
in high dimensions using grids and discretization.

\noindent
\textbf{Organization.}
In order to provide the curious reader with quick gratification,
we will give the main data structure together with the properties
of the cluster and the projection structure in
Section~\ref{sec:mainDS}.  Considering these structures as
black boxes, this already proves Theorem~\ref{thm:main}.

In the remainder of the paper, we describe the details of the
helper structures. The necessary tools are introduced
in Section~\ref{sec:preliminaries}. Section~\ref{sec:clusterstructure}
gives the approximate nearest neighbor algorithm for small clusters.
In Section~\ref{sec:lowdimstructure}, we consider approximate
near neighbor reporting for $k$-flats in constant dimension.
This data structure is then used for the projection
structures in Section~\ref{sec:projectionstructures}.

\section{Main Data Structure and Algorithm Overview}
\label{sec:mainDS}

We describe our main data structure for approximate
$k$-flat nearest neighbor search. It relies on various substructures
that will be described in the following sections.
Throughout, $P$ denotes a $d$-dimensional $n$-point set,
and $c > 1$ is the desired approximation factor.

Let $K$ be a $k$-flat in $d$ dimensions.
The \textit{flat-cluster} $C$ (or cluster for short)
of $K$ with radius $\alpha$ is
the set of all points with distance at most $\alpha$ to $K$, i.e.,
$C = \{ p \in \R^d \mid d(p,K) \leq \alpha \}$.
A cluster is \emph{full}
if it contains at least $m$ points from $P$, where $m$ is a
parameter to be determined. We call $P$
\textit{$\alpha$-cluster-free} if there is no full cluster with
radius $\alpha$.
Let $t > 0$ be an arbitrarily small parameter.
Our data structure requires the following three subqueries.
\begin{enumerate}[\bfseries Q1:]
 \item
\label{itm:q1}
Given a query flat $F$,
find a point  $p \in P$ with $d(p,F) \leq n^t d(P, F)$.
\item
\label{itm:q2}
Assume $P$ is contained in a flat-cluster with radius $\alpha$.
Given a query flat $F$ with  $d(P,F) \geq \alpha/n^{2t}$,
return a point $p \in P$ with $d(p,F) \leq
cd(P,F)$.
\item
\label{itm:q3}
Assume $P$ is $\alpha n^{2t}/(2k+1)$-cluster free.
Given  a query flat $F$ with
$d(P,F) \leq \alpha$, find the nearest neighbor $p^*\in P$ to $F$.
\end{enumerate}

Briefly, our strategy is as follows: during the preprocessing
phase, we partition the point set into a set of full
clusters of increasing radii. To answer a query $F$, we first perform
a query of type \textbf{Q1} to obtain an $n^t$-approximate
estimate $\widetilde{r}$ for $d(P, F)$. Using $\widetilde{r}$,
we identify the ``small'' clusters. These clusters can be
processed using a query of type \textbf{Q2}.
The remaining point set contains no ``small'' full cluster, so we
can process it with a query of type \textbf{Q3}.

We will now describe the properties of the subqueries and the
organization of the data structure in more detail.
The data structure for \textbf{Q2}-queries is called the
\emph{cluster structure}. It is described in
Section~\ref{sec:clusterstructure}, and it has
the following properties.

\begin{theorem}\label{thm:q2}
  Let $Q$ be a $d$-dimensional $m$-point set
  that is contained in a flat-cluster of radius $\alpha$.
  Let $c > 1$ be an approximation factor.
  Using space
  $O_c(m^{1+\sigma} + d\log^2 m)$, we can build a data
  structure with the following property.
  Given a query $k$-flat $F$ with $d(P, F) \geq \alpha/n^{2t}$
  and an estimate $\widetilde{r}$ with
  $d(P, F) \in [\widetilde{r}/n^t, \widetilde{r}]$,
  we can find
  a $c$-approximate nearest neighbor for $F$ in $Q$ in total
  time
  $O_c((n^{2t}k^2)^{k+1}(m^{1-1/k+\rho/k} + (d/k)\log m))$.
\end{theorem}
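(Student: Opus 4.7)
The approach exploits the near-flatness of $Q$: every point of $Q$ lies within distance $\alpha$ of a single $k$-flat $K$, giving the cluster a low ``effective'' dimension transverse to $K$. The plan is to reduce the $k$-flat query $F$ to a bounded number of approximate \emph{point} nearest neighbor queries, each handled by the given 0-ANN data structure.

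In preprocessing, I would build the 0-ANN structure for $Q$ together with a hierarchical partition induced by the projection $\pi_K(Q)\subset K\cong\R^k$: split $\pi_K(Q)$ along $K$ into groups of size $m^{1/k}$ (so about $m^{1-1/k}$ groups), and organize them in a tree of logarithmic depth. The total space remains $O_c(m^{1+\sigma}+d\log^2 m)$: the 0-ANN pieces sum to at most $O(m^{1+\sigma})$ since 0-ANN space is polynomial in the point count, while the tree and projection basis contribute the extra $\log^2 m$ factor (and a $d$ for storing $K$ and its projection).

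To answer a query flat $F$ with estimate $\widetilde{r}$, identify the \emph{relevant region} $R_F\subseteq F$ consisting of those points of $F$ whose ball of radius $c\widetilde{r}$ can meet $Q$, and cover $R_F$ by a $k$-dimensional grid of cell-side $\Theta(\widetilde{r}/(n^t k))$. For each grid center $v$, a $c'$-approximate point nearest neighbor in $Q$ is, by the triangle inequality and the bound $d(Q,F)\geq\alpha/n^{2t}$, also a $c$-approximate $k$-flat nearest neighbor of $F$, provided $c'$ is taken close enough to $c$ (depending on $c$). A single point query is answered by descending the hierarchy and invoking the 0-ANN only on those leaves whose range in $K$ could contain a relevant candidate: these number at most $O(m^{1-1/k})$, and each triggers one 0-ANN subquery on $m^{1/k}$ points, contributing $m^{1-1/k+\rho/k}+(d/k)\log m$ per grid center. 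Multiplying by the $(n^{2t}k^2)^{k+1}$ grid centers yields the claimed query time, and taking the best candidate returned over all grid cells gives the output.

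The main obstacle is the geometric argument that caps the diameter of $R_F$ and the number of visited leaves by $O(\widetilde{r} n^{2t})$ and $O(m^{1-1/k})$ respectively. This uses the assumption $d(Q,F)\geq\alpha/n^{2t}$ crucially: if $F$ were nearly parallel to $K$ while passing through the cluster slab, $R_F$ could stretch along an entire direction of $K$ and essentially every leaf of the partition could become relevant. The distance lower bound forbids this, forcing $F$ to tilt (or be displaced) enough relative to $K$ that $R_F$ fits inside a bounded piece of $F$ and the hierarchy is traversed efficiently, while keeping all the angular overhead inside the $(n^{2t}k^2)^{k+1}$ factor. Packaging this angular/scale tradeoff cleanly with the cluster geometry -- in particular, proving that the grid resolution $\widetilde{r}/(n^t k)$ is both fine enough for $c$-approximation and coarse enough to keep the cell count under $(n^{2t}k^2)^{k+1}$ -- is the main technical step.
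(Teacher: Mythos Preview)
Your proposal has a genuine gap at precisely the point you flag as the ``main obstacle.'' The assumption $d(Q,F)\geq\alpha/n^{2t}$ does \emph{not} bound the diameter of the relevant region $R_F\subseteq F$. Take $F$ parallel to $K$: then for every $q\in Q$ the nearest point of $F$ to $q$ has the same $K$-component as $q$, so $R_F$ has the same extent along $K$ as $\pi_K(Q)$ itself. That extent is entirely unrelated to $\alpha$ and to $\widetilde r$; it can be arbitrarily large. The distance lower bound is compatible with $F$ being exactly parallel to $K$ (just displace $F$ by $\alpha/n^{2t}$ in $K^{\perp}$), so it cannot ``force $F$ to tilt,'' and your grid on $F$ then has unboundedly many cells. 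The same failure occurs more subtly whenever some direction of $F$ is nearly parallel to $K$, and this is the generic situation.

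The paper's proof resolves this by never attempting to discretize $F$ along directions that are almost parallel to $K$. It computes an SVD of $A'^{T}B'$ to split the basis of $F$ into an $l$-dimensional ``parallel'' part (singular values $\geq\sqrt{1-\eps}$) and a $(k-l)$-dimensional ``transverse'' part. Only the transverse directions are discretized, producing $O((n^{2t}k^{1/2}\eps^{-5/2})^{k-l})$ many $l$-flats; each $l$-flat is then approximated by $O((n^{2t}k^{1/2}\eps^{-2})^{l})$ patches that are \emph{exactly} parallel to $K$. A parallel patch projects to a single point in $K^{\perp}$ and a bounded box in $K$, so the query factors as a partition-tree traversal in $K$ (this is where the $m^{1-1/k}$ comes from, via the crossing-number bound for hyperplanes) combined with a $0$-ANN query in $K^{\perp}$ at each visited node. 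The $0$-ANN structures are accordingly built on projections to $K^{\perp}$, one per partition-tree node, not on $Q\subset\R^d$. Your hierarchical partition of $\pi_K(Q)$ is close in spirit to the partition tree, but without the parallel/transverse decomposition of $F$ you have no mechanism that converts the $k$-flat query into a bounded number of range queries in $K$; a point-grid on $F$ cannot do this job.
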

The data structures for \textbf{Q1} and \textbf{Q3} are very similar,
and we cover them in Section~\ref{sec:projectionstructures}.
They are called \emph{projection structures}, since
they are based on projecting $P$ into a low dimensional
subspace.
In the projected space, we use a
data structure for approximate $k$-flat \textit{near}
neighbor search to be described in
Section~\ref{sec:lowdimstructure}.
The projection structures have the following properties.
\begin{theorem}
 \label{thm:q1}
 Let $P$ be a $d$-dimensional $n$-point set, and let $t > 0$ be
 a small enough constant.
Using space and time $O(n \log^{O(1/t)}n)$, we can obtain
a data structure for the following query:
given a $k$-flat $F$, find a point $p \in P$ with
$d(p, F) \leq n^td(P, F)$.
A query needs $O(n^{k/(k+1)}\log^{O(1/t)} n)$ time, and
the answer is correct with high probability.
\end{theorem}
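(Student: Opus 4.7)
The plan is to combine a random dimension-reducing projection with the low-dimensional near-neighbor reporting structure of Section~\ref{sec:lowdimstructure}. Concretely, I choose a target dimension $d' = \Theta(1/t)$, sample a random orthogonal projection $\Pi : \R^d \to \R^{d'}$ (a Johnson--Lindenstrauss-style Gaussian map), and apply it to each $p \in P$ to obtain $\Pi(P)$. On this projected set I build the constant-dimension structure from Section~\ref{sec:lowdimstructure} in ambient dimension $d'$, at a cost of $O(n \log^{O(d')} n) = O(n \log^{O(1/t)} n)$ space and preprocessing time.

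To answer a query $k$-flat $F$, I project it to $\Pi(F) \subset \R^{d'}$ and run a cascade of near-neighbor reporting queries with geometrically increasing radii $r_i = 2^i$ over $O(\log n)$ scales. I return (the preimage in $P$ of) any projected point realizing the smallest nonempty report. Correctness reduces to a two-sided distortion statement for $\Pi$: with high probability, for every $p \in P$ and every $k$-flat $F \subset \R^d$,
\[
d(\Pi(p), \Pi(F)) \;\leq\; d(p, F) \;\leq\; n^{t}\, d(\Pi(p), \Pi(F)).
\]
Given this, if $p^{*}$ is the true nearest neighbor and $p'$ the point returned, then $d(p', F) \leq n^{t} d(\Pi(p'), \Pi(F)) \leq n^{t} d(\Pi(p^{*}), \Pi(F)) \leq n^{t} d(p^{*}, F)$, as required.

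The hard part will be the anti-contraction half of the distortion bound, uniformly over the adversarial choice of $F$. Point-to-point Johnson--Lindenstrauss is not enough, because a $k$-flat carries infinitely many directions. The key observation is that $d(p, F)$ depends only on the component of $p - q$ (for any fixed $q \in F$) in the $(d-k)$-dimensional orthogonal complement of the direction subspace of $F$; hence I only need to control how $\Pi$ distorts this one component, parametrized by a $(d-k)$-subspace together with a base point. I plan to build a $\delta$-net over the relevant Grassmannian and affine parameters with granularity polynomial in $n$ and $1/t$, apply sub-Gaussian concentration of $\|\Pi v\|$ to each net element paired with each $p \in P$, and then union-bound. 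The generous $n^{t}$ slack keeps the net small enough that $d' = \Theta(1/t)$ drives the failure probability below $1/\poly(n)$.

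For the running time, each scale invokes the low-dimensional structure in $O(n^{k/(k+1)} \log^{d'-k-1} n + |R_i|)$ time; an output-sensitive argument across the $O(\log n)$ scales (stop as soon as a scale is nonempty, and charge reported points to a single relevant scale) bounds the sum of the $|R_i|$'s by $O(n^{k/(k+1)} \log n)$. The total query time is therefore $O(n^{k/(k+1)} \log^{O(1/t)} n)$, matching the claim, while the space and preprocessing bounds carry over directly from the low-dimensional structure.
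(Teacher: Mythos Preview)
Your distortion claim---that with high probability $d(\Pi(p),\Pi(F)) \leq d(p,F) \leq n^t\, d(\Pi(p),\Pi(F))$ holds for every $p\in P$ and \emph{every} $k$-flat $F$---is simply false, and the net argument you sketch cannot rescue it. For any fixed linear map $\Pi:\R^d\to\R^{d'}$ with $d'<d$ there are points $q\neq p$ with $\Pi(q)=\Pi(p)$; any flat $F$ through such a $q$ (but not through $p$) has $d(\Pi(p),\Pi(F))=0$ while $d(p,F)>0$. So uniform anti-contraction over all flats is impossible. Your proposed fix, a $\delta$-net over the Grassmannian of direction subspaces together with affine offsets, has cardinality $(C/\delta)^{\Theta(k(d-k))}$, i.e., exponential in the ambient dimension $d$. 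The concentration you get from projecting into only $d'=\Theta(1/t)$ dimensions yields per-event failure probability at best $\beta^{-\Theta(1/t)}$, nowhere near small enough to union-bound over a net of that size.

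The paper avoids this entirely: it never asks for uniformity over all flats. It fixes the (oblivious) query flat $F$, proves a point-to-flat Johnson--Lindenstrauss lemma (Lemma~\ref{lem:jl-kflat}) for a \emph{single} pair $(p,F)$, and union-bounds only over the $n$ data points. The anti-contraction half of that lemma is the real work: rather than netting over subspaces, one places a lattice \emph{inside the fixed flat $F$} with spacing $\|p^\perp\|/\beta^2$, bounds the probability that any lattice point lands within $3\|p^\perp\|/\beta$ of $\Pi(p)$ via a layered union bound, and separately controls the images of the lattice basis vectors so that the projected lattice still fills $\Pi(F)$ densely enough. This keeps the union bound polynomial in $\beta$ and exponential only in $k$, which $d'=\Theta(1/t)$ concentration can handle. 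A secondary issue: your cascade of $O(\log n)$ geometric scales presupposes a bounded aspect ratio, which you do not have; the paper instead invokes Corollary~\ref{cor:lowdimapprox}, which converts near-neighbor reporting to approximate nearest neighbor via a single random sample, sidestepping the scale search altogether.
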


\begin{theorem}
 \label{thm:q3}
 Let $P$ be a $d$-dimensional $n$-point set, and let $t > 0$ be
 a small enough constant.
Using space and time $O(n \log^{O(1/t)}n)$, we can obtain
a data structure for the following query:
given a $k$-flat $F$ and $\alpha > 0$ such that
$d(F, P) \leq \alpha$ and such that $P$ is
$\alpha n^t/(2k+1)$-cluster-free, find an exact nearest neighbor for
$F$ in $P$. A query needs $O(n^{k/(k+1)}\log^{O(1/t)}n + m)$
time, and the answer is correct with high probability.
Here, $m$ denotes the size of a full cluster.
\end{theorem}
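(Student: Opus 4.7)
My plan is to mirror the strategy that handles Q1 but exploit the stronger cluster-free hypothesis to sharpen ``approximate'' into ``exact''. During preprocessing I would sample a random linear map $\pi:\R^d\to\R^{d'}$ with $d'=\Theta(1/t)$ (a scaled Gaussian matrix will do) and build the constant-dimensional $k$-flat near-neighbor reporting structure of Section~\ref{sec:lowdimstructure} on the image $\pi(P)\subset\R^{d'}$. Since $d'$ is constant once $t$ is fixed, the structure uses $O(n\log^{O(1/t)} n)$ space and preprocessing time, matching the claim.

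To answer a query $(F,\alpha)$ I would project $F$ to a $k$-flat $\pi(F)\subset\R^{d'}$ and invoke the reporting routine with radius $r:=\alpha\,U$, where $U$ is a high-probability upper bound on the multiplicative distortion of point-to-$k$-flat distances under $\pi$. For every returned candidate $\pi(p)$ I would compute the true distance $d(p,F)$ in $\R^d$ and output the minimizer. Correctness rests on two high-probability events. First, the true nearest neighbor $p^*$ satisfies $d(p^*,F)\le\alpha$, hence $d(\pi(p^*),\pi(F))\le r$ and $p^*$ is reported. Second, every reported candidate $p$ satisfies $d(p,F)\le r/L$, where $L$ is a high-probability lower bound on the same distortion; choosing the projection parameters so that $U/L\le n^t/(2k+1)$ places every candidate within $\alpha n^t/(2k+1)$ of $F$. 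The $\alpha n^t/(2k+1)$-cluster-free hypothesis then forces the candidate set to have size at most $m-1$, which yields the $+m$ term in the query time, while the low-dimensional structure contributes $O(n^{k/(k+1)}\log^{d'-k-1} n)=O(n^{k/(k+1)}\log^{O(1/t)} n)$.

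The main obstacle will be engineering a projection with the two-sided $k$-flat distortion bound $U/L\le n^t/(2k+1)$ in only $d'=O(1/t)$ target dimensions. Johnson--Lindenstrauss is too expensive because it insists on $(1+\eps)$ distortion and thus on $d'=\Theta(\log n)$; here we need only polynomial distortion $n^{O(t)}$, so the correct tool is a direct concentration analysis of scaled Gaussian images, where each coordinate of $\pi(p)$ has squared norm with upper and lower tails of order $n^{\pm O(t)}$ when $d'=\Theta(1/t)$. The step from pointwise to $k$-flat-wise preservation is handled by covering the unit ball in $\pi(F)$'s direction space with an $O(k)$-size net and union-bounding over $P$; the factor $2k+1$ in the cluster-free hypothesis is exactly the slack needed to absorb this net argument, and it is what ties the preprocessing back to the Q3 input specification. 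Once these projection guarantees are in place, the remaining arguments are routine invocations of Section~\ref{sec:lowdimstructure}.
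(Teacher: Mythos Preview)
Your overall plan matches the paper's: project to $d'=\Theta(1/t)$ dimensions, invoke the reporting structure of Section~\ref{sec:lowdimstructure} on the image, and filter the returned candidates by their true distances in $\R^d$. Two corrections are in order, though. First, the factor $2k+1$ in the cluster-free hypothesis is \emph{not} slack reserved for your projection net; it is inherited from the preprocessing phase via Lemma~\ref{lem:pert}, where only $k$-flats spanned by $k{+}1$ input points are tested when forming clusters. In the projection analysis this constant is simply absorbed into the distortion parameter together with the reporting structure's own approximation factor $\kappa$ from Theorem~\ref{thm:lowdimstructure}, which you omitted: a reported $\bar p$ may satisfy $d(\bar p,\bar F)\le\kappa r$, not merely $\le r$, so the inequality you actually need is $\kappa\,U/L\le n^t/(2k+1)$. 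Second, an ``$O(k)$-size net'' on the flat's direction space cannot control $d(\pi(p),\pi(F))=\min_{y\in F}\|\pi(p)-\pi(y)\|$, because the minimizing $y$ may drift arbitrarily far from the original foot point $p^\parallel$, and no finite net on directions bounds distances at all scales simultaneously. The paper handles this via Lemma~\ref{lem:jl-kflat}, which lays down a \emph{lattice} in $F$, union-bounds the lower-tail event over all (infinitely many) lattice points using a layered geometric sum, and separately bounds the expansion of the $k$ lattice basis vectors to control interstitial points. That lemma is the one nontrivial ingredient you should plan on proving.
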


\subsection{Constructing the Data Structure}

First, we build a projection structure for \textbf{Q1} queries
on $P$. This needs $O(n\log^{O(1/t)}n)$ space,
by Theorem~\ref{thm:q1}.
Then, we repeatedly find the full flat-cluster $C$ with smallest
radius.
The $m$ points in $C$ are removed from $P$, and we build
a cluster structure for \textbf{Q2} queries on this set.
By Theorem~\ref{thm:q2}, this needs
$O_c(m^{1 + \sigma} + d\log^2 m)$ space.
To find $C$, we check all flats $K$ spanned
by $k+1$ distinct points of $P$. In Lemma~\ref{lem:pert} below,
we prove that this provides a good enough approximation.
In the end, we have $n/m$ point sets
$Q_1,\dots,Q_{n/m}$ ordered by decreasing radius,
i.e., the cluster for $Q_1$ has the largest radius.
The total space occupied  by the cluster structures is
$O(nm^{\sigma} + (n/m)d\log^2 n)$.

Finally, we build a perfect binary tree $T$ with $n/m$ leaves
labeled $Q_1,\dots,Q_{n/m}$, from left to right.
For a node $v \in T$
let $Q_v$ be the union of all $Q_i$ assigned to leaves below $v$.
For each $v \in T$ we build
a data structure for $Q_v$ to answer \textbf{Q3} queries.
Since each point is contained in $O(\log n)$ data structures,
the total size is $O(n\log^{O(1/t)} n)$, by
Theorem~\ref{thm:q3}.
For pseudocode, see Algorithm~\ref{alg:high-dim-preprocessing}.

\begin{alg}
\DontPrintSemicolon
\SetKw{DownTo}{downto}
\SetKwInOut{Input}{Input}
\Input{point set $P\subset \R^d$, approximation factor $c$,
parameter $t > 0$}
$Q \leftarrow P$\;
\For{$i\gets n/m$ \DownTo $1$}{
For each $V \in \binom{Q}{k+1}$, consider
the $k$-flat $K_V$ defined by $V$. Let $\alpha_{V}$ be the
radius of the smallest flat-cluster of $K_V$ with
exactly $m$ points of $Q$. \;
Choose the flat $K = K_V$ that minimizes
$\alpha_V$ and set $\alpha_i = \alpha_V$. \;
Remove from $Q$ the set $Q_i$ of $m$ points in
$Q$ within distance $\alpha_i$ from $K$.\;
Construct a cluster structure $C_i$ for the cluster
$(K,Q_i)$.\;
}
Build a perfect binary tree $T$ with $n/m$ leaves,
labeled $Q_1,\dots,Q_{n/m}$ from left to right.\;
\ForEach{node $v \in T$}{
  Build data structure for \textbf{Q\ref{itm:q3}}  queries as in
  Theorem~\ref{thm:q3}
  for the set $Q_v$ corresponding to the leaves below $v$.
}
\caption{Preprocessing algorithm. Compared
  with AIKN~\cite{AndoniInKrNg09}, we
  organize the  projection structure in a tree to save space.}
\label{alg:high-dim-preprocessing}
\end{alg}

\subsection{Performing a Query}
Suppose we are given a $k$-flat $F$.
To find an approximate nearest neighbor
for $F$ we proceed similarly as AIKN~\cite{AndoniInKrNg09}. We use
\textbf{Q2} queries on
``small'' clusters  and \textbf{Q3} queries on the remaining points;
for pseudocode, see Algorithm~\ref{alg:high-dim-query}.

\begin{alg}
  \DontPrintSemicolon
\SetKwInOut{Input}{Input}
\SetKwInOut{Output}{Output}
  \Input{query flat $F$}
  \Output{a $c$-approximate nearest neighbor for $F$ in $P$}
  Query the root of $T$ for a
  $n^t$-approximate nearest neighbor $p_1$ to $F$.
  \tcc*{type Q\ref{itm:q1}}
  $\widetilde{r} \gets d(p_1, F)$\;
  $i^*\gets$ maximum
  $i\in \{1, \dots, n/m\}$ with $\alpha_{i}> \widetilde{r} n^t$,
  or $0$
  if no such $i$ exists\;

  \For{$i \gets i^*+1$ \KwTo $n/m$}{
    \tcc{type Q\ref{itm:q2}; we have
      $d(Q_i, F) \geq \widetilde{r}/n^t \geq \alpha_i/n^{2t}$}
    Query cluster structure $C_i$ with estimate $\widetilde{r}$.
  }
 \tcc{type Q\ref{itm:q3}}
  Query projection structure for a
  $\widetilde{r}$-thresholded nearest neighbor of $F$ in
  $Q = \bigcup_{i = 1}^{j^*} U_i$.
  \Return closest point to $F$ among query results.
 \caption{Algorithm for finding approximate nearest neighbor in
   high dimensions.}
 \label{alg:high-dim-query}
\end{alg}

First, we perform a query of type \textbf{Q1}
to get a $n^t$-approximate nearest neighbor $p_1$ for $F$ in time
$O(n^{k/(k+1)}\log^{O(1/t)}n)$. Let $\widetilde{r} = d(p_1, F)$.
We use $\widetilde{r}$ as an estimate to distinguish between
``small'' and
``large'' clusters. Let $i^* \in \{1,\dots,n/m\}$
be the largest integer such that the cluster assigned with $Q_{i^*}$
has radius $\alpha_{i^*} > \widetilde{r} n^t$.
For $i = i^* + 1, \dots,  n/m$,
we use $\widetilde{r}$ as an estimate for a \textbf{Q2} query
on $Q_i$. Since
$|Q_i| = m$ and by Theorem~\ref{thm:q2},
this needs
total time
$O(n^{2t(k + 1) + 1}m^{-1/k + \rho/k} + (n/m)d\log^2 m)$.

It remains to deal with points in ``large'' clusters.
The goal is to perform a type \textbf{Q3}
query on $\bigcup_{1 \leq i \leq i^*} Q_i$. For this, we start at the
leaf of $T$ labeled $Q_{i^*}$ and walk up to the root.
Each time we encounter a new node $v$ from its right child,
we perform a \textbf{Q3} query on $Q_u$, where $u$ denotes the
left child of $v$.
Let $L$ be all the left children we find in this way. Then clearly
we have $|L| = O(\log n)$  and
$\bigcup_{u \in L} Q_u = \bigcup_{1 \leq i \leq i^*} Q_i$.
Moreover, by construction, there is no full cluster with radius
less than
$\widetilde{r} n^t$ defined by $k+1$ vertices of $Q_u$
for any $u \in L$.
We will see that this implies every $Q_u$ to be
$\widetilde{r} n^t/(2k+1)$-cluster-free, so
Theorem~\ref{thm:q3} guarantees a total query
time of $O(n^{k/(k+1)}\log^{O(1/t)} n + m)$ for this step.
Among all the points we obtained during the queries,
we return the one that is closest to $F$.
A good trade-off point is achieved for $m = nm^{-1/k+\rho/k}$,
i.e., for $m = n^{k/(k+1-\rho)}$.
This gives the bounds claimed in Theorem~\ref{thm:main}.

\noindent
\textbf{Correctness.}
Let $p^*$ be a point with $d(p^*, F) = d(P,F)$.
First, suppose that $p^* \in Q_i$,
for some $i > i^*$. Then, we have
$d(p^*,F) \geq \widetilde{r}/n^t \geq \alpha_i/n^{2t}$, where $\alpha_i$
is the radius of the cluster assigned to $Q_i$.
Since $\widetilde{r}$ is a valid $n^t$-approximate
estimate for $d(F, Q_i)$,
a query of type \textbf{Q2} on $Q_i$ gives a $c$-approximate nearest
neighbor, by Theorem~\ref{thm:q2}.
Now, suppose that $p^* \in Q_i$ for $1 \leq i \leq i^*$.
Let $u$ be the node of $L$
with $p^* \in Q_u$. Then Theorem~\ref{thm:q3} guarantees
that we will find
$p^*$ when doing a \textbf{Q3} query on $Q_u$.

\section{Preliminaries}
\label{sec:preliminaries}

\noindent
\textbf{Partition Trees.}
Fix an integer constant $r > 0$, and
let $P \subset \R^d$ be a $d$-dimensional $n$-point set.
A \emph{simplicial $r$-partition} $\Xi$ for $P$ is a sequence
$\Xi = (P_1, \Delta_1), \dots, (P_m, \Delta_m)$ of pairs
such that (i) the sets $P_1, \dots, P_m$ form a partition of
$P$ with $n/r \leq |P_i| \leq \lceil 2n/r \rceil$, for
$i = 1, \dots, m$;
(ii) each $\Delta_i$ is a relatively open simplex with
$P_i \subset \Delta_i$, for $i = 1, \dots, m$; and
(iii) every hyperplane $h$ in $\R^d$ \emph{crosses} $O(r^{1-1/d})$
  simplices $\Delta_i$ in $\Xi$. Here, a hyperplane $h$ \emph{crosses}
  a simplex $\Delta$ if $h$ intersects $\Delta$, but does not
  contain it.
In a classic result, Matou\v{s}ek showed that such a simplicial
partition always exists and that it can be computed
efficiently~\cite{Matousek92,Chazelle00}.

\begin{theorem}[Partition theorem, Theorem~3.1 and Lemma~3.4
  in \cite{Matousek92}]\label{thm:partition}
  For any $d$-dimensional $n$-point set $P \subset \R^d$ and for
  any constant $1 \leq r \leq n/2$, there exists a simplicial
  $r$-partition for $P$. Furthermore, if $r$ is bounded by
  a constant, such a partition can be found in time $O(n)$.\qed
\end{theorem}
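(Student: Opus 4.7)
The plan is to follow Matou\v{s}ek's iterative weighted-extraction strategy. Starting with $Q = P$, we peel off groups of between $n/r$ and $2n/r$ points one at a time, each group enclosed in a simplex chosen to be ``light'' with respect to an evolving weight function on a test set $H$ of hyperplanes. After $m = \Theta(r)$ iterations, $Q$ is exhausted and we output the pairs $(P_i, \Delta_i)$.

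For the test set, I would take $H$ to be the collection of all hyperplanes spanned by $d$-tuples of $P$, so $|H| = O(n^d)$, each initially of unit weight. The technical core of the proof is the existence, at every iteration, of a simplex $\Delta$ containing $\Theta(|Q|/r)$ points of $Q$ whose total crossing weight on $H$ is at most $O(w(H)/r^{1/d})$. Such a simplex is produced by a ham-sandwich/selection-lemma argument: one generates a rich family of candidate simplices of the right size (for instance via a constant-size random sample of $Q$ together with enumeration over $(d+1)$-tuples), and then uses averaging over this family to extract a single simplex whose weighted crossing count is at most the average. The construction proceeds as a multiplicative-weights loop: after extracting $\Delta_i$, double the weight of every $h \in H$ that it crosses. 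A standard potential argument then shows that the total weight grows by at most a factor $\exp(O(r^{1-1/d}))$ over all rounds, so a hyperplane crossed $c_h$ times satisfies $2^{c_h} \le |H| \cdot \exp(O(r^{1-1/d}))$, giving $c_h = O(r^{1-1/d})$ for every $h \in H$. To pass from the test set to arbitrary hyperplanes, one observes that the crossings of a generic hyperplane with a fixed simplex are determined by its sign pattern on the $d+1$ vertices of the simplex, so any hyperplane can be ``charged'' to a member of $H$ with the same crossing behavior up to an $O(1)$ factor.

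For the $O(n)$-time construction when $r$ is a constant, there are only $O(1)$ rounds, so the bound reduces to showing that each round can be implemented in $O(n)$ time. A light simplex is found by drawing a constant-size sample, enumerating the $O(1)$ candidate simplices it spans, estimating the weighted crossing count of each candidate against $Q$ in $O(n)$ time via a linear scan, and keeping the best one; weight updates reduce to scanning the $O(1)$ hyperplanes actually crossed. The main obstacle is the central existence lemma: simultaneously guaranteeing that the extracted simplex is large enough to absorb $\Omega(n/r)$ remaining points and light enough to keep the multiplicative-weights potential under control. This delicate balance, which drives the exponent $1 - 1/d$, is where the ham-sandwich / selection arguments enter and constitutes the technical heart of Matou\v{s}ek's proof.
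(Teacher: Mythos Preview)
The paper does not prove this theorem at all; it is quoted as a black-box citation from Matou\v{s}ek~\cite{Matousek92} (note the terminal \qedsymbol\ in the statement). There is therefore no ``paper's own proof'' to compare against --- any proof here would go beyond what the paper supplies.

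As a sketch of Matou\v{s}ek's original argument, your outline has the right architecture (iterative extraction with multiplicative reweighting of test hyperplanes, followed by a potential bound), but two of the technical pivots are misidentified. First, the light simplex in each round is not obtained by a ham-sandwich or selection-lemma argument; Matou\v{s}ek uses the \emph{cutting lemma}: for the current weighted hyperplane set one builds a $(1/t)$-cutting of size $O(t^d)$, so some cell holds at least $|Q|/O(t^d)$ points while being crossed by at most $w(H)/t$ of the weight, and $t$ is chosen to balance these two quantities. Second, for the $O(n)$-time construction (Lemma~3.4 in~\cite{Matousek92}), one does not scan candidate simplices against $Q$; instead one first replaces $P$ and the test set $H$ by constant-size $\varepsilon$-approximations, computes the partition on those in $O(1)$ time, and only then distributes the actual points of $P$ among the resulting simplices by a single linear pass. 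Your random-sampling idea is plausible but would give only a randomized guarantee and is not what the cited reference does.
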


Through repeated application of Theorem~\ref{thm:partition},
one can construct a \emph{partition tree} for $P$.
A partition tree $\mathcal{T}$ is a rooted tree in which each node
is associated with a pair $(Q, \Delta)$, such that $Q$ is a subset
of $P$ and $\Delta$ is a relatively open simplex that
contains $Q$. If $|Q| \geq 2r$, the children of $(Q, \Delta)$
constitute a simplicial $r$-partition of $Q$. Otherwise,
the node $(Q, \Delta)$ has $|Q|$ children where each child
corresponds to a point in $Q$. A partition tree has
constant degree, linear size, and logarithmic depth.

Given a hyperplane $h$, there is a straightforward query
algorithm to find the highest nodes in $\mathcal{T}$ whose
associated simplex does not cross $h$: start at the root and
recurse on all children whose associated simplex crosses $h$;
repeat until there are no more crossings or until a leaf is reached.
The children of the traversed nodes whose simplices do not cross
$h$ constitute the desired answer. A direct application of
Theorem~\ref{thm:partition} yields a partition tree for which this
query takes time $O(n^{1-1/d+\gamma})$, where $\gamma > 0$ is
a constant that can be made arbitrarily small by increasing $r$.
In 2012, Chan~\cite{Chan12} described a more global
construction that eliminates the $n^\gamma$ factor.

\begin{theorem}[Optimal Partition
   Trees~\cite{Chan12}]\label{thm:optpartition}
  For any $d$-dimensional $n$-point set $P \subset \R^d$,
  and for any large enough constant $r$,
  there is a \emph{partition tree} $\mathcal{T}$ with the following
  properties:
   (i) the tree $\mathcal{T}$ has degree $O(r)$ and depth $\log_r n$;
    (ii) each node is of the form $(Q, \Delta)$,
     where $Q$ is a subset of $P$ and $\Delta$ a relatively open
     simplex that contains $Q$; (iii) for each node $(Q, \Delta)$,
     the simplices of the children of $Q$ are contained in $\Delta$
     and are pairwise disjoint; (iv) the point set associated
     with a node of depth $\ell$ has size at most $n/r^\ell$;
     (v) for any hyperplane $h$ in $\R^d$, the number
     $m_\ell$ of simplices in $\mathcal{T}$  that
     $h$ intersects at level $\ell$ obeys the recurrence
     \[
       m_\ell = O(r^{\ell(d-1)/d} + r^{\ell(d-2)/(d-1)}m_{\ell - 1}
         + r\ell \log r \log n).
     \]
   Thus, $h$ intersects $O(n^{1-1/d})$ simplices in total.
 The tree $\mathcal{T}$ can be build in expected time $O(n\log n)$.
\end{theorem}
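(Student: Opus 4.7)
The plan is to build the tree top-down by recursive applications of Theorem~\ref{thm:partition}, derive the structural properties directly from the construction, and then establish the crossing recurrence by a carefully coupled global argument.

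First I would construct the tree starting from the root $(P, \Delta_0)$ for a bounding simplex $\Delta_0$ of $P$. At each internal node $(Q, \Delta)$ with $|Q|$ larger than a constant, I would apply Theorem~\ref{thm:partition} with the given $r$ to obtain a simplicial $r$-partition of $Q$ and install the resulting pairs as children. The degree $O(r)$ and depth $\log_r n$ then follow because each partition has $O(r)$ cells and each subset shrinks by a factor of at least $r/2$ per level; containment and pairwise-disjointness of children inside the parent simplex come directly from the construction. The construction time $O(n \log n)$ falls out because Theorem~\ref{thm:partition} runs in linear time at each node, the total work telescopes to $O(n)$ per level, and there are $O(\log n)$ levels.

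For the crossing recurrence (v), I would first argue by induction that every level-$\ell$ simplex crossed by a hyperplane $h$ has an ancestor at every earlier level also crossed by $h$. Applying Theorem~\ref{thm:partition} naively at every internal node gives a child-crossing bound of $r^{1-1/d+\gamma}$ per crossed parent, which telescopes to $n^{1-1/d+\gamma}$ crossings overall; the entire point of the optimal construction is to eliminate the $\gamma$ slack. The two main terms in the recurrence then correspond to two charging regimes: the additive term $r^{\ell(d-1)/d}$ is the best bound achievable by a ``flat'' $r^\ell$-cell partition of $P$ at the scale of level $\ell$, while the multiplicative term $r^{\ell(d-2)/(d-1)} m_{\ell-1}$ bounds the amplification of each already-crossed ancestor using an effectively $(d-1)$-dimensional partition bound on how $h$ meets that ancestor.

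The main obstacle is realizing these two terms simultaneously, which requires coupling the partitions across an entire level rather than constructing them independently at each node. Following Chan, I would use a global test-set / random-sample argument: the level-$\ell$ refinement is chosen so that a small random sample of hyperplanes already satisfies the target crossing bound $r^{\ell(d-1)/d}$, and a Clarkson--Shor style transfer extends the bound to every hyperplane, at the cost of the additive $O(r\ell\log r\log n)$ term for test-set construction and maintenance. Unrolling the recurrence, the additive term dominates at the bottom level, and summing over levels yields $\sum_\ell m_\ell = O(n^{1-1/d})$, matching the claimed bound. The technically delicate part is precisely this global coupling: a purely node-local construction cannot avoid the slack factor, so the sampling, test-set design, and the amortization across all nodes of a level are where I expect the most care to be needed.
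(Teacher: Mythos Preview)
The paper does not prove this theorem. Theorem~\ref{thm:optpartition} appears in the Preliminaries section as a cited black-box result from Chan~\cite{Chan12}; the paper states it and uses it, but gives no proof. So there is nothing in the paper to compare your proposal against.

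For what it is worth, your sketch is a reasonable high-level summary of the ideas behind Chan's construction (global coupling of partitions across a level via test sets and Clarkson--Shor-type sampling, rather than purely local recursion), but the details you flag as ``technically delicate'' are exactly where the real work lies, and your outline does not actually carry them out. If you were asked to supply a proof here, the correct response is simply to cite~\cite{Chan12}, as the paper does.
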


\noindent
\textbf{$k$-flat Discretization.}
For our cluster structure we must find $k$-flats that are
close to many points. The following lemma shows that it
suffices to check ``few'' $k$-flats for this.
\begin{lemma}\label{lem:pert}
  Let $P \subset \R^d$ be a finite point set with $|P| \geq k+1$,
  and let $F \subset \R^d$ be a $k$-flat. There is a $k$-flat $F'$
  such that $F'$ is the affine hull of $k + 1$ points in $P$ and
  $\delta_{F'}(P) \leq (2k + 1) \delta_{F}(P)$,
  where $\delta_{F'}(P) = \max_{p \in P}\dist(p, F')$ and
  $\delta_{F}(P) = \max_{p\in P}\dist(p, F)$.
\end{lemma}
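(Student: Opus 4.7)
The plan is to pick $p_0,\ldots,p_k\in P$ so that the orthogonal projections $\bar{p}_0,\ldots,\bar{p}_k$ onto $F$ form a $k$-simplex of maximum $k$-volume in $F$, take $F' := \aff(p_0,\ldots,p_k)$, and then estimate $\dist(q,F')$ for each $q\in P$ via barycentric coordinates with respect to this simplex.

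Let $\delta := \delta_F(P)$, and for $p\in P$ write $\bar{p}$ for its orthogonal projection onto $F$, so $|p-\bar{p}|\leq\delta$. Having picked the $p_i$ as above, I observe that if the chosen simplex is full-dimensional in $F$ (the degenerate case is discussed at the end), then affine independence of the $\bar{p}_i$ implies affine independence of the $p_i$, so $F'$ is genuinely a $k$-flat.

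Now fix $q\in P$. Since $\bar{q}\in F=\aff(\bar{p}_0,\ldots,\bar{p}_k)$, there exist unique coefficients $\lambda_i$ with $\bar{q}=\sum_i\lambda_i\bar{p}_i$ and $\sum_i\lambda_i=1$. The crucial consequence of the maximum-volume choice is that $|\lambda_i|\leq 1$ for every $i$: by multilinearity of simplex volume, swapping $\bar{p}_i$ with $\bar{q}$ rescales the $k$-volume of the simplex by exactly $|\lambda_i|$, and maximality forces this to be at most $1$. The affine combination $p^\star := \sum_i\lambda_i p_i$ lies in $F'$, and using $\bar{q}-\sum_i\lambda_i\bar{p}_i=0$ we get
\[
q - p^\star \;=\; (q-\bar{q}) \;-\; \sum_i \lambda_i (p_i-\bar{p}_i).
\]
Each summand on the right has norm at most $\delta$, so the triangle inequality gives $\dist(q,F')\leq|q-p^\star|\leq\delta\bigl(1+\sum_i|\lambda_i|\bigr)$.

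It remains to bound $\sum_i|\lambda_i|$. Writing $S^-:=\sum_{\lambda_i<0}|\lambda_i|$, the identity $\sum_i\lambda_i=1$ gives $\sum_i|\lambda_i|=1+2S^-$. A short case analysis using $|\lambda_i|\leq 1$ together with $\sum_i\lambda_i=1$ (which bounds both the number of negative coordinates and the total positive mass) yields $S^-\leq k/2$, hence $\sum_i|\lambda_i|\leq k+1$ and $\dist(q,F')\leq(k+2)\delta\leq(2k+1)\delta$ for $k\geq 1$. The main technical subtlety is precisely this uniform bound on the barycentric coefficients: a naive greedy choice of the $\bar{p}_i$ (e.g., picking each new vertex farthest from the affine hull of the previously chosen ones) can let the coefficients grow like $2^k$, producing only an exponential bound; maximizing the simplex volume globally is what keeps all $|\lambda_i|\leq 1$ simultaneously and yields the linear constant the lemma needs. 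Finally, in the degenerate case where $\bar{P}$ spans only a proper subflat $G\subsetneq F$, one has $\delta_G(P)=\delta$, so the above argument runs inside $G$ to produce a lower-dimensional affine hull, which can then be padded to a $k$-flat by adjoining further affinely independent points of $P$ without increasing the distance bound.
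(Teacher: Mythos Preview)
Your argument is correct and follows essentially the same approach as the paper: project $P$ onto $F$, choose $k+1$ points whose projections maximize a volume, and use the resulting bound $|\lambda_i|\le 1$ on the barycentric coefficients to control $\dist(q,F')$. The one difference is that the paper fixes $p_0$ separately (as the point with minimum first coordinate) and maximizes only over $p_1,\dots,p_k$; this gives $|\mu_i|\le 1$ for $i\ge 1$ and the extra fact that some $\mu_j\ge 0$, from which they extract $\sum_{i\ge 1}|\mu_i|+|1-\sum\mu_i|\le 2k$ and hence the constant $2k+1$. By maximizing the full simplex volume over all $k+1$ vertices, you get $|\lambda_i|\le 1$ for every $i$, and your observation $S^{-}\le\min(m,k-m)\le k/2$ then yields $\sum_i|\lambda_i|\le k+1$ and the sharper constant $k+2$; this is a genuine (if minor) improvement over the paper's bound.
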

\begin{proof}
  This proof generalizes the proof of Lemma~2.3 by
  AIKN~\cite{AndoniInKrNg09}.

  Let $Q$ be the orthogonal projection of $P$ onto $F$.
  We may assume that $F$ is the affine hull of $Q$, since otherwise
  we could replace $F$ by $\text{aff}(Q)$ without affecting
  $\delta_F(P)$. We choose an orthonormal basis for $\R^d$ such that
  $F$ is the linear subspace spanned by the first $k$ coordinates.
  An affine basis for $F'$ is constructed as follows: first,
  take a point $p_0 \in P$ whose $x_1$-coordinate is minimum. Let
  $q_0$ be the projection of $p_0$ onto $F$, and translate the
  coordinate system such that $q_0$ is the origin.
  Next, choose $k$ additional points $p_1, \dots ,p_{k} \in P$ such
  that $|\det(q_1, \dots, q_k)|$ is maximum, where $q_i$ is
  the projection of $p_i$ onto $F$, for $i = 1, \dots, k$. That is,
  we choose $k$ additional points such that the volume of the
  $k$-dimensional parallelogram spanned by their projections
  onto $F$ is maximized. The set $\{q_1, \dots, q_k\}$ is a basis for
  $F$, since the maximum determinant cannot be $0$ by our assumption
  that $F$ is spanned by $Q$.

  Now fix some point $p \in P$ and let $q$ be its projection onto
  $F$. We write $q = \sum_{i=1}^{k} \mu_i q_i$. Then, the point
  $r = \sum_{i=1}^{k} \mu_i p_i + (1-\sum_{i=1}^{k} \mu_i) p_0$
  lies in $F'$. By the triangle inequality, we have
  \begin{align}\label{equ:flat_triang}
    \dist(p, r) \leq \dist(p, q) + \dist(q, r) \leq \delta_F(P) +
      \dist(q, r).
  \end{align}
  To upper-bound $\dist(q, r)$ we first show that all coefficients
  $\mu_i$ lie in $[-1, 1]$.
  \begin{claim}\label{clm:coeffbound}
    Take $p\in P$, $q \in Q$  and $q_1, \dots, q_{k}$
    as above. Write $q = \sum_{i=1}^{k} \mu_i q_i$. Then for
    $i = 1, \dots, k$, we have $\mu_i \in [-1, 1]$, and
    $\mu_j \geq 0$ for at least one  $j\in\{1, \dots, k\}$.
  \end{claim}
  \begin{proof}
    We first prove that all coefficients $\mu_i$ lie in the
    interval $[-1,1]$. Suppose that $|\mu_i| > 1$ for some
    $i \in \{1, \dots, k\}$. We may assume that $i = 1$.
    Using the linearity of the determinant,
    \begin{multline*}
      |\det(q, q_2, \dots, q_k)| = |\det(\mu_1 q_1, q_2, \dots, q_k)|
      = |\mu_1| \cdot |\det(q_1, q_2, \dots, q_k)|
      > |\det(q_1, q_2, \dots, q_k)|,
    \end{multline*}
     contradicting the choice of $q_1, \dots, q_k$.

     Furthermore, by our choice of the origin,
     all points in $Q$ have a non-negative
     $x_1$-coordinate.
     Thus, at least one coefficient $\mu_j$, $j\in\{1, \dots, k\}$,
     has to be non-negative.
  \end{proof}
  Using Claim~\ref{clm:coeffbound}, we can now bound $\dist(q, r)$.
  For $i = 1, \dots, k$, we write
  $p_i = q_i + q_i^{\perp}$, where $q_i^{\perp}$ is orthogonal to $F$.
  Then,
  \begin{multline}\label{equ:perp_triang}
    \dist(q, r) =
    \left\|\sum_{i = 1}^k \mu_i q_i -
        \sum_{i = 1}^{k} \mu_i (q_i + q_i^\perp) -
        \left(1 - \sum_{i =1 }^{k} \mu_i\right)
        p_0\right\| \\
      =\left\| \sum_{i=1}^{k} \mu_i q_i^{\perp} +
      \left(1-\sum_{i=1}^{k} \mu_i\right) p_0 \right\|
    \leq \left(\sum_{i=1}^{k} |\mu_i| +
      \left|1 - \sum_{i=1}^{k} \mu_i\right|
    \right) \delta_F(P) \leq 2k \delta_F(P),
  \end{multline}
  since
  $\|q_1^\perp\|, \dots, \|q_k^\perp\|, \|p_0\| \leq \delta_F(P)$,
  and since $\left|1-\sum_{i=1}^{k} \mu_i\right| \leq k$ follows from
  fact that at least one $\mu_i$ is non-negative.
  By (\ref{equ:flat_triang}) and (\ref{equ:perp_triang}), we get
  $\dist(p, F') \leq (2k+1) \delta_F(P)$.
\end{proof}

\begin{remark}
  For $k=1$, the proof of Lemma~\ref{lem:pert} coincides with
  the proof of Lemma~2.3 by AIKN~\cite{AndoniInKrNg09}. In this
  case, one can obtain a
  better bound on $\dist(q, r)$ since $q$ is a convex combination
  of $q_0$ and $q_1$. This gives $\delta_{F'}(P)\leq 2 \delta_F(P)$.
\end{remark}

\section{Cluster Structure}
\label{sec:clusterstructure}
A \emph{$k$-flat cluster structure} consists of a $k$-flat
$K$ and a set $Q$ of $m$ points with
$d(q, K) \leq \alpha$, for all $q \in Q$.
Let $K: u \mapsto A'u + a$ be a parametrization of $K$, with
$A' \in \R^{d \times k}$ and $a \in \R^d$ such that the columns
of $A'$ constitute an orthonormal basis for $K$ and such that
$a$ is orthogonal to $K$.
We are also given an approximation parameter $c > 1$.
The cluster structure uses a data structure
for approximate point nearest neighbor search as a black box.
We assume that we have such a structure available that
can answer $c$-approximate point nearest neighbor queries
in $d$ dimensions with query time $O_c(n^{\rho} + d\log n)$
and space requirement $O_c(n^{1+\sigma} + d\log n)$ for
some constants $\rho, \sigma > 0$. As mentioned in the
introduction, the literature offers several data structures
for us to choose from.

The cluster structure distinguishes two cases:
if the query flat $F$ is close to $K$, we can approximate
$F$ by few ``patches'' that are parallel to $K$, such that
a good nearest neighbor for the patches is also
good for $K$. Since the patches are parallel to $K$, they
can be handled through $0$-ANN queries in the orthogonal
space $K^\perp$ and low-dimensional
queries inside $K$.
If the query flat is far from $K$, we can approximate $Q$
by its projection onto $K$ and handle the query with a
low-dimensional data structure.

\subsection{Preprocessing}

Let $K^\perp$ be the linear subspace of $\R^d$ that is orthogonal
to $K$. Let $Q_a$ be the projection of $Q$ onto $K$,
and let $Q_b$ be the projection of $Q$ onto $K^\perp$.
We compute a $k$-dimensional partition
tree $\mathcal{T}$ for $Q_a$.
As stated in Theorem~\ref{thm:optpartition}, the tree $\mathcal{T}$
has $O(m)$ nodes, and it can be computed in time $O(m \log m)$.

For each node $(S_a, \Delta)$ of $\mathcal{T}$, we do the following:
we determine the set $S \subseteq Q$ whose projection onto $K$
gives $S_a$, and we take the projection $S_b$ of $S$ onto
$K^\perp$. Then, we build a $d-k$ dimensional $c'$-ANN data
structure for $S_b$, as given by the assumption,
where $c'=(1-1/\log n)c$. See
Algorithm~\ref{alg:preprocessCluster} for pseudocode.

\begin{alg}
\SetKwInOut{Input}{Input}
\SetKwFunction{CreatePartitionStructure}{CreatePartitionStructure}
\SetKwFunction{CreateSlabStructure}{CreateSlabStructure}
\Input{$k$-flat $K \subseteq \R^d$, point set $Q \subset \R^d$ with
  $d(q, K) \leq \alpha$ for all $q \in Q$, approximation
  parameter $c > 1$}

  $Q_a \gets$ projection of $Q$ onto $K$

  $Q_b \gets$ projection of $Q$ onto $K^\perp$

  Build a $k$-dimensional partition tree $\mathcal{T}$ for $Q_a$
   as in Theorem~\ref{thm:optpartition}.

  $c' \gets  (1-1/\log n)c$

  \ForEach{node $(S_a, \Delta) \in \mathcal{T}$} {
      $S_b \gets$ projection of the points in $Q$
      corresponding to $S_a$ onto $K^\perp$

      Build a $(d-k)$-dimensional $c'$-ANN structure for $S_b$
      as given by the assumption.
}
 \caption{CreateClusterStructure}
 \label{alg:preprocessCluster}
\end{alg}

\begin{lemma}\label{lem:clusterPrep}
  The cluster structure can be constructed in total
  time  $O_c(m^{2+\rho} + md\log^2 m)$, and it requires
  $O_c(m^{1+\sigma} + d\log^2 m)$  space.
\end{lemma}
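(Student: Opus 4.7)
I would analyze Algorithm~\ref{alg:preprocessCluster} in its three natural phases. First, projecting each of the $m$ points of $Q$ onto $K$ and onto $K^\perp$ using the given parametrization of $K$ takes $O(md)$ time and space in total. Second, since $k$ is a constant, Theorem~\ref{thm:optpartition} produces a $k$-dimensional partition tree $\mathcal{T}$ on the $m$-point set $Q_a$ in expected time $O(m\log m)$ and space $O(m)$, with constant branching $r$, depth $\log_r m$, and the crucial guarantee that a node at depth $\ell$ holds at most $m/r^\ell$ points.

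The bulk of the argument is bounding the total cost of the ANN structures built at the nodes of $\mathcal{T}$. The key property I would exploit is this geometric decay of node sizes with depth. Plugging the black-box space bound into the level-by-level sum gives
$\sum_{v\text{ at depth }\ell}\lvert S_v\rvert^{1+\sigma} \leq r^\ell(m/r^\ell)^{1+\sigma} = m^{1+\sigma}r^{-\ell\sigma}$,
which is a convergent geometric series in $\ell$; summing across the $O(\log m)$ levels collapses to the claimed $O_c(m^{1+\sigma})$ dominating space term. The preprocessing-time term $O_c(m^{2+\rho})$ follows from the same level-wise geometric summation applied to the polynomial construction cost $O_c(\lvert S_v\rvert^{2+\rho})$ of a single ANN: the exponent is the same, and the series again telescopes to its root value.

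The main obstacle I foresee is handling the additive $O(d\log\lvert S_v\rvert)$ term from the black-box bound: summed naively over the $\Theta(m)$ nodes of $\mathcal{T}$ it produces an $O(md\log m)$ overhead, which overshoots the target. I would manage this by maintaining a single shared copy of the $d$-dimensional coordinate data for each point of $Q$ and letting each ANN store only $O(\log m)$-bit pointers into this global table; then the $d$-dependent overhead effectively accrues once per \emph{level} of $\mathcal{T}$ rather than per node, and with $O(\log m)$ levels it totals $O(d\log^2 m)$ space and $O(md\log^2 m)$ time. Combining the three phases yields the claimed bounds $O_c(m^{2+\rho}+md\log^2 m)$ for preprocessing time and $O_c(m^{1+\sigma}+d\log^2 m)$ for space.
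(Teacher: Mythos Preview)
Your core argument---summing the black-box space and time bounds level by level in $\mathcal{T}$ and exploiting the geometric decay of node sizes---is exactly what the paper does, though the paper compresses it into a single sentence (``the sets on each level of $\mathcal{T}$ constitute a partition of $Q$, and the sizes decrease geometrically, so the bounds follow''). Your explicit computation $\sum_{\text{depth }\ell} r^\ell (m/r^\ell)^{1+\sigma}=m^{1+\sigma}r^{-\ell\sigma}$ is correct and more detailed than the paper's treatment, and likewise for the time bound.

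On the additive $d\log|S_v|$ term: you correctly observe that a naive sum over all $\Theta(m)$ nodes yields $O(md\log m)$, overshooting the claimed $d\log^2 m$. The paper does not address this at all. Your proposed fix via a shared coordinate table is a sensible instinct, but the jump from ``shared $d$-dimensional data'' to ``the $d$-dependent overhead accrues once per \emph{level}'' is not justified: storing the $m$ points once costs $O(md)$ globally, not $O(d\log m)$ per level, and you have not argued why whatever remains of the black-box $d\log|S_v|$ term (which could be, e.g., a random projection matrix or hash family, not just point coordinates) would aggregate to $O(d\log m)$ per level rather than per node. So this step is a gap in your write-up---but it is equally a gap in the paper's proof, which simply asserts the bound. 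Your argument is at least as rigorous as the paper's here; neither fully pins down the $d\log^2 m$ term.
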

\begin{proof}
By Theorem~\ref{thm:optpartition}, the partition tree can be
built in $O(m \log m)$ time. Thus,
the preprocessing time is dominated by the time to
construct the $c'$-ANN data structures at the nodes
of the partition tree $\mathcal{T}$.
Since the sets on each level of $\mathcal{T}$ constitute a
partition of $Q$, and since the sizes of the sets decrease
geometrically, the bounds on the preprocessing time and
space requirement follow directly from our assumption.
Note that by our choice of $c' = (1-1/\log n)c$, the
space requirement and query time for the
ANN data structure change only by a constant factor.
\end{proof}

\subsection{Processing a Query}

We set $\eps = 1/100 \log n$.
Let $F$ be the query $k$-flat, given as $F: v \mapsto B'v + b$,
with $B' \in \R^{d \times k}$ and $b \in \R^d$ such
that the columns of $B'$ are an orthonormal basis for $F$
and $b$ is orthogonal to $F$. Our first task is to find bases
for the flats $K$ and $F$ that provide us
with information about the relative position of $K$ and $F$.
For this, we take the matrix
$M = A'^T B' \in \R^{k \times k}$, and we compute a \emph{singular
value decomposition} $M = U \Sigma V^T$ of $M$~\cite[Chapter~7.3]{HornJo13}.
Recall that $U$ and $V$ are orthogonal $k \times k$ matrices and
that $\Sigma = \diag(\sigma_1, \dots, \sigma_k)$
is a $k \times k$ diagonal matrix
with $\sigma_1 \geq \dots \geq \sigma_{k} \geq 0$.
We call $\sigma_1, \dots, \sigma_k$ the \emph{singular values}
of $M$. The following lemma summarizes the properties of
the SVD that are relevant to us.
\begin{lemma}\label{lem:svd}
  Let $M = A'^TB'$, and let $M = U \Sigma V^T$ be a singular
  value decomposition for $M$. Let $u_1, \dots, u_k$ be
  the columns of $U$ and $v_1, \dots, v_k$ be the columns of $V$.
  Then,
  (i) $u_1, \dots, u_k$ is an orthonormal basis
  for $K$ (in the coordinate system induced by $A'$);
  (ii) $v_1, \dots, v_k$ is an orthonormal basis for $F$
  (in the coordinate system induced by $B'$): and
  (iii) for $i = 1, \dots, k$, the projection of $v_i$ onto
  $K$ is $\sigma_iu_i$ and the projection of $u_i$ onto
  $F$ is $\sigma_iv_i$ (again in the coordinate systems
  induced by $A'$ and $B'$).
  In particular, we have $\sigma_1 \leq 1$.
\end{lemma}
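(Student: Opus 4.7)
The plan is to unwind the definitions carefully, keeping track of which coordinate system we are in (the ambient $\R^d$ versus the intrinsic coordinates induced by $A'$ on $K$ and by $B'$ on $F$). Since $A'$ has orthonormal columns, the map $x \mapsto A'x$ is a linear isometry from $\R^k$ onto the direction space of $K$, and similarly $x \mapsto B'x$ is an isometry from $\R^k$ onto the direction space of $F$. Throughout, the ``$A'$-coordinate'' of a point $p \in K$ is the unique $u \in \R^k$ with $A' u = p - a$, and analogously for $B'$.

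For (i), recall that $U$ is an orthogonal $k \times k$ matrix, so its columns $u_1,\dots,u_k$ form an orthonormal basis of $\R^k$. Applying the isometry $A'$ thus yields an orthonormal basis of the direction space of $K$, i.e., the $A'$-coordinates of an orthonormal basis of $K$ are exactly $u_1,\dots,u_k$. Claim (ii) follows the same way from the orthogonality of $V$.

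For (iii), the orthogonal projection onto the direction space of $K$ is given in $\R^d$ by the matrix $A' A'^T$ (since the columns of $A'$ are orthonormal). Taking $v_i$ in $B'$-coordinates, its embedding in $\R^d$ is $B' v_i$, and its projection onto $K$ equals
\[
  A' A'^T B' v_i \;=\; A' M v_i \;=\; A' U \Sigma V^T v_i \;=\; A' U \Sigma e_i \;=\; \sigma_i A' u_i,
\]
where we used $V^T v_i = e_i$ because $V$ is orthogonal. Converting back to $A'$-coordinates by stripping off the leading $A'$, we obtain $\sigma_i u_i$. The symmetric computation, using $M^T = V \Sigma U^T$ and the projector $B' B'^T$, gives that the projection of $u_i$ onto $F$ has $B'$-coordinates $\sigma_i v_i$.

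Finally, for $\sigma_1 \le 1$: the largest singular value equals the operator norm, so
\[
  \sigma_1 \;=\; \|M\|_{\mathrm{op}} \;=\; \|A'^T B'\|_{\mathrm{op}} \;\le\; \|A'^T\|_{\mathrm{op}} \, \|B'\|_{\mathrm{op}} \;=\; 1,
\]
because $A'$ and $B'$ have orthonormal columns and hence operator norm $1$. There is no real obstacle here; the only subtlety worth stating carefully in the write-up is the distinction between a vector in its ambient $\R^d$ representation and in its intrinsic $A'$- or $B'$-coordinates, since (iii) is an identity between vectors in the latter.
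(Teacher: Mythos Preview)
Your proof is correct and follows essentially the same approach as the paper's: (i) and (ii) come from the orthogonality of $U$ and $V$, and (iii) from the fact that $M = A'^TB'$ (respectively $M^T = V\Sigma U^T$) represents the projection from $F$ onto $K$ (respectively from $K$ onto $F$) in the intrinsic coordinates. Your write-up is simply more explicit than the paper's three-sentence version, and your operator-norm argument for $\sigma_1 \le 1$ is a clean alternative to reading it off directly from (iii).
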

\begin{proof}
  Properties (i) and (ii) follow since $U$ and $V$
  are orthogonal matrices.
  Property (iii) holds because $M = A'^TB'$ describes the projection
  from $F$ onto $K$ (in the coordinate systems induced by $A'$ and
  $B'$) and because $M^T = B'^TA' = V\Sigma U^T$ describes the
  projection from $K$ onto $F$.
\end{proof}

We reparametrize $K$ according to $U$ and $F$ according to $V$.
More precisely, we set $A = A'U$ and $B = B'V$, and we write
$K: u \mapsto Au + a$ and $F: v \mapsto Bv + b$.
The new coordinate system provides
a simple representation for the distances between $F$ and $K$.
We begin with a technical lemma that is a simple corollary of
Lemma~\ref{lem:svd}.
\begin{lemma}\label{lem:matrices}
  Let $a_1, \dots, a_k$ be the columns of the matrix $A$;
  let $a_1^\parallel, \dots, a_k^{\parallel}$ be the columns of
  the matrix $BB^TA$, and $a_1^\perp, \dots, a_k^\perp$ the
  columns of the matrix $A - BB^TA$.
  Then, (i) for $i = 1, \dots, k$, the vector $a_i^\parallel$ is
  the projection of $a_i$ onto $F$ and the vector $a_i^\perp$
  is the projection of $a_i$ onto $F^\perp$;
  (ii) for $i = 1, \dots, k$, we have
  $\|a_i^\parallel\| = \sigma_i$ and
   $\|a_i^\perp\| = \sqrt{1-\sigma_i}$; and
  (iii) the vectors $a_1^\parallel, \dots, a_k^\parallel,
  a_1^\perp, \dots, a_k^\perp$ are pairwise orthogonal.
  An analogous statement holds for the matrices
  $B$, $AA^TB$, and $B - AA^TB$.
\end{lemma}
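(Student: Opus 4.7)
The plan is to translate each of the three statements into a short calculation by exploiting the orthonormal structure that the SVD reparametrization gives to $A$ and $B$. Since $A = A'U$ with $A'$ having orthonormal columns and $U$ orthogonal, the columns $a_1, \dots, a_k$ of $A$ form an orthonormal basis for the linear part of $K$; symmetrically, the columns $b_1, \dots, b_k$ of $B$ form an orthonormal basis for the linear part of $F$.

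For (i), the key observation is simply that $BB^T$ is the orthogonal projector in $\R^d$ onto the column span of $B$, i.e.\ onto $F$, and $I - BB^T$ is the projector onto $F^\perp$. The $i$-th columns of $BB^TA$ and of $A - BB^TA$ are therefore $BB^Ta_i$ and $(I-BB^T)a_i$, which are by definition $a_i^\parallel$ and $a_i^\perp$.

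For (ii) I will pull back to the coordinate systems induced by $A'$ and $B'$ and invoke Lemma~\ref{lem:svd}(iii). In those coordinates, $a_i = A'u_i$ corresponds to the basis vector $u_i$ of $K$, and its projection onto $F$ is $\sigma_i v_i$ (in $B'$-coordinates), which translates back to $\sigma_i B' v_i = \sigma_i b_i$ in $\R^d$. Taking norms gives $\|a_i^\parallel\| = \sigma_i$; since $\|a_i\| = 1$ and $a_i = a_i^\parallel + a_i^\perp$ is an orthogonal decomposition, the Pythagorean identity yields $\|a_i^\perp\| = \sqrt{1-\sigma_i^2}$ (which I read as the intended content of the stated formula).

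For (iii), orthogonality between any $a_i^\parallel$ and any $a_j^\perp$ is automatic because the former lies in $F$ and the latter in $F^\perp$. Within the parallel family, the computation above shows $a_i^\parallel = \sigma_i b_i$, so $\langle a_i^\parallel, a_j^\parallel \rangle = \sigma_i \sigma_j \langle b_i, b_j \rangle = 0$ for $i \neq j$. Within the perpendicular family, I use $\langle a_i^\perp, a_j^\perp \rangle = \langle a_i, a_j \rangle - \langle a_i^\parallel, a_j^\parallel \rangle$, which vanishes for $i \neq j$ because the $a_i$ are orthonormal and the parallel components have just been shown to be orthogonal. I do not expect any real obstacle: all the work went into setting up the SVD and the reparametrization, and Lemma~\ref{lem:matrices} is essentially a direct corollary of Lemma~\ref{lem:svd}. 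The analogous claim for $B$, $AA^TB$, $B - AA^TB$ follows by swapping the roles of $K$ and $F$ and using that $M^T = V\Sigma U^T$ is the corresponding SVD.
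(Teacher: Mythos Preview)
Your proof is correct and follows essentially the same route as the paper: both treat (i) and (ii) as immediate from the projector $BB^T$ and Lemma~\ref{lem:svd}(iii), and for (iii) both use that the $a_i^\parallel$ lie in $F$, the $a_j^\perp$ in $F^\perp$, and then expand $\langle a_i, a_j\rangle$ along the decomposition $a_i = a_i^\parallel + a_i^\perp$ to conclude $\langle a_i^\perp, a_j^\perp\rangle = 0$. Your explicit identification $a_i^\parallel = \sigma_i b_i$ and your remark that the stated $\sqrt{1-\sigma_i}$ should read $\sqrt{1-\sigma_i^2}$ are both apt.
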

\begin{proof}
  Properties (i) and (ii) are an immediate consequence of the
  definition of $A$ and $B$ and of Lemma~\ref{lem:svd}.
  The set $a_1^\parallel, \dots, a_k^\parallel$ is orthogonal
  by Lemma~\ref{lem:svd}(ii). Furthermore, since for any
  $i,j \in \{1, \dots, k\}$, the vector
  $a_i ^\parallel$ lies in $F$ and the vector
  $a_i^\perp$ lies in $F^\perp$,
  $a_i^\parallel$ and $a_j^\perp$ are orthogonal.
  Finally, let $1 \leq i < j \leq k$. Then,
  \[
    \langle a_i^\perp, a_j^\perp\rangle
    =
    \langle a_i^\perp, a_j^\perp\rangle
    + \langle a_i^\perp, a_j^\parallel\rangle
    + \langle a_i^\parallel, a_j^\perp\rangle
    + \langle a_i^\parallel, a_j^\parallel\rangle
    = \langle a_i, a_j \rangle = 0,
  \]
  since we already saw that
  $\langle a_i^\perp, a_j^\parallel \rangle =
  \langle a_i^\parallel, a_j^\perp \rangle =
  \langle a_i^\parallel, a_j^\parallel \rangle =
  \langle a_i, a_j \rangle = 0$.
  The argument for the other matrices is
  completely analogous.
\end{proof}

The next lemma shows how our choice of bases gives a
convenient representation of the distances between $F$ and
$K$.
\begin{lemma}\label{lem:svddistance}
Take two points $x_F \in K$ and $y_K \in F$ such that
$d(F, K) = d(y_K, x_F)$. Write $x_F = Au_F + a$ and
$y_K = Bv_K + b$.
Then, for any point $x \in K$ with $x = Au + a$, we have
\[
  d(F, x)^2 = \sum_{i = 1}^k \left(1-\sigma_i^2\right)(u - u_F)_i^2
    + d(F, K)^2,
\]
and for any point $y \in F$ with $y = Bv + b$, we have
\[
  d(y, K)^2 = \sum_{i = 1}^k \left(1-\sigma_i^2\right)(v - v_K)_i^2
    + d(F, K)^2.
\]
\end{lemma}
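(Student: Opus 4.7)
The plan is to reduce everything to orthogonal decompositions, using that $x_F - y_K$ realizes the minimum distance between the flats $K$ and $F$, so it must be orthogonal to both the direction subspace of $K$ (the column span of $A$) and the direction subspace of $F$ (the column span of $B$). I will prove the first identity; the second follows by symmetry after swapping the roles of $K$ and $F$ (and $A$ and $B$).

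Fix $x = Au + a \in K$ and let $y = Bv + b$ be an arbitrary point of $F$. Writing $w = x_F - y_K$, $s = u - u_F$, and $t = v - v_K$, I would rearrange
\[
  x - y = (Au + a) - (Bv + b) = w + As - Bt.
\]
Since $w \perp K$ and $w \perp F$ (as $x_F,y_K$ achieve the distance between two affine flats), and since $As$ lies in the direction space of $K$ while $Bt$ lies in the direction space of $F$, the cross term $\langle w, As - Bt\rangle$ vanishes. Pythagoras then gives
\[
  \|x-y\|^2 = \|w\|^2 + \|As - Bt\|^2 = d(F,K)^2 + \|As - Bt\|^2.
\]

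Next I would minimize $\|As - Bt\|^2$ over $t \in \R^k$. Decomposing $As$ into its $F$-parallel and $F$-perpendicular parts via $As = BB^T\!As + (I-BB^T)As$ and using that $Bt$ ranges over the whole direction space of $F$, the minimum is attained at $t = B^T\!As$ and equals $\|(I-BB^T)As\|^2$. Thus
\[
  d(x,F)^2 = d(F,K)^2 + \|(A - BB^TA)s\|^2.
\]

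Finally, I would invoke Lemma~\ref{lem:matrices}. Writing $a_i^\perp$ for the columns of $A - BB^TA$, we have $(A-BB^TA)s = \sum_{i=1}^k s_i a_i^\perp$, and the $a_i^\perp$ are pairwise orthogonal with $\|a_i^\perp\|^2 = 1 - \sigma_i^2$. Hence
\[
  \|(A-BB^TA)s\|^2 = \sum_{i=1}^k (1-\sigma_i^2)\,s_i^2
    = \sum_{i=1}^k (1-\sigma_i^2)(u-u_F)_i^2,
\]
which yields the claimed formula. The only real obstacle is bookkeeping: one must carefully argue that the perpendicularity $w \perp K, w \perp F$ makes the cross terms disappear and that the pairwise orthogonality from Lemma~\ref{lem:matrices} is exactly what converts the squared norm into the weighted sum. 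Everything else is routine linear algebra.
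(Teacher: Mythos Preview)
Your proof is correct and follows essentially the same approach as the paper: both reduce to showing $d(x,F)^2 = d(F,K)^2 + \|(A-BB^TA)(u-u_F)\|^2$ via the orthogonality of $x_F-y_K$ to both direction spaces, and then invoke Lemma~\ref{lem:matrices} to expand the squared norm. The only cosmetic difference is that the paper plugs in the explicit projection $y_x = BB^Tx + b$ of $x$ onto $F$, whereas you minimize $\|As-Bt\|^2$ over $t$; these are the same computation.
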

\begin{proof}
  We show the calculation for $d(F, x)$. The calculation for
  $d(y, K)$ is symmetric. Let $x \in K$ with $x = Au + a$
  be given. Let $y_x \in F$ be the projection of $x$ onto $F$.
  Then,
  \[
    d(F, x)^2
              = \|x - y_x \|^2 =
              \| (x - x_F) + (x_F - y_K) + (y_K - y_x) \|^2
              = \| (x - x_F) - (y_x - y_K) \|^2
                + \| x_F - y_K \|^2,
  \]
  where the last equality is due to Pythagoras, since
  $x - x_F$ lies in $K$, $y_x - y_K$ lies in $F$, and
  $x_F - y_K$ is orthogonal to both $K$ and $F$.
  Now,  we have
  $y_x = BB^Tx + b$. Similarly, since $y_K$ is
  the projection of $x_F$ onto $F$, we have
  $y_K = BB^Tx_F + b$. Thus,
  \[
    d(F, x)^2
              = \left\| (x - x_F) - BB^{T}(x - x_F) \right\|^2
                + d(F, K)^2 =
  \left\|\left(A - BB^TA\right)(u - u_F) \right\|^2
                + d(F, K)^2,
  \]
  using the definition of $x$ and $x_F$.
  By Lemma~\ref{lem:matrices}, the columns
  $a_1^\perp, \dots, a_k^\perp$ of the matrix
  $A - BB^TA'$ are pairwise orthogonal and for
  $i = 1, \dots, k$, we have $\| a_i^\perp \|^2 = 1-\sigma_i^2$.
  Pythagoras gives
  \[
    d(F, x)^2 =
                \sum_{i = 1}^k \|a_i^\perp\|^2
                  (u - u_F)_i^2 + d(F, K)^2
    = \sum_{i = 1}^k \left(1 - \sigma_i^2\right)
    (u - u_F)_i^2 + d(F, K)^2.
  \]
\end{proof}

\begin{alg}
\SetKwInOut{Input}{Input}
\SetKwFunction{CreatePartitionStructure}{CreatePartitionStructure}
\SetKwFunction{CreateSlabStructure}{CreateSlabStructure}
\Input{query $k$-flat $F \subseteq \R^d$;
  an estimate $\widetilde{r}$ with
  $d(F, Q) \in [\widetilde{r}/n^t, \widetilde{r}]$.}

  $M \gets A'^TB'$.

  Compute an SVD $M = U \Sigma V^T$ of $M$ with singular values
    $1 \geq \sigma_1 \geq \dots \geq \sigma_k \geq 0$.

  \If{$\sigma_k = 1$} {

    $f \gets $projection of $F$ onto $K^\perp$
    \tcc*{$F$ and $K$ are parallel; $f$ is a point}

    $r \gets$  $c'$-ANN for $f$ in $Q_b$

     \Return $r$
  }

  Reparametrize $K$ according to $U$ and $F$ according to $V$.

 \tcc{Near case}
  $\mathcal{G} \gets $ set of approximate patches obtained
    by combining Lemma~\ref{lem:tflats} and~\ref{lem:patches}

  $R \gets \emptyset$

  \ForEach{$G \gets \mathcal{G}$} {
    $R \gets R \cup \text{result of approximate nearest-neighbor
      query for $G$ as in Lemma~\ref{lem:patch_NN}}$
  }
 \tcc{Far case}
    $R \gets R \cup \text{result of approximate nearest-neighbor
    for $G$ as in Lemma~\ref{lem:farQuery}}$

  \Return point in $R$ that minimizes the distance to $F$
 \caption{QueryClusterStructure}
 \label{alg:queryCluster}
\end{alg}

We now give a brief overview of the query algorithm,
refer to Algorithm~\ref{alg:queryCluster} for pseudocode.
First,  we check for the special case that $F$ and $K$ are
parallel, i.e., that $\sigma_1 = \dots = \sigma_k = 1$. In
this case, we need to perform only a single $c'$-ANN
query in $Q_b$ to obtain the desired result.
If $F$ and $K$ are not parallel, we distinguish two
scenarios: if $F$ is far from $Q$, we can approximate $Q$ by
its projection $Q_a$ onto $K$. Thus, we take the closest point
$x_F$ in $K$ to $F$, and we return an approximate nearest neighbor for
$x_F$ in $Q_a$ according to an appropriate metric derived
from Lemma~\ref{lem:svddistance}. Details can be found in
Section~\ref{sec:far}.
If $F$ is close to $Q$, we use Lemma~\ref{lem:svddistance} to
discretize the relevant
part of $F$ into \emph{patches}, such that each patch is
parallel to $K$ and such that the best nearest
neighbor in $Q$ for the patches provides an approximate nearest
neighbor for $F$. Each patch can then
be handled essentially by an appropriate nearest neighbor
query in $K^\perp$. Details follow in Section~\ref{sec:close}.
 We say $F$ and $Q$ are \emph{close}
if $d(F, Q) \leq \alpha/\eps$, and \emph{far} if
$d(F, Q) > \alpha/\eps$. Recall that we chose $\eps = 1/100\log n$.

\subsubsection{Near: $d(F, Q) \leq \alpha/\eps$}\label{sec:close}

We use our reparametrization of $F$ and $K$ to split the coordinates
as follows: recall that
$1 \geq \sigma_1 \geq \dots \geq \sigma_k \geq 0$ are the singular
values of $M = A'^TB'$. Pick $l \in \{0, \dots, k\}$ such that
$1 \geq \sigma_i \geq \sqrt{1-\eps}$, for $i = 1, \dots, l$, and
$\sqrt{1-\eps} > \sigma_i \geq 0$, for $i = l+1, \dots, k$.
For a $d \times k$ matrix $X$, let
$X_{[i]}$ denote the $d \times i$ matrix with the first $i$
columns of $X$, and $X_{-[i]}$ the $d \times (k-i)$ matrix with
the remaining $k-i$ columns of $X$.
Similarly, for a vector $v \in \R^k$, let
$v_{[i]}$ be the vector in $\R^i$ with the  first $i$
coordinates of $v$, and $v_{-[i]}$ the vector in $\R^{k-i}$ with the
remaining $k-i$ coordinates of $v$.

The following lemma is an immediate consequence of
Lemma~\ref{lem:svddistance}. It tells us that
we can partition the directions in $F$ into those that are
almost parallel to $K$ and those that are almost orthogonal
to $K$. Along the orthogonal directions, we discretize
$F$ into few lower-dimensional flats that are almost
parallel to $K$. After that, we approximate these flats
by few patches that are actually parallel to $K$.
These patches are then used to perform the query.

\begin{lemma}\label{lem:orthogDiag}
  Let $y \in F$ be a point and
  $y_K \in F$ with  $d(F, K) = d(y_K, K)$.
  Write $y_K = Bv_K + b$ and $y = Bv + b$.
  Then, $\left\|(v - v_K)_{-[l]}\right\| \leq d(y, K)/\sqrt{\eps}$.
\end{lemma}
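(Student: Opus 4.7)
The plan is to apply Lemma~\ref{lem:svddistance} directly and isolate the contribution of the coordinates with index greater than $l$. Recall that by our choice of $l$, we have $\sigma_i < \sqrt{1-\eps}$ for every $i \in \{l+1, \dots, k\}$, so that $1 - \sigma_i^2 > \eps$ for these indices. This is the only property of $l$ that I will use.

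Starting from the representation
\[
  d(y, K)^2 = \sum_{i = 1}^k \left(1-\sigma_i^2\right)(v - v_K)_i^2 + d(F, K)^2
\]
provided by Lemma~\ref{lem:svddistance}, I would first drop the non-negative summands corresponding to $i \leq l$ as well as the (non-negative) term $d(F, K)^2$. This yields
\[
  d(y, K)^2 \;\geq\; \sum_{i = l+1}^k \left(1-\sigma_i^2\right)(v - v_K)_i^2.
\]
Next, using $1 - \sigma_i^2 > \eps$ for $i > l$, the right-hand side is bounded below by $\eps \sum_{i=l+1}^k (v-v_K)_i^2 = \eps \|(v - v_K)_{-[l]}\|^2$. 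Rearranging and taking square roots gives the claimed bound $\|(v - v_K)_{-[l]}\| \leq d(y, K)/\sqrt{\eps}$.

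There is no real obstacle here: the whole lemma is an immediate consequence of Lemma~\ref{lem:svddistance} together with the definition of $l$. The only thing worth being careful about is that $y_K$ is the point in $F$ closest to $K$ (so $d(F, K) = d(y_K, K)$), which is exactly what is required to use Lemma~\ref{lem:svddistance} in its stated form; after that, the lemma follows by a two-line estimate.
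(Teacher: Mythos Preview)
Your proof is correct and follows exactly the same approach as the paper: apply Lemma~\ref{lem:svddistance}, drop the non-negative terms with $i \leq l$ and the $d(F,K)^2$ term, and use $1-\sigma_i^2 > \eps$ for $i > l$ to bound the remaining sum below by $\eps\|(v-v_K)_{-[l]}\|^2$.
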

\begin{proof}
  By Lemma~\ref{lem:svddistance} and the choice of $l$,
  \[
    d(y, K)^2 =
    \sum_{i = 1}^k \left(1-\sigma_i^2\right) (v - v_K)_i^2 + d(F, K)^2
    \geq
    \sum_{i = l+1}^k \left(1-\sigma_i^2\right) (v - v_K)_i^2
    \geq
      \eps \left\|(v - v_K)_{-[l]}\right\|^2.
    \]
\end{proof}

Using Lemma~\ref{lem:orthogDiag}, we can discretize
the query $F$ into a set of  $l$-flats that are
almost parallel to the cluster flat $K$.
\begin{lemma}\label{lem:tflats}
  There is a set $\mathcal{L}$  of
  $O((n^{2t}k^{1/2}\eps^{-5/2})^{k-l})$
  $l$-flats such
  that the following holds:
  (i) for every $L \in \mathcal{L}$, we have
    $L \subseteq F$;
  (ii) for every $L \in \mathcal{L}$ and for every unit vector
    $u \in L$, the projection
    of $u$ onto $K$ has length at least $\sqrt{1-\eps}$; and
  (iii) if $d(F, Q) \in [\alpha/n^{2t}, \alpha/\eps]$,
  then there is an $l$-flat $L \in \mathcal{L}$ with
  $d(L, Q) \leq (1 + \eps)d(F, Q)$.
\end{lemma}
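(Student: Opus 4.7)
\medskip
\noindent
\textbf{Proof plan for Lemma~\ref{lem:tflats}.}
The plan is to build $\mathcal{L}$ as a bounded grid of translates of the $l$-dimensional subspace of $F$ spanned by the directions with ``large'' singular values. Concretely, for each $w \in \R^{k-l}$, define the $l$-flat
\[
  L_w = \{Bv + b : v \in \R^k, \, v_{-[l]} = w\}.
\]
Property (i) is immediate from the parametrization. For (ii), any unit direction of $L_w$ has the form $Bu$ with $u_{-[l]} = 0$ and $\|u\| = 1$; by Lemma~\ref{lem:svd}(iii), its projection onto $K$ equals $\sum_{i=1}^{l} u_i \sigma_i a_i$, and since $a_1,\dots,a_k$ are orthonormal (Lemma~\ref{lem:matrices}) and $\sigma_i^2 \geq 1 - \eps$ for $i \leq l$, the projection has squared length $\sum_{i=1}^{l} u_i^2 \sigma_i^2 \geq 1 - \eps$.

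The remaining (and most delicate) task is to choose the finite set of ``shifts'' $w$ witnessing (iii). Let $y^* \in F$ be a point with $d(y^*, Q) = d(F,Q)$ and write $y^* = Bv^* + b$. Choose $y_K \in F$ realizing $d(F,K)$ and write $y_K = Bv_K + b$; $v_K$ is computable from the input. The natural center for the grid is $v_{K,-[l]}$. I would then bound the range of $v^*_{-[l]}$ and pick the grid resolution so that some $w \in G$ lies within $\eps d(F,Q)$ of $v^*_{-[l]}$. By Lemma~\ref{lem:orthogDiag},
\[
  \bigl\|v^*_{-[l]} - v_{K,-[l]}\bigr\| \leq d(y^*, K)/\sqrt{\eps}
  \leq \bigl(d(F,Q) + \alpha\bigr)/\sqrt{\eps}.
\]
Because the assumption gives $d(F,Q) \geq \alpha/n^{2t}$, we get $\alpha \leq n^{2t} d(F,Q)$, and because $d(F,Q) \leq \alpha/\eps$, the range is bounded by $2 n^{2t} \alpha / \eps^{3/2}$.

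Now take $G$ to be the regular grid on $\R^{k-l}$ centered at $v_{K,-[l]}$ with spacing $\eps(\alpha/n^{2t})/\sqrt{k-l}$, truncated to the box of radius $2 n^{2t} \alpha / \eps^{3/2}$, and set $\mathcal{L} = \{L_w : w \in G\}$. For any valid $d(F,Q)$, the point $v^*_{-[l]}$ lies inside the box, so there is $w \in G$ with $\|w - v^*_{-[l]}\| \leq \tfrac{1}{2}\eps(\alpha/n^{2t}) \leq \tfrac{1}{2}\eps\, d(F,Q)$. Taking $y' = Bv' + b$ with $v'_{[l]} = v^*_{[l]}$, $v'_{-[l]} = w$ gives $y' \in L_w$ and $\|y' - y^*\| = \|w - v^*_{-[l]}\|$, since the columns of $B$ are orthonormal. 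Hence, for the nearest point $q^* \in Q$ to $F$,
\[
  d(L_w, Q) \leq \|y' - q^*\| \leq \|y' - y^*\| + \|y^* - q^*\| \leq (1+\eps)\,d(F,Q),
\]
proving (iii). Finally, counting grid points per coordinate yields $O(n^{2t}\sqrt{k-l}\,\eps^{-5/2})$, so $|\mathcal{L}| = O((n^{2t}k^{1/2}\eps^{-5/2})^{k-l})$ as claimed.

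The main obstacle is item (iii): the grid must be simultaneously (a) fine enough to approximate any $y^*$ to within $\eps d(F,Q)$ even though $d(F,Q)$ is not known and can be as small as $\alpha/n^{2t}$, which forces spacing $\Theta(\eps\alpha/n^{2t})$, and (b) large enough to contain the projection $v^*_{-[l]}$ in the worst case $d(F,Q) = \Theta(\alpha/\eps)$, which forces extent $\Theta(\alpha n^{2t}/\eps^{3/2})$. The two bounds combine to give the $\eps^{-5/2}$ factor in the cardinality; balancing them correctly using the assumption $d(F,Q) \in [\alpha/n^{2t}, \alpha/\eps]$ is the heart of the argument.
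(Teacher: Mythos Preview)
Your approach is essentially identical to the paper's: the same family of $l$-flats $L_w$ obtained by fixing the last $k-l$ coordinates in the $B$-basis, centered at $(v_K)_{-[l]}$, with the range controlled via Lemma~\ref{lem:orthogDiag} and the spacing chosen so that the rounding error is at most $\eps\alpha/n^{2t}\le\eps\,d(F,Q)$. Your verification of (ii) spells out what the paper leaves as ``by construction''. One arithmetic slip: your detour through $\alpha\le n^{2t}d(F,Q)$ inflates the range bound to $2n^{2t}\alpha/\eps^{3/2}$, which, combined with your spacing, would actually give $O((n^{4t}k^{1/2}\eps^{-5/2})^{k-l})$ flats, not the claimed count; simply use $d(F,Q)+\alpha\le\alpha/\eps+\alpha\le 2\alpha/\eps$ directly to get the range $\le 2\alpha/\eps^{3/2}$, and then your final per-coordinate count $O(n^{2t}\sqrt{k}\,\eps^{-5/2})$ is correct and matches the paper.
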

\begin{proof}
  Let $y_K = Bv_K + b \in F$ be a point in $F$ with
  $d(F, K) = d(y_K, K)$.  Furthermore, let
  \[
    \tau = \frac{\alpha\eps}{n^{2t}\sqrt{k}}
    \quad\text{ and }\quad
    o_\tau = \left\lceil \frac{ n^{2t} \sqrt{k}}{\eps^{5/2}}
    \right\rceil.
  \]
  Using $\tau$ and $\sigma_\tau$, we
  define a set $I$ of \emph{index vectors} with
  $
    I = \left\{-o_\tau \tau, (-o_\tau + 1) \tau, \dots,
    o_\tau \tau \right\}^{k-l}
  $
  and $|I| = O(o_\tau^{k-l}) = O((n^{2t}k^{1/2}\eps^{-5/2})^{k-l})$.
  For each $i \in I$, we define the $l$-flat $L_i$ as
  \[
    L_i: w \mapsto B_{[l]}w + B_{-[l]}\left((v_K)_{-[l]} + i\right)
    + b.
  \]
  Our desired set of approximate query $l$-flats is now $\mathcal{L} = \{L_i \mid
  i \in I\}$.

  The set $\mathcal{L}$ meets properties (i) and (ii)
  by construction, so it remains to verify (iii).
  For this, we take a point $y_Q \in F$ with $d(F, Q) = d(y_Q, Q)$.
  We write $y_Q = Bv_Q + b$,  and we define
  $s = (v_Q - v_K)_{-[l]}$.
  We assumed that
  $d(y_Q, K) \leq \alpha/\eps$,
  so Lemma~\ref{lem:orthogDiag} gives
  $\|s \| \leq \alpha/\eps^{3/2}$.
  It follows that by rounding each coordinate of $s$ to
  the nearest multiple of $\tau$,
  we obtain an index vector $i_Q \in I$ with
  $\| i_Q - s \| \leq \tau\sqrt{k} = \eps\alpha/n^{2t}$.
  Hence, considering the point in $L_{i_Q}$ with $w = (v_Q)_{[l]}$,
  we get
  \begin{align*}
    d(L_{i_Q}, Q) &\leq
    d(L_{i_Q}, y_Q) + d(y_Q, Q)\\
    &\leq  \left\|B_{[l]} (v_Q)_{[l]} +
      B_{-[l]}\left((v_K)_{-[l]}+  i_Q\right) +
      b - Bv_Q - b \right\| + d(F, Q)\\
      &=
      \left\|B_{-[l]}\left((v_K)_{-[l]} + i_Q - (v_Q)_{-[l]}\right)
        \right\| + d(F, Q)\\
      &= \|(v_K - v_Q)_{-[l]} + i_Q\| + d(F, Q)\tag{*}\\
      &= \|i_Q - s\| + d(F, Q)\\
      &\leq \eps\alpha/n^{2t} + d(F, Q)\\
      &\leq (1 + \eps)d(F, Q) \tag{**},
  \end{align*}
  where in (*) we used that the columns of
  $B_{-[l]}$ are orthonormal and in (**) we used the
  assumption $d(F, Q) \geq \alpha/n^{2t}$.
\end{proof}

From now on, we focus on an approximate query $l$-flat
$L: w \mapsto B_1w + b_1$ with $B_1 = B_{[l]}$.
Our next goal is to approximate $L$  by a set of patches such
that each is parallel to $K$.

\begin{lemma}\label{lem:patches}
  There is a set $\mathcal{G}$  of $O((n^{2t}k^{1/2}\eps^{-2})^l)$
  patches such
  that the following holds:
  (i) every $G \in \mathcal{G}$ is an $l$-dimensional
    polytope, given by $O(l)$ inequalities;
  (ii) for every $G \in \mathcal{G}$,
  the affine hull of $G$ is parallel to $K$;
  (iii) if $d(L, Q) \in [\alpha/n^{2t}, 2\alpha/\eps]$,
  then there exists $G \in \mathcal{G}$ with
  $d(G, Q) \leq (1 + \eps)d(L, Q)$;
  (iv) for all $G \in \mathcal{G}$ and for all $q \in Q$, we have
  $d(G, q) \geq (1 - \eps)d(L, q)$.
\end{lemma}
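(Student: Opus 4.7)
The plan is to build $\mathcal{G}$ from an anisotropic grid over the parameter space $\R^l$ of $L$. First I would parametrize $L$ by $y(w) = B_1 w + b_1$ with $B_1 = B_{[l]}$, and decompose each $y(w) = y^\parallel(w) + y^\perp(w)$ with $y^\parallel(w) \in K$ and $y^\perp(w) \in K^\perp$. Applying the analogue of Lemma~\ref{lem:matrices} to the matrix $(I - AA^T)B_1$, whose columns are the $K^\perp$-components of the first $l$ columns of $B$, these columns are pairwise orthogonal with norms $\sqrt{1 - \sigma_j^2} \leq \sqrt{\eps}$ (using $\sigma_j \geq \sqrt{1-\eps}$ for $j \leq l$), yielding the key identity
\[
\|y^\perp(w) - y^\perp(w')\|^2 \;=\; \sum_{j=1}^{l} (1-\sigma_j^2)\,(w - w')_j^2
\]
for all $w, w' \in \R^l$. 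Thus direction $j$ contributes a perpendicular slope of exactly $\sqrt{1-\sigma_j^2}$, which is small precisely because $L$ is nearly parallel to $K$ in that direction.

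Next I would choose the reference point $w_0 = (v_K)_{[l]}$ (the first $l$ coordinates of the parameter of the point $y_K \in F$ closest to $K$, as in Lemma~\ref{lem:svddistance}), and set anisotropic cell widths $\tau_j = \Theta\!\bigl(\eps\alpha/(n^{2t}\sqrt{l(1-\sigma_j^2)})\bigr)$ together with covering ranges $R_j = \Theta\!\bigl(\alpha/(\eps\sqrt{1-\sigma_j^2})\bigr)$ for each $j$ with $\sigma_j < 1$; coordinates with $\sigma_j = 1$ are left unconstrained, since $y^\perp$ does not vary in them. Partitioning $w_0 + \prod_j [-R_j, R_j]$ into axis-aligned boxes of these widths produces $\prod_{j:\sigma_j<1} (2R_j/\tau_j) = O\!\bigl((n^{2t}\sqrt{k}/\eps^2)^l\bigr)$ cells, matching the claimed bound. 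For each cell $C$ with center $w_C$, define the patch
\[
G_C \;=\; \{\,y^\parallel(w) + y^\perp(w_C) \,:\, w \in C\,\},
\]
an $l$-dimensional polytope cut out by the $2l$ cell inequalities; its affine hull is the translate of $\pi_K(L) \subseteq K$ by $y^\perp(w_C) \in K^\perp$, hence parallel to $K$. This verifies (i) and (ii).

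For (iii), let $y^* = y(w^*) \in L$ realize $d(L, Q)$ and $q^* \in Q$ be closest to $y^*$. Since $d(y^*, K) \leq d(y^*, q^*) + d(q^*, K) \leq 2\alpha/\eps + \alpha \leq 3\alpha/\eps$, Lemma~\ref{lem:svddistance} yields $(1 - \sigma_j^2)(w^* - w_0)_j^2 \leq d(y^*, K)^2 \leq 9\alpha^2/\eps^2$, so $|(w^* - w_0)_j| \leq R_j$ and $w^*$ lies in the grid. In the cell $C^*$ containing it, the point $g^* := y^\parallel(w^*) + y^\perp(w_{C^*}) \in G_{C^*}$ satisfies $\|g^* - y^*\| = \|y^\perp(w_{C^*}) - y^\perp(w^*)\| \leq \eps\alpha/n^{2t} \leq \eps\,d(L, Q)$ by the calibration of the $\tau_j$, so $d(G_{C^*}, Q) \leq \|g^* - q^*\| \leq (1+\eps)d(L, Q)$. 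For (iv), I would split $g - q$ orthogonally into its $K^{\text{lin}}$- and $K^\perp$-components to get
\[
d(G_C, q)^2 \;=\; \min_{w \in C} \|y^\parallel(w) - q_a\|^2 + \|y^\perp(w_C) - q_b\|^2,
\]
then use the AM-GM bound $(a-b)^2 \geq (1-\eps)a^2 - (1/\eps)b^2$ with $b = \|y^\perp(w_C) - y^\perp(w)\| \leq \eps\alpha/n^{2t}$ to obtain, after re-minimizing over $w \in C$, $d(G_C, q)^2 \geq (1-\eps)\,d(L, q)^2 - \eps\,(\alpha/n^{2t})^2$. Since $d(L, q) \geq d(L, Q) \geq \alpha/n^{2t}$ by the assumption of the invoking \textbf{Q2}-query, the error term is absorbed into $\eps\,d(L, q)^2$, giving $d(G_C, q) \geq \sqrt{1 - 2\eps}\,d(L, q) \geq (1-\eps)d(L, q)$ after rescaling $\eps$ by a constant.

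The main obstacle is the simultaneous calibration of the grid: (iii) demands cells fine enough that their perpendicular variation is negligible against $d(L, Q)$, while (iv) requires that this variation be the only slack the patches gain over $L$. The anisotropic widths $\tau_j \propto 1/\sqrt{1-\sigma_j^2}$ are precisely tuned so that the cell's perpendicular diameter $\sqrt{\sum_j (1-\sigma_j^2)(\tau_j/2)^2} = \Theta(\eps\alpha/n^{2t})$ is independent of the individual $\sigma_j$; coupled with the ranges $R_j$ supplied by Lemma~\ref{lem:svddistance}, the ratios $R_j/\tau_j = \Theta(n^{2t}\sqrt{l}/\eps^2)$ are also uniform, so the total cell count is $\sigma$-independent and meets the claimed bound.
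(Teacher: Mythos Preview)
Your construction is essentially the paper's: both build an anisotropic grid in the parameter space of $L$ with step $\tau_j \propto 1/\sqrt{1-\sigma_j^2}$ around the reference point $w_K = (v_K)_{[l]}$, and both define a patch by freezing the $K^\perp$-component over each cell while letting the $K$-component vary. Your formulation $G_C = \{y^\parallel(w) + y^\perp(w_C) : w \in C\}$ is a reparametrization of the paper's $G_i : w \mapsto Cw + B_1(w_K + i) + b_1$ (they anchor at a lattice corner, you at the cell center), but the resulting family of patches and the displacement bound $\|y^\perp(w) - y^\perp(w')\|^2 = \sum_j (1-\sigma_j^2)(w-w')_j^2 \leq (\eps\alpha/n^{2t})^2$ are identical, and your handling of the degenerate directions $\sigma_j = 1$ is in fact slightly cleaner than the paper's implicit ``$\tau_j = \infty$''.

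The only notable divergence is in (iv). The paper observes directly that every patch point $z$ has a companion $z_x \in L$ with $\|z - z_x\| = \|(C - B_1)w\| \leq \eps\alpha/n^{2t}$, so a single triangle inequality gives $d(G,q) \geq d(L,q) - \eps\alpha/n^{2t} \geq (1-\eps)d(L,q)$. Your route via the orthogonal split and the inequality $(a-b)^2 \geq (1-\eps)a^2 - \eps^{-1}b^2$ is correct but more work than necessary; since the patch-to-$L$ displacement is already a uniform additive $\eps\alpha/n^{2t}$, you do not need to separate the $K$ and $K^\perp$ components at all. Either way you end up invoking $d(L,q) \geq \alpha/n^{2t}$, which (as you note) comes from the ambient \textbf{Q2} assumption rather than from the lemma's own hypotheses; the paper does the same.
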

\begin{proof}
Let $C = AA^TB_1$ be the $d \times l$ matrix
whose columns $b_1^\parallel, \dots, b_l^\parallel$ constitute
the projections of the columns of $B$ onto $K$.
By Lemma~\ref{lem:matrices},
the vectors $b_i^\parallel$ are  orthogonal with
$\|b_i^\parallel\| = \sigma_i$, for
$i = 1, \dots, l$, and the columns
$b_1^\perp, \dots, b_l^\perp$
of the matrix $B_1 - C$ also constitute an orthogonal set,
with $\|b_i^\perp\|^2 = 1-\sigma_i^2$,
for $i = 1, \dots, l$. Let $z_K$
be a point in $L$ that minimizes the distance to $K$, and write
$z_K = B_1w_K + b_1$.  Furthermore, let
\[
  \tau_i =
  \frac{\alpha\eps}{n^{2t}\sqrt{l(1-\sigma_i^2)}},
  \quad
  \text{for $i = 1, \dots, l$, and}
  \quad
  o_{\tau} = \left\lceil \frac{2 n^{2t} \sqrt{l}}{\eps^{2}}\right\rceil.
\]
We use the $\tau_i$ and $o_\tau$ to define a set $I$ of
\emph{index vectors} as
$
  I = \prod_{i=1}^{l}
  \{-o_\tau\tau_i, (-o_\tau + 1)\tau_i, \dots,
    o_\tau \tau_i\}
$.
We have $|I| = O(o_\tau^l) = O((n^{2t}k^{1/2}\eps^{-2})^l)$.
For each index vector $i \in I$, we define the patch $G_i$ as
\[
  G_i: w \mapsto Cw + B_1(w_K + i)
  + b_1, \text{ subject to } w \in \prod_{i=1}^{l}
  \left[0,  \tau_i\right].
\]
Our desired set of approximate query patches is now
$\mathcal{G} = \{G_i \mid i \in I\}$.
The set $\mathcal{G}$ fulfills
properties (i) and (ii) by construction, so it remains to check
(iii). Fix a point $z \in L$. Since $L \subseteq F$, we can
write $z =  B_1w + b_1 = Bv + b$, where the vector $w$
represents the coordinates of $z$ in $L$ and the vector $v$
represents the coordinates of $z$ in $F$.
By Lemma~\ref{lem:svddistance},
\[
  d(z, K)^2 = \sum_{i=1}^k (1-\sigma_i^2)(v - v_K)_i^2
           + d(F, K)^2,
\]
where the vector $v_K$ represents the coordinates of a point in
$F$ that is closest to $K$. By definition of $L$, the last
$k-l$ coordinates $v_{-[l]}$ in $F$  are the same for all points
$z \in L$, so we can conclude that the coordinates for
a closest point to $K$ in $L$ are given by
$w_K = (v_K)_{[l]}$ and that
\begin{equation}\label{equ:distLK}
  d(z, K)^2 = \sum_{i=1}^l (1-\sigma_i^2)(w - w_K)_i^2
           + d(L, K)^2.
\end{equation}
Now take a point $z_Q$ in $L$ with $d(z_Q, Q) = d(L, Q)$ and
write $z_Q = B_1w_Q + b_1$.
Since we assumed $d(L, Q) \leq 2\alpha/\eps$, (\ref{equ:distLK})
implies that for $i = 1, \dots, l$, we have
$|(w_Q - w_K)_i| \leq 2\alpha/\Bigl(\eps\sqrt{1 + \sigma_i^2}\Bigr)$.
Thus, if for $i = 1, \dots, l$, we round
$(w_Q - w_K)_i$ down to the next multiple of
$\tau_i$,
we obtain an index vector $i_Q \in I$ with
$(w_Q - w_K) - i_Q \in \prod_{i=1}^{l} \left[0,  \tau_i\right]$.
We set $s_Q = (w_Q - w_K) - i_Q$. Considering the point
$Cs_Q + B_1(u_K + i_Q) + b_1$ in  $G_{i_Q}$, we see that
\begin{multline*}
  d(G_{i_Q}, z_Q)^2
    \leq \|Cs_Q + B_1(w_K + i_Q) + b_1 - B_1w_Q - b_1 \|^2
    = \|Cs_Q - B_1((w_Q - w_K) - i_Q) \|^2\\
    = \|(C - B_1)s_Q\|^2
    = \sum_{i=1}^l (1-\sigma_i^2)(s_Q)_i^2
    \leq \sum_{i=1}^l (1-\sigma_i^2)\tau_i^2
    = \eps^2\alpha^2/n^{4t},
\end{multline*}
using the properties of the matrix $B_1 - C$ stated above.
It follows that
\[
  d(G_{i_Q}, Q) \leq
  d(G_{i_Q}, z_Q) + d(z_Q, Q)
    \leq
      \eps\alpha/n^{2t} + d(L, Q)
    \leq (1+\eps)d(L, Q),
\]
since we  assumed $d(L, Q) \geq \alpha/n^{2t}$. This proves
(iii). Property (iv) is obtained similarly.
Let $G_i \in G$, $q \in Q$ and let
$z$ be a point in $G_i$. Write $z = Cw + B_1(w_K + i) + b_1$,
where $w \in \prod_{i=1}^{t} \left[0,  \sigma_i\right]$.
Considering the point $z_x = B_1(w + w_K + i) + b_1$ in $L$,
we see that
\[
  d(G_i, r_x)^2  \leq \|z - z_x\|^2
    = \|(C - B_1)w\|\leq \eps^2\alpha^2/n^{4t}.
\]
Thus,
\[
  d(G_{i}, q) \geq
    d(z_x, q) - d(G_i, z_x)
    \geq
      d(L, q) - \eps\alpha/n^{2t}
    \geq (1 - \eps)d(L, q).
\]
\end{proof}

Finally, we have a patch $G: w \mapsto Cw + b_2$, and we are
looking for an approximate nearest neighbor for $G$ in $Q$.
The next lemma states how this can be done.
\begin{lemma}\label{lem:patch_NN}
  Suppose that $d(G, Q) \in [\alpha/2n^{2t},
    3\alpha/\eps]$. We can find a point
    $\widetilde{q} \in Q$ with
    $d(G, \widetilde{q}) \leq (1 - 1/2\log n)c d(G, Q)$
    in total time
    $O_c((k^2n^{2t}/\eps^2)(m^{1-1/k+\rho/k} + (d/k)\log m))$.
\end{lemma}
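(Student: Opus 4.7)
The plan is to exploit that the patch $G$ is parallel to $K$ in order to split the query into two orthogonal subproblems: a range search in the $k$-dimensional subspace $K$, for which we have the partition tree $\mathcal{T}$ on $Q_a$, and a low-dimensional $0$-ANN query in $K^\perp$, for which a structure was built at every node of $\mathcal{T}$. Writing each $q \in Q$ as $q = q_a + q_b$ with $q_a \in K$ and $q_b \in K^\perp$, and observing that the projection of $G$ onto $K^\perp$ is a single point $g_b$ (because $G \parallel K$) while the projection onto $K$ is an $l$-dimensional parallelepiped $G_a$, Pythagoras gives
\[
d(q, G)^2 = d(q_a, G_a)^2 + \|q_b - g_b\|^2,
\]
so the task reduces to minimizing the right-hand side over $q \in Q$.

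To combine the two parts, I will enumerate $O(n^{2t}/\eps^2)$ threshold radii $r_0 < r_1 < \cdots$ with spacing $\Theta(\eps\alpha/n^{2t})$, covering the interval $[\alpha/(2n^{2t}),\,3\alpha/\eps]$ that is known to contain $d(G, Q)$. At each iteration $i$, I issue a range query to $\mathcal{T}$ to collect the canonical nodes whose simplex lies entirely inside the $r_i$-neighborhood $N_i$ of $G_a$ in $K$; at each such canonical node $v$, I query the preprocessed $(d-k)$-dimensional $c'$-ANN structure with $g_b$ against $S_b^{(v)}$ and record the returned point as a candidate. Since $G_a$ is a convex polytope defined by $O(k)$ halfspaces, $N_i$ can be sandwiched up to a constant factor between two axis-aligned boxes, each described by $O(k)$ hyperplanes; Theorem~\ref{thm:optpartition} then bounds the number of visited canonical nodes by $O(k\,m^{1-1/k})$. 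In the end, I return the candidate minimizing the true distance $d(\cdot, G)$.

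For correctness, let $q^*$ attain $d(G,Q)$ and let $i^*$ be the smallest index with $r_{i^*} \geq d(q^*_a, G_a)$, so $r_{i^*} - d(q^*_a, G_a) = O(\eps\alpha/n^{2t}) = O(\eps)\,d(G, Q)$ by the assumption $d(G,Q) \geq \alpha/(2n^{2t})$. The simplex containing $q^*_a$ at iteration $i^*$ lies in some canonical node $v^*$, so the $c'$-ANN query there returns $q$ with $\|q_b - g_b\| \leq c'\|q^*_b - g_b\|$, while $q_a$ lies in the same simplex and hence $d(q_a, G_a) \leq r_{i^*}$. Substituting into the Pythagorean identity and expanding yields
\[
d(q, G)^2 \leq (c')^2\,d(q^*, G)^2 + O(\eps)\,d(G, Q)^2,
\]
which, with $c' = (1-1/\log n)c$ and $\eps = 1/(100\log n)$, gives $d(q,G) \leq (1 - 1/(2\log n))\,c\,d(G,Q)$ for any fixed $c > 1$ and large enough $n$. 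For the running time, each iteration visits $O(k\,m^{1-1/k})$ canonical nodes; summing $|S_v|^\rho + (d-k)\log|S_v|$ over them gives $O_c(m^{1-1/k+\rho/k} + (d/k)\log m)$ by concavity of $x \mapsto x^\rho$ together with $\sum_v |S_v| \leq m$. Multiplying by the $O(n^{2t}/\eps^2)$ iterations and absorbing a further $O(k)$ for the bounding-box facet count matches the claimed bound.

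The step I expect to be most delicate is the range search for the curved Minkowski sum $N_i = G_a \oplus B(0, r_i)$, which is not itself a polytope and hence does not fit directly into the hyperplane-based crossing bound of Theorem~\ref{thm:optpartition}. My plan is to handle this via the bounding-box sandwich described above: replacing $N_i$ by an inner and outer axis-aligned box whose effective radii differ by only a constant factor. The resulting constant-factor distortion shifts the effective threshold by a constant, which is absorbed into the slack provided by the choice $c' = (1-1/\log n)c$.
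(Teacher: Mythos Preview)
Your Pythagorean decomposition $d(q,G)^2 = d(q_a,G_a)^2 + \|q_b - g_b\|^2$, the use of the partition tree on $Q_a$ together with the $c'$-ANN structures in $K^\perp$, and the running-time analysis via concavity of $x\mapsto x^\rho$ are all exactly right and match the paper. The gap is in your final paragraph. The slack between $c'=(1-1/\log n)c$ and $c$ is only $c/\log n = o(1)$, so it cannot absorb a \emph{constant}-factor distortion on the $K$-component. Concretely, with the outer-box replacement you obtain only $d_\infty(q_a,G_a)\leq r_{i^*}$, hence $d(q_a,G_a)\leq \sqrt{k}\,r_{i^*}\leq \sqrt{k}\,d(q^*_a,G_a)+O(\eps\alpha/n^{2t})$. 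Plugging this into Pythagoras gives at best a $\max(\sqrt{k},c')$-approximation; for $k\geq 2$ and $c$ close to $1$ this already exceeds the target $(1-1/2\log n)c$. Your line ``$d(q_a,G_a)\leq r_{i^*}$'' is thus not justified by the box sandwich, and no polytopal approximation of the curved region $N_i$ with $O(k)$ facets can recover it.

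The paper fixes this by a different mechanism. Rather than iterating over growing annuli, it performs a \emph{single} traversal of $\mathcal{T}$ against a fine grid of axis-aligned cubes of diameter $\eps\alpha/(2n^{2t})$ tiling the $\ell_\infty$-neighborhood $G_a^+$ of $G_a$ of radius $3\alpha\sqrt{k}/\eps$. The cube boundaries lie on $O(k^2 n^{2t}/\eps^2)$ hyperplanes, so Theorem~\ref{thm:optpartition} gives the same canonical-node count you computed. The crucial difference is that the canonical simplex $\Delta^*$ containing $q^*_a$ now sits inside a single cube of diameter $\eps\alpha/(2n^{2t})$, so the returned point $\widehat q$ has $\widehat q_a\in\Delta^*$ and therefore
\[
d(\widehat q_a,G_a)\leq d(q^*_a,G_a)+\eps\alpha/(2n^{2t})\leq d(q^*_a,G_a)+\eps\,d(G,Q),
\]
a purely \emph{additive} error that the $O(1/\log n)$ slack does absorb. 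Your annulus scheme cannot reproduce this: membership of $q_a$ in the $i^*$-th shell bounds $d(q_a,G_a)$ only by the outer radius, not by $d(q^*_a,G_a)$ plus something small.
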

\begin{proof}
  Let $G_a$ be the projection  of $G$ onto $K$, and let
  $g$ be the projection of $G$ onto $K^\perp$.
  Since $G$ and $K$ are parallel,
  $g$ is a point, and
  $G_a$ is of the form $G_a : w \mapsto Cw + a_2$, with
  $a_2 \in K$ and $w \in \prod_{i=1}^t [0, \tau_i]$.
  Let $G_{a}^+ =
  \{x \in K \mid d_\infty(x, G_a) \leq
  3\alpha\sqrt{k}/\eps\}$, where
  $d_\infty(\cdot, \cdot)$ denotes the
  $\ell_\infty$-distance with respect to the coordinate system induced
  by $A$.
  We subdivide the set $G_a^+ \setminus G_a$,
  into a collection $\mathcal{C}$ of
  axis-parallel cubes, each with diameter
  $\eps\alpha/2n^{2t}$.
  The cubes in $\mathcal{C}$
  have side length $\eps\alpha/2n^{2t}\sqrt{k}$,
  the total number of cubes is
  $O((kn^{2t}/\eps^2)^k)$, and the
  boundaries of the cubes lie on
  $O(k^2n^{2t}/\eps^2)$ hyperplanes.

  We now search the partition tree $\mathcal{T}$
  to find the highest nodes $(\Delta, Q)$ in
  $\mathcal{T}$  whose simplices $\Delta$ are completely contained
  in a single cube of $\mathcal{C}$.
  This is done as follows: we begin at the root
  of $\mathcal{T}$, and we check for all children $(\Delta, Q)$
  and for all boundary hyperplanes $h$ of $\mathcal{C}$
  whether the simplex $\Delta$ crosses the boundary $h$.
  If a child $(\Delta, Q)$ crosses no hyperplane, we label it
  with the corresponding cube in $\mathcal{C}$ (or with $G_a$).
  Otherwise, we recurse on $(\Delta, Q)$
  with all the boundary hyperplanes that it crosses.

  In the end, we have obtained a set $\mathcal{D}$ of simplices
  such that each simplex in $\mathcal{D}$ is completely contained in
  a cube of $\mathcal{C}$.
  The total number of simplices
  in $\mathcal{D}$ is $s = O((k^2n^{2t}/\eps^2)m^{1-1/k})$, by
  Theorem~\ref{thm:optpartition}.
  For each simplex in $\mathcal{D}$,
  we query the corresponding $c'$-ANN structure.
  Let $R \subseteq Q_b$ be the set of the query results.
  For each point $q_b \in R$, we take the corresponding point
  $q \in Q$, and we compute the distance $d(q, G)$.
  We return a point $\widetilde{q}$ that minimizes $d(q, G)$.
  The query time is dominated by the time for the ANN queries.
  For each $\Delta \in \mathcal{D}$, let $m_\Delta$ be the number
  of points in the corresponding ANN structure. By
  assumption, an ANN-query takes time
  $O_c(m_\Delta^\rho + d\log m_\Delta)$,
  so the total query time is proportional to
  \begin{multline*}
    \sum_{\Delta \in \mathcal{D}} m_\Delta^\rho + d\log m_\Delta \leq
    s \left( \sum_{\Delta \in \mathcal{D}} m_\Delta/s \right)^\rho  +
    s d \log \left( \sum_{\Delta \in \mathcal{D}} m_\Delta/s \right)\\
    \leq
    O_c\left((k^2n^{2t}/\eps^2)(m^{1-1/k+\rho/k} + (d/k)\log m)\right),
  \end{multline*}
  using the fact that $m \mapsto m^\rho + d\log m$ is concave and
  that $\sum_{\Delta \in \mathcal{D}} m_\Delta \leq m$.

  It remains to prove that approximation bound.
  Take a point $q^*$ in $Q$ with $d(q^*, Q) = d(Q, G)$. Since
  we assumed that $d(Q, G) \leq 3\alpha/\eps$, the projection
  $q^*_a$ of $q^*$ onto $K$ lies in $G_a^+$.
  Let $\Delta^*$ be the simplex in $\mathcal{D}$ with
  $q^*_a \in \Delta^*$. Suppose that the ANN-query for $\Delta^*$
  returns a point $\widehat{q} \in Q$.
  Thus, in $K^\perp$, we have
  $d(\widehat{q_b}, g) \leq c'd(Q_{b\Delta^*}, g) \leq
  c'd(q_b^*, g)$, where $\widehat{q_b}$ and $q^*_b$ are
  the projections of $\widehat{q}$ and $q^*$ onto $K^\perp$ and
  $Q_{b\Delta^*}$ is the point set stored in the ANN-structure of
  $\Delta^*$.
  By the definition of $\mathcal{C}$, in $K$, we have
  $d(\widehat{q_a}, G_a) \leq
  d(q_a^*, G_a) + \eps\alpha/2n^{2t} \leq
  d(q_a^*, G_a) + \eps d(q^*, G)$,
  where $\widehat{q_a}$ is the projection of $\widehat{q}$ onto $K$.
  By Pythagoras,
  \begin{align*}
    d(\widehat{q}, G)^2 &=
      d(\widehat{q_b}, g)^2 + d(\widehat{q_a}, G_a)^2\\
    &\leq
      c'^2d(q_b^*, g)^2
  + (d(q_a^*, G_a) + \eps d(q^*, G))^2\\
    &\leq
      c'^2d(q_b^*, g)^2
  + d(q_a^*, G_a)^2  + (2\eps + \eps^2)d(q^*, G)^2 \\
  &\leq
    (c'^2 + 3\eps)(q^*, G)^2\\
  &\leq
    \left((1-1/\log n)^2c^2 + 3/100\log n\right)(q^*, G)^2\\
  &\leq
    (1-1/2\log n)^2c^2(q^*, G)^2,
  \end{align*}
  recalling that $c' = (1-1/\log n)c$ and $\eps =
  1/100\log n$.
  Since $d(\widetilde{q}, G) \leq d(\widehat{q}, G)$,
  the result follows.
\end{proof}
\noindent
Of all the candidate points obtained through querying
patches, we return the one closest to $F$.
The following lemma summarizes the properties of the
query algorithm.
\begin{lemma}\label{lem:nearQuery}
  Suppose that $d(F, Q) \in [\alpha/n^{2t}, \alpha/\eps]$.
  Then the query procedure returns
  a point $\widetilde{q} \in Q$ with
  $d(F, \widetilde{q}) \leq cd(F, Q)$ in
  total time
  $O_c((k^2n^{2t}\eps^{-5/2})^{k+1}(m^{1-1/k+\rho/k} + (d/k)\log m))$.
\end{lemma}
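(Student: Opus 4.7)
The plan is to chain the three lemmas of Section~\ref{sec:close} together along the optimal choice of approximate $l$-flat and patch, and to use property (iv) of Lemma~\ref{lem:patches} to transfer a guarantee about distance to a patch back to a guarantee about distance to $F$. Let $q^* \in Q$ with $d(q^*, F) = d(F, Q)$. First I would apply Lemma~\ref{lem:tflats}, whose hypothesis is met since $d(F, Q) \in [\alpha/n^{2t}, \alpha/\eps]$, to obtain an $l$-flat $L^* \in \mathcal{L}$ with $L^* \subseteq F$ and $d(L^*, Q) \leq (1+\eps)d(F, Q)$. Because $L^* \subseteq F$, we also have $d(L^*, Q) \geq d(F, Q) \geq \alpha/n^{2t}$, and combined with the upper bound this puts $d(L^*, Q)$ in the range $[\alpha/n^{2t}, 2\alpha/\eps]$ required by Lemma~\ref{lem:patches}(iii).

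Next I would apply Lemma~\ref{lem:patches} to $L^*$, obtaining a patch $G^* \in \mathcal{G}$ with $d(G^*, Q) \leq (1+\eps)d(L^*, Q) \leq (1+\eps)^2 d(F, Q)$. Using Lemma~\ref{lem:patches}(iv) in the other direction yields $d(G^*, Q) \geq (1-\eps) d(L^*, Q) \geq (1-\eps)\alpha/n^{2t} \geq \alpha/2n^{2t}$, and the upper bound $(1+\eps)^2\alpha/\eps \leq 3\alpha/\eps$ shows that $d(G^*, Q)$ lies in the window required by Lemma~\ref{lem:patch_NN}. Applying that lemma to $G^*$ produces a candidate $\widetilde{q}$ with $d(G^*, \widetilde{q}) \leq (1 - 1/2\log n)\, c\, d(G^*, Q)$. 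To bound $d(F, \widetilde{q})$, I use $L^* \subseteq F$, which gives $d(F, \widetilde{q}) \leq d(L^*, \widetilde{q})$, and then Lemma~\ref{lem:patches}(iv) applied to $\widetilde{q}$, which gives $d(L^*, \widetilde{q}) \leq d(G^*, \widetilde{q})/(1-\eps)$. Chaining everything yields
\[
 d(F, \widetilde{q}) \leq \frac{(1-1/2\log n)(1+\eps)^2}{1-\eps}\, c\, d(F, Q).
\]
The main sanity check is that the leading factor is at most $1$ for $\eps = 1/100\log n$; a straightforward expansion shows the $-1/(2\log n)$ term dominates the $O(\eps)$ corrections, so the bound $d(F, \widetilde{q}) \leq c\, d(F, Q)$ follows. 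Of course the actual returned point is the best over all patches of all $l$-flats, which is no worse than $\widetilde{q}$.

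For the running time, I would first bound the total number of patches enumerated: by Lemmas~\ref{lem:tflats} and \ref{lem:patches} the number of $(L, G)$ pairs is at most $O\bigl((n^{2t}k^{1/2}\eps^{-5/2})^{k-l} \cdot (n^{2t}k^{1/2}\eps^{-2})^l\bigr) = O\bigl((n^{2t}k^{1/2}\eps^{-5/2})^{k}\bigr)$. Multiplying by the per-patch cost from Lemma~\ref{lem:patch_NN}, namely $O_c((k^2 n^{2t}/\eps^2)(m^{1-1/k+\rho/k} + (d/k)\log m))$, and folding the extra factor into one more copy of $k^2 n^{2t} \eps^{-5/2}$, yields the claimed $O_c\bigl((k^2 n^{2t}\eps^{-5/2})^{k+1}(m^{1-1/k+\rho/k} + (d/k)\log m)\bigr)$ bound.

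The only real subtlety is making sure the preconditions of Lemma~\ref{lem:patch_NN} hold for the specific patch $G^*$ that witnesses a good nearest neighbor; patches whose $d(G, Q)$ lies outside the required window contribute no guarantee, but since we take the minimum distance over all queried candidates this does no harm. The rest is an $\eps$-bookkeeping exercise with the chosen $\eps = 1/100\log n$, which I expect to be the only place needing care.
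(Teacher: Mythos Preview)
Your proposal is correct and follows essentially the same approach as the paper: chain Lemmas~\ref{lem:tflats}, \ref{lem:patches}, and~\ref{lem:patch_NN}, then use Lemma~\ref{lem:patches}(iv) together with $L^* \subseteq F$ to pull the guarantee back from $G^*$ to $F$, and verify the constant $(1-1/2\log n)(1+\eps)^2/(1-\eps) \leq 1$ for $\eps = 1/100\log n$. You are in fact more careful than the paper in explicitly checking that $d(L^*,Q)$ and $d(G^*,Q)$ land in the ranges required by the next lemma, and in spelling out the running-time bookkeeping.
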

\begin{proof}
  By Lemmas~\ref{lem:tflats} and~\ref{lem:patches},
  there exists a patch $G$ with $d(G, Q) \leq (1 + \eps)^2d(F, Q)$.
  For this patch, the algorithm from Lemma~\ref{lem:patch_NN}
  returns a point $\widehat{q}$ with
  $d(\widehat{q}, G) \leq (1+1/2\log n)cd(G, Q)$.
  Thus, using Lemma~\ref{lem:patches}(iv), we have
  \[
      (1-\eps)d(\widehat{q}, L) \leq d(\widehat{q}, G)
      \leq (1-1/2\log n)c(1 + \eps)^2d(F, Q)
  \]
  and
  by our choice of $\eps = 1/100\log n$, we get
    \begin{multline*}
    (1-1/2\log n)(1 + \eps)^2/(1-\eps)
      \leq
      (1-1/2\log n)(1 + 3\eps)(1+2\eps)\\
      \leq (1-1/2\log n)(1 + 6/100\log n)
      \leq 1.
    \end{multline*}
\end{proof}

\subsubsection{Far: $d(F, Q) \geq \alpha/\eps$}\label{sec:far}

If $d(F, Q) \geq \alpha/\eps$, we can approximate $Q$ by its
projection $Q_a$ onto $K$ without losing too much.
Thus, we can perform the whole algorithm in $K$.
This is done by a procedure similar to Lemma~\ref{lem:patch_NN}.

\begin{lemma}\label{lem:farNNinQa}
  Suppose we are given an estimate $\widetilde{r}$
  with $d(F, Q_a) \in [\widetilde{r}/2n^{t}, 2\widetilde{r}]$.
  Then, we can find a point $\widetilde{q} \in Q_a$ with
  $d(F, \widetilde{q}) \leq (1 + \eps)d(F, Q_a)$ in
  time $O((k^{3/2}n^{t}/\eps)m^{1-1/k})$.
\end{lemma}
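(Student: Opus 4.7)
The plan is to mirror the approach of Lemma~\ref{lem:patch_NN}, but to search directly in $Q_a \subseteq K$ with respect to the weighted metric induced by Lemma~\ref{lem:svddistance}. First, I parametrize $K$ via the $A$-coordinates used in Section~\ref{sec:close}, and let $x_F \in K$ be the point of $K$ closest to $F$, with coordinates $u_F$. Lemma~\ref{lem:svddistance} then gives the closed form
\[
d(F, x)^2 \;=\; \phi(u - u_F)^2 + d(F, K)^2, \qquad \phi(v) := \sqrt{\textstyle\sum_{i=1}^k (1-\sigma_i^2)\, v_i^2},
\]
for any $x \in K$ with $A$-coordinates $u$. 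Thus, finding an approximate nearest neighbor of $F$ in $Q_a$ reduces to approximately minimizing the weighted $\ell_2$ seminorm $\phi(u - u_F)$ over the $A$-coordinates of the points in $Q_a$.

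Next, I build a grid in $K$ adapted to $\phi$. The hypothesis $d(F, Q_a) \leq 2\widetilde{r}$ implies that the true nearest neighbor lies in the ``weighted ball'' $B = \{u : \phi(u - u_F) \leq 2\widetilde{r}\}$, which has extent $2\widetilde{r}/\sqrt{1-\sigma_i^2}$ in each direction $i$ with $\sigma_i < 1$. In each such direction I introduce $O(n^t\sqrt{k}/\eps)$ axis-parallel hyperplanes at spacing $s_i = \eps\widetilde{r}/(2n^t\sqrt{k(1-\sigma_i^2)})$, so that every resulting cube has $\phi$-diameter at most $\eps\widetilde{r}/(2n^t)$. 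Directions with $\sigma_i = 1$ contribute nothing to $\phi$ and receive no hyperplanes, so the grid uses $H = O(k^{3/2}n^t/\eps)$ hyperplanes in total.

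I then traverse the $k$-dimensional partition tree $\mathcal{T}$ for $Q_a$ built during preprocessing, exactly as in Lemma~\ref{lem:patch_NN}, to enumerate the highest nodes whose simplex crosses no grid hyperplane. By Theorem~\ref{thm:optpartition}, each hyperplane meets $O(m^{1-1/k})$ simplices of $\mathcal{T}$, so the traversal costs $O(H \cdot m^{1-1/k}) = O((k^{3/2}n^t/\eps)\, m^{1-1/k})$ in total. For each enumerated node I look up a stored representative point of $Q_a$, evaluate its distance to $F$ via the closed-form expression above, and return the minimizer $\widetilde{q}$. For correctness: the true nearest neighbor $q_a^*$ and the representative $\widetilde{q}$ of the enumerated simplex containing it both lie in the same grid cube, so the $\phi$-distance between their $A$-coordinates is at most $\eps\widetilde{r}/(2n^t) \leq \eps\, d(F, Q_a)$, where I used the lower bound $d(F, Q_a) \geq \widetilde{r}/(2n^t)$ from the hypothesis. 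Since $x \mapsto \sqrt{x^2 + d(F, K)^2}$ is $1$-Lipschitz, this additive error propagates to give $d(F, \widetilde{q}) \leq (1+\eps)\, d(F, Q_a)$.

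The main obstacle I expect to navigate is the handling of degenerate directions $i$ with $\sigma_i = 1$, in which $F$ shares a direction with $K$: there the weighted ball $B$ is unbounded and a naive grid would contain infinitely many cubes. The resolution is simply to omit hyperplanes in those directions, letting the affected cubes become unbounded slabs; since $(1-\sigma_i^2)=0$ for such $i$, the $\phi$-diameter of each slab is still at most $\eps\widetilde{r}/(2n^t)$, and the partition tree analysis from Lemma~\ref{lem:patch_NN} carries over unchanged.
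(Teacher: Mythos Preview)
Your proposal is correct and follows essentially the same route as the paper: both use the closed form from Lemma~\ref{lem:svddistance} to place an axis-aligned grid in $K$ with spacing $\eps\widetilde{r}/(2n^t\sqrt{k(1-\sigma_i^2)})$, traverse the partition tree $\mathcal{T}$ against the $O(k^{3/2}n^t/\eps)$ grid hyperplanes, and test one representative per uncut simplex. Your explicit treatment of the degenerate directions $\sigma_i=1$ and your Lipschitz argument for the error propagation are in fact a bit cleaner than the paper's version, which tacitly assumes $\sigma_i<1$ in the grid definition and bounds the error by a direct squared-distance calculation.
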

\begin{proof}
Let $x_F$ be a point in $K$ with
$d(F, K) = d(F, x_F)$. Write $x_F = Au_F + a$.
Define
\[
  C = \prod_{i = 1}^k \left((u_F)_i  +
    \left[0,
  2\widetilde{r}/\sqrt{1-\sigma_i^2}\right]\right)
\]
If we take a point $x \in K$ with
$d(x, F) \in [\widetilde{r}/2n^{t}, 2\widetilde{r}]$ and write
$x = Au + a$, then Lemma~\ref{lem:svddistance} gives
\[
  d(F, x)^2 =
    \sum_{i = 1}^k(1-\sigma_i^2)(u - u_F)_i^2 +
         d(F, K)^2,
\]
so  $u \in C$.
We subdivide $C$ into copies of the
hyperrectangle
$\prod_{i = 1}^k[0, \eps\widetilde{r}/2n^t\sqrt{k(1-\sigma_i^2)}]$.
Let $\mathcal{C}$ be the resulting set of hyperrectangles.
The boundaries of the hyperrectangles in $\mathcal{C}$ lie on
$O(k^{3/2}n^{t}/\eps)$ hyperplanes.
We now search the partition tree $\mathcal{T}$
in order to find the highest nodes $(\Delta, Q)$ in
$\mathcal{T}$  whose simplices $\Delta$ are completely contained
in a single hyperrectangle of $\mathcal{C}$.
This is done in the same way as in Lemma~\ref{lem:patch_NN}.

This gives a set $\mathcal{D}$ of simplices
such that each simplex in $\mathcal{D}$ is completely contained in
a hyperrectangle of $\mathcal{C}$.
The total number of simplices
in $\mathcal{D}$ is $O((k^{3/2}n^{t}/\eps)m^{1-1/k})$, by
Theorem~\ref{thm:optpartition}.
For each simplex $\Delta \in \mathcal{D}$,
we pick an arbitrary point $q \in Q_a$ that lies in
$\Delta$, and we compute $d(F, q)$. We return
the point $\widetilde{q} \in Q_a$ that minimizes the
distance to $F$.
The total query time is $O((k^{3/2}n^{t}/\eps)m^{1-1/k})$.

Now let $q^*$ be a point in $Q_a$ with $d(F, Q_a) = d(F, q^*)$,
and let $\Delta^*$ be the simplex $\mathcal{D}$ that contains $q^*$.
Furthermore, let $\widehat{q} \in Q_a$ be the point that the algorithm
examines in $\Delta^*$. Write $q^* = Au^* + a$ and
$\widehat{q} = A\widehat{u} + a$. Since
$q^*$ and $\widehat{q}$ lie in the same hyperrectangle and by
Lemma~\ref{lem:svddistance},
  \begin{multline*}
  d(F, \widehat{q})^2 =
    \sum_{i = 1}^k(1-\sigma_i^2)(\widehat{u} - u_F)_i^2 + d(F, K)^2
  \leq\\ \sum_{i = 1}^k(1-\sigma_i^2)(u^* - u_F)_i^2 +
  \eps^2\widetilde{r}^2/4n^{2t} +
  d(F, K)^2 \leq
   (1+\eps)^2d(F, q^*)^2.
\end{multline*}
Since $d(F, \widetilde{q}) \leq d(F, \widehat{q})$, the result
follows.
\end{proof}

\begin{lemma}\label{lem:farQuery}
  Suppose we are given an estimate $\widetilde{r}$
  with $d(F, Q) \in [\widetilde{r}/n^{t}, \widetilde{r}]$.
  Suppose further that $d(F, Q) \geq \alpha/\eps$.
  Then we can find a $\widetilde{q} \in Q$ with
  $d(F, \widetilde{q}) \leq cd(F, Q)$ in
  time $O((k^{3/2}n^{2t}/\eps)m^{1-1/k})$.
\end{lemma}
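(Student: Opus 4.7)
The plan is to reduce the far case to an approximate nearest neighbor query inside the flat $K$, exploiting the fact that when $F$ is far from $Q$, the projection of $Q$ onto $K$ is a good proxy for $Q$ itself. Concretely, I would compute $Q_a$ (the projection of $Q$ onto $K$), invoke Lemma~\ref{lem:farNNinQa} on $F$ and $Q_a$ with the given estimate $\widetilde{r}$, and then return the point $\widetilde{q} \in Q$ whose projection onto $K$ is the $\widetilde{q_a} \in Q_a$ returned by that lemma.

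The first step is to relate $d(F, Q)$ and $d(F, Q_a)$. For every $q \in Q$ we have $d(q, q_a) \leq \alpha$, where $q_a$ is the projection of $q$ onto $K$, since $Q$ lies in a flat-cluster of radius $\alpha$. By the triangle inequality, $|d(F, q) - d(F, q_a)| \leq \alpha$ for every $q \in Q$, hence $|d(F, Q) - d(F, Q_a)| \leq \alpha$. The assumption $d(F, Q) \geq \alpha/\eps$ then gives $\alpha \leq \eps\, d(F, Q)$, so $d(F, Q_a) \in [(1-\eps)d(F,Q), (1+\eps)d(F,Q)]$. Combined with the input estimate $d(F, Q) \in [\widetilde{r}/n^t, \widetilde{r}]$, this yields $d(F, Q_a) \in [\widetilde{r}/2n^t, 2\widetilde{r}]$ for $n$ large enough, which is precisely the estimate condition required by Lemma~\ref{lem:farNNinQa}.

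Applying Lemma~\ref{lem:farNNinQa} then produces, in time $O((k^{3/2}n^t/\eps)m^{1-1/k})$ (which fits in the claimed budget $O((k^{3/2}n^{2t}/\eps)m^{1-1/k})$), a point $\widetilde{q_a} \in Q_a$ with $d(F, \widetilde{q_a}) \leq (1+\eps)d(F, Q_a)$. Taking the preimage $\widetilde{q} \in Q$ and undoing the projection adds at most $\alpha$ to the distance, so
\[
d(F, \widetilde{q}) \leq d(F, \widetilde{q_a}) + \alpha \leq (1+\eps)d(F, Q_a) + \alpha \leq (1+\eps)^2 d(F, Q) + \eps\, d(F, Q).
\]
With $\eps = 1/100\log n$, the leading factor $(1+\eps)^2 + \eps$ is bounded above by $1 + O(1/\log n)$, which is at most $c$ since $c > 1$ is a fixed constant (and $n$ is large). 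The main obstacle is just the careful bookkeeping of the additive $\alpha$ error when translating between $Q$ and $Q_a$, together with checking that $\widetilde{r}$ remains a valid estimate for $d(F, Q_a)$; both are handled by the inequality $\alpha \leq \eps\, d(F, Q)$ afforded by the far-case hypothesis.
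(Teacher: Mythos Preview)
Your proposal is correct and follows essentially the same approach as the paper: reduce to Lemma~\ref{lem:farNNinQa} via the observation that $\alpha \leq \eps\, d(F,Q)$ makes the projection onto $K$ nearly isometric for the relevant distances, then undo the projection at the cost of an additive $\alpha$. You are in fact slightly more careful than the paper in explicitly verifying that the estimate $\widetilde{r}$ satisfies the precondition $d(F,Q_a)\in[\widetilde{r}/2n^t,2\widetilde{r}]$ of Lemma~\ref{lem:farNNinQa}, which the paper leaves implicit.
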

\begin{proof}
  For any point $q \in Q$, let $q_a \in Q$ be its projection
  onto $K$. Then, $d(q_a, q) \leq \alpha \leq \eps d(F, Q)$.
  Thus, $d(F, Q_a) \in [(1-\eps)d(F, Q), (1+\eps)d(F, Q)]$,
  and we can apply Lemma~\ref{lem:farNNinQa}. Let
  $\widetilde{q_a} \in Q_a$ be the result of this query, and let
  $\widetilde{q}$ be the corresponding point in $Q$. We have
    \begin{multline*}
    d(F, \widetilde{q}) \leq d(\widetilde{q},
      \widetilde{q_a}) + d(F, \widetilde{q_a})
      \leq \eps d(F, Q) + (1 + \eps)d(F, Q_a)\\
      \leq \eps d(F, Q)+ (1 + \eps)^2d(F, Q)
      \leq (1 + 4\eps) d(F, Q)
      \leq c d(F, Q),
  \end{multline*}
  by our choice of $\eps$.
\end{proof}

By combining Lemmas~\ref{lem:clusterPrep},~\ref{lem:nearQuery},
and~\ref{lem:farQuery}, we obtain Theorem~\ref{thm:q2}.

\section{Approximate $k$-flat Range Reporting in Low Dimensions}
\label{sec:lowdimstructure}

In this section, we present  a data structure for low dimensional
$k$-flat approximate near neighbor reporting.
In Section~\ref{sec:projectionstructures}, we will use it as
a foundation for our projection structures.
The details of the structure are
summarized in Theorem~\ref{thm:lowdimstructure}.
Throughout this section,
we will think of $d$ as a constant, and we will suppress
factors depending on $d$ in the $O$-notation.

\begin{theorem} \label{thm:lowdimstructure}
  Let $P \subset \R^d$ be an $n$-point set.
  We can preprocess $P$ into an $O(n\log^{d-k-1}n)$ space data
  structure for approximate $k$-flat near neighbor queries: given
  a $k$-flat $F$ and a parameter $\alpha$, find a set
  $R \subseteq P$ that contains all $p \in P$ with $d(p,F)
  \leq \alpha$ and no $p \in P$ with
  $d(p,F) > (\lowdimapprox) \alpha$. The
  query time is $O(n^{k/(k+1)}\log^{d-k-1} n + |R|)$.
\end{theorem}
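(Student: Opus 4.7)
My approach is to reduce the $k$-flat query to a hyperplane slab query in $\R^{k+1}$ by stripping off the $d-k-1$ ``orthogonal'' directions with nested one-dimensional range trees. The structure has $\binom{d}{k+1} = O_d(1)$ copies, one for each choice of a $(k+1)$-element coordinate subset $S \subseteq \{1, \dots, d\}$. Within each copy I build a $(d-k-1)$-level nested range tree on the coordinates in $\{1, \dots, d\} \setminus S$ (one coordinate per level) and, at each canonical leaf, an optimal $(k+1)$-dimensional partition tree (Theorem~\ref{thm:optpartition}) on the projection of the leaf's point set onto the coordinates in $S$. The nested range trees use $O(n\log^{d-k-1}n)$ space per copy; since at every range-tree level the canonical subsets partition $P$, the innermost partition trees add only another $O(n\log^{d-k-1}n)$ in total, matching the space bound.

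For a query $(F, \alpha)$, I first select the copy $S$ whose complementary coordinates are ``most orthogonal'' to $F$ among all $\binom{d}{k+1}$ options; a pigeonhole/linear-algebra argument on the $k$-flat $F$ in $\R^d$ shows that at least one such copy has bounded angular distortion. I then descend the $d-k-1$ nested range trees, querying each with the interval of half-width $(4k+3)\alpha$ that the projection of $F$ onto the corresponding axis must hit; standard range-tree analysis yields $N = O(\log^{d-k-1} n)$ canonical subsets at the innermost level. For each such subset of size $m_v$, I invoke the inner $(k+1)$-dimensional partition tree with a slab of half-width $\sqrt{k+1}\,\alpha$ around the image hyperplane $\pi_S(F) \subseteq \R^{k+1}$; the standard slab traversal costs $O(m_v^{k/(k+1)})$ by Theorem~\ref{thm:optpartition}. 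Finally I post-process by computing true distances to $F$ and discarding points beyond $((4k+3)(d-k-1) + \sqrt{k+1})\alpha$.

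For the query-time bound I use concavity of $m \mapsto m^{k/(k+1)}$: since the $N$ innermost canonical subsets are disjoint with $\sum_v m_v \le n$,
\[
  \sum_v m_v^{k/(k+1)} \;\le\; N^{1/(k+1)}\, n^{k/(k+1)} \;\le\; (\log n)^{(d-k-1)/(k+1)}\, n^{k/(k+1)},
\]
comfortably within the claimed $O(n^{k/(k+1)} \log^{d-k-1} n)$; the range-tree descent contributes an additional $O(\log^{d-k-1} n)$, absorbed into the same bound. The approximation factor accumulates additively: each of the $d-k-1$ range-tree levels widens an interval by a factor of $4k+3$, justified by Lemma~\ref{lem:pert} (factor $2k+1$) combined with a triangle-inequality step to reach $4k+3$, while the innermost partition-tree slab is widened by $\sqrt{k+1}$ to account for the $\ell_\infty$-versus-$\ell_2$ containment needed in $\R^{k+1}$.

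The main obstacle I expect is rigorously verifying the existence of a good coordinate subset $S$ for every query flat $F$: specifically, showing that among the $\binom{d}{k+1}$ subsets of axes, there is always one whose complementary $(d-k-1)$ axes capture $F^\perp$ with angular distortion small enough that projecting and widening by $4k+3$ per level truly retains every point within $\alpha$ of $F$ while excluding points beyond the claimed approximation factor. This will rest on a determinant/volume argument analogous to Lemma~\ref{lem:pert} applied to bases of $F^\perp$, combined with careful bookkeeping of error propagation across the $d-k-1$ range-tree levels so that the final additive error sums to exactly $(4k+3)(d-k-1) + \sqrt{k+1}$ times $\alpha$.
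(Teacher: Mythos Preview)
Your proposal has a genuine gap at the step where you select the coordinate subset $S$. You need the $d-k-1$ complementary axes to be ``most orthogonal'' to $F$ so that the projection of $F$ onto each such axis is a single point, around which you can query with an interval of half-width $(4k+3)\alpha$. But a $k$-flat $F$ projects onto an axis $e_i$ as a single point \emph{only} when $e_i \in F^\perp$ exactly; otherwise the projection of $F$ (and of its $\alpha$-neighbourhood) onto that axis is the whole real line. For a generic query flat, \emph{no} coordinate axis lies in $F^\perp$, so for every choice of $S$ the range-tree intervals are all of $\R$ and the nested range trees do no filtering at all. A concrete instance: with $d=3$, $k=1$, and $F$ the line through the origin with direction $(1,1,1)/\sqrt{3}$, each axis has inner product $1/\sqrt{3}$ with the direction of $F$, and the projection of $F$ onto every axis is all of $\R$. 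Approximate orthogonality does not help: if $e_i$ makes angle $\theta>0$ with $F^\perp$, the projection of $F$ intersected with the bounding box of $P$ still has length $\Theta(\mathrm{diam}(P)\sin\theta)$, which is unrelated to $\alpha$. Lemma~\ref{lem:pert} is also not applicable here; it controls how well a flat through \emph{data} points approximates a given flat, not the angle between a query flat and the coordinate axes.

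The paper avoids this obstacle by making the ``orthogonal direction'' used for dimension reduction \emph{data-dependent} rather than query-dependent. It builds a single $(k+1)$-dimensional partition tree on the projection of $P$ to the first $k+1$ coordinates, and at every node repeatedly extracts the thinnest \emph{full} slab (one containing at least $r^{2/3}$ child simplices) determined by simplex vertices, projects those points onto the slab's median hyperplane, and recurses in dimension $d-1$. At query time, a slab is entered only if its width is at most $(4k+2)\alpha$, so the projection step loses at most $w_j/2+\alpha$ per level of dimension reduction, yielding the additive $(4k+3)$ per level; the leftover simplices (at most $r^{1/3}$ per node) are handled by recursing in the partition tree. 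Crucially, the slab directions are fixed at preprocessing time from the point set and need no relation to the query flat; the recursion terminates at $d=k+1$, where the projected query becomes a hyperplane slab and a direct partition-tree query with the $\ell_1$-widening $\sqrt{k+1}$ finishes the job.
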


\subsection{Preprocessing}
\label{sec:lowdimpreproc}

Let $E \subset \R^d$ be the $(k+1)$-dimensional subspace of $\R^d$
spanned by the first $k+1$ coordinates, and let $Q$ be the
projection of $P$ onto $E$.\footnote{We assume general position:
any two distinct points in $P$ have distinct projections in $Q$.}
We build a $(k+1)$-dimensional partition tree $\mathcal{T}$
for $Q$, as in Theorem~\ref{thm:optpartition}.
If $d > k + 1$, we also build a \emph{slab structure}
for each node of $\mathcal{T}$.
Let $v$ be such a node, and let $\Xi$ be the simplicial partition
for the children of $v$.
Let $w > 0$. A \emph{$w$-slab} $S$ is a closed region in $E$ that is
bounded by two parallel hyperplanes of distance $w$.
The \emph{median hyperplane} $\widehat{h}$ of $S$ is the
hyperplane inside $S$ that is parallel to the
two boundary hyperplanes and has distance $w/2$ from both.
A $w$-slab $S$ is \emph{full} if there are at least $r^{2/3}$
simplices $\Delta$ in $\Xi$ with $\Delta \subset S$.

\begin{alg}
\SetKwInOut{Input}{Input}
\SetKwFunction{CreateSearchStructure}{CreateSearchStructure}
\SetKwFunction{CreateSlabStructure}{CreateSlabStructure}
\Input{point set $P \subset \R^d$}
\If{$|P| = O(1)$} {
Store $P$ in a list and \Return.
}
$Q \gets$ projection of $P$ onto the subspace $E$ spanned by the
first $k+1$ coordinates.

$\mathcal{T} \gets $ $(k+1)$-dimensional partition tree for $Q$ as
in Theorem~\ref{thm:optpartition}.

\If{$d > k+1$} {
  \ForEach{node $v \in \mathcal{T}$}{
$\Xi_1 \gets $ simplicial partition for the children of $v$

\For{$j \gets  1$ \KwTo $\lfloor r^{1/3}\rfloor$} {

 $D_j \gets$ \CreateSlabStructure{$\Xi_j$}

 $\Xi_{j+1} \gets \Xi_j$ without all simplices inside the
 slab for $D_j$\\
 }
  }
}
 \caption{CreateSearchStructure}
 \label{alg:createpartitionstructure}
\end{alg}

\begin{alg}
  \DontPrintSemicolon
  \SetKwInOut{Input}{Input}
  \SetKwFunction{CreateSearchStructure}{CreateSearchStructure}
  \SetKwFunction{CreateSlabStructure}{CreateSlabStructure}
  \Input{$\Xi_j = (Q_1,\Delta_1),\dots,(Q_{r'},\Delta_{r'})$}

  $V_j \gets$ vertices of the simplices in $\Xi_j$

  For each $(k+1)$-subset $V \subset V_j$, find the
  smallest $w_V > 0$ such that the $w_V$-slab with median
  hyperplane $\aff(V)$ is full.

  Let $w_j$ be the smallest $w_V$; let $S_j$ be the
  corresponding full $w_j$-slab and $\widehat{h}_j = \aff(V)$ its median
  hyperplane.

  Find the set $\mathcal{D}_{j}$ of $r^{2/3}$ simplices
  in $S_j$; let
  $\mathcal{Q}_{j} \gets \bigcup_{\Delta_i \in \mathcal{D}_{j}} Q_i$
  and let $\mathcal{P}_{j}$ be the $d$-dimensional point set
  corresponding to $\mathcal{Q}_{j}$.

  $h_j \gets$ the hyperplane orthogonal to $E$ through $\widehat{h}_j$

  $P' \gets $ projection of $\mathcal{P}_j$ onto $h_j$
  \tcc*[r]{$P'$ is $(d-1)$-dimensional}

  \CreateSearchStructure{$P'$}
  \caption{CreateSlabStructure}
  \label{alg:createslabstructure}
\end{alg}

The slab structure for $v$ is constructed in several iterations.
In iteration $j$, we have a current subset $\Xi_j \subseteq \Xi$
of pairs in the simplicial partition.
For each $(k+1)$-set $v_0, \dots, v_k$ of vertices
of simplices in $\Xi_j$, we determine the smallest
width of a full slab whose median hyperplane is
spanned by $v_0,\dots,v_k$. Let $S_j$ be the smallest among those slabs,
and let $\widehat{h}_j$ be its median hyperplane.
Let $\mathcal{D}_j$ be the $r^{2/3}$ simplices that lie completely
in $S_j$.  We remove $\mathcal{D}_{j}$ and the corresponding point
set $\mathcal{Q}_{j}=\bigcup_{\Delta_i \in \mathcal{D}_{j}} Q_i$
from $\Xi_j$ to obtain $\Xi_{j+1}$. Let
$\mathcal{P}_{j} \subseteq P$ be the
$d$-dimensional point set corresponding to $\mathcal{Q}_{j}$.
We project $\mathcal{P}_{j}$ onto the $d$-dimensional
hyperplane $h_j$ that is orthogonal to $E$ and goes through
$\widehat{h}_j$. We recursively
build a search structure for the $(d-1)$-dimensional projected
point set. The $j$th slab structure $D_j$ at $v$ consists of this
search structure, the hyperplane
$h_j$, and the width $w_j$. This process is repeated until less than
$r^{2/3}$ simplices remain; see
Algorithms~\ref{alg:createpartitionstructure}
and~\ref{alg:createslabstructure} for details.

Denote by $S(n,d)$ the space for a $d$-dimensional search
structure with $n$ points.
The partition tree $\mathcal{T}$ has $O(n)$ nodes, so the
overhead for storing the slabs and partitions is linear.
Thus,
\[
  S(n, d) = O(n) + \sum_{D} S(n_D, d - 1),
\]
where the sum is over all slab structures $D$ and where $n_D$ is the
number of points in the slab structure $D$. Since every
point appears in $O(\log n)$ slab structures, and since the
recursion stops for $d = k+1$, we get
\begin{lemma}
 \label{lem:space}
 The search structure for $n$ points in $d$ dimensions needs space
$O(n\log^{d-k-1} n)$.\qed
\end{lemma}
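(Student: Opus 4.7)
The plan is to prove the bound by induction on the dimension $d$, using the recurrence $S(n,d) = O(n) + \sum_D S(n_D, d-1)$ together with the already-stated observation that each point participates in $O(\log n)$ slab structures.

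First I would handle the base case $d = k+1$. In this case \textsc{CreateSearchStructure} only constructs a $(k+1)$-dimensional partition tree for $Q$ (Theorem~\ref{thm:optpartition}) and does not recurse into any slab structures, since the condition $d > k+1$ fails. Theorem~\ref{thm:optpartition} guarantees a tree of linear size, so $S(n, k+1) = O(n) = O(n \log^{0} n)$, matching the claimed bound.

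For the inductive step, I would assume $S(n', d-1) = O(n' \log^{d-k-2} n')$ for all $n'$ and apply this to each slab structure $D$ in the recurrence. The key estimate is $\sum_D n_D = O(n \log n)$: at each node of $\mathcal{T}$ the $\mathcal{P}_j$ are disjoint subsets of $P$ (the simplices removed in iteration $j$ are deleted before iteration $j+1$), so at any single node the total population across slab structures is at most $n$; summed over the $O(\log_r n) = O(\log n)$ levels of $\mathcal{T}$, every point contributes to at most $O(\log n)$ slab structures, giving the bound. Plugging into the recurrence and using monotonicity of $x \log^{d-k-2} x$ for $x \leq n$,
\[
  S(n,d) \;\leq\; O(n) + \sum_D O\bigl(n_D \log^{d-k-2} n_D\bigr)
  \;\leq\; O(n) + O(\log^{d-k-2} n) \cdot \sum_D n_D
  \;=\; O(n \log^{d-k-1} n),
\]
which completes the induction.

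There is no real obstacle; the only minor subtlety is justifying why $\sum_D n_D$ picks up only a single extra $\log n$ factor rather than something larger. This is where I would be careful to separate the two sources of multiplicity: horizontally within one node the $\mathcal{P}_j$'s are disjoint by construction of \textsc{CreateSlabStructure} (each iteration removes its simplices from $\Xi_j$), and vertically across the partition tree the $O(\log n)$ depth of $\mathcal{T}$ bounds how many ancestors contain a given point. Together these give the factor of $\log n$ that, combined with the inductive hypothesis, produces exactly the claimed $\log^{d-k-1} n$ overhead.
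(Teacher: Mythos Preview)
Your proposal is correct and follows exactly the approach the paper sketches: the paper sets up the recurrence $S(n,d)=O(n)+\sum_D S(n_D,d-1)$, notes that every point lies in $O(\log n)$ slab structures, and that the recursion bottoms out at $d=k+1$; you have simply written out the induction and the horizontal/vertical disjointness argument behind the $O(\log n)$ multiplicity more explicitly than the paper does.
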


\subsection{Processing a Query}

For a query, we are given a distance threshold $\alpha > 0$ and a
$k$-flat $F$. For the recursion, we will need to query
the search structure with a $k$-dimensional
polytope.  We obtain the initial query polytope by
intersecting the flat $F$ with the bounding box of $P$
extended by $\alpha$ in each direction. With  slight abuse of
notation, we still call this polytope $F$.

A query for $F$ and $\alpha$ is processed by using the
slab structures for small enough slabs and by recursing in
the partition tree for the remaining points. Details follow.

Suppose we are at some node $v$ of the partition tree, and
let $j^*$ be the largest integer
with $w_{j^*} \leq (4k+2)\alpha$. For $j = 1, \dots, j^*$, we
recursively query each slab structure $D_j$ as follows: let
$\widetilde{F} \subseteq F$ be the polytope containing the points
in $F$ with distance at most $\alpha + w_j/2$
from $h_j$, and let $F_h$ be the projection of $\widetilde{F}$
onto $h_j$.  We
query the search structure in $D_j$ with $F_h$ and $\alpha$.
Next, we project $F$ onto the
subspace $E$ spanned by the first $k+1$ coordinates.
Let $\mathcal{D}$ be the simplices in $\Xi_{j^*+1}$ with
distance at most $\alpha$ from the projection.  For each simplex
in $\mathcal{D}$, we recursively query the corresponding child in
the partition tree. Upon reaching the bottom of the recursion
(i.e., $|P| = O(1)$), we collect all points within distance
$\alpha$ from $F$ in the set $R$.

\begin{alg}
 \SetKwInOut{Input}{Input}
 \SetKwFunction{Query}{query}
 \SetKwInOut{Output}{Output}
 \Input{polytope $F$, distance threshold $\alpha > 0$}
 \Output{point set $R \subseteq P$}
 $R \gets \emptyset$

 \If{$|P| = O(1)$}{
   $R \gets \{ p \in P \mid d(p, F) \leq \alpha\}$
 } \ElseIf{$d = k+1$ } {
  \label{alg:querylinesegment:d2start}
 Compute polytope $F_{\diamond}$ as described.
 \label{querylinesegment:computerectangle}

 $R \gets R \cup \text{all points of $P$ inside $F_{\diamond}$}$
 \label{alg:querylinesegment:d2end}

 }
 \Else{
  $j^{*} \gets$ the largest integer with
  $w_{j^{*}} \leq (4k+2)\alpha$

 \For{$j \gets 1$ \KwTo $j^*$}{
  \label{alg:querylinesegment:slabqstart}
  $F_h \gets $ projection of $\widetilde{F}$ onto $h_j$ as described

  $R \gets R \cup D_j$.\Query{$F_h$, $\alpha$}
  \label{alg:querylinesegment:queryslabrecursively}
  }

  $\widehat{F} \gets $ projection of $F$ onto the subspace $E$
  spanned by the
  first $k+1$ coordinates

 $\mathcal{D} \gets $ simplices in $\Xi_{j^* + 1}$

  $\mathcal{D}' \gets \{\Delta \in \mathcal{D} \mid
  d(\Delta, \widehat{F}) \leq \alpha\}$

   \ForEach{$\Delta \in \mathcal{D}'$}{
     $R \gets R \cup \text{result of recursive query to partition
       tree node for $\Delta$.}$
  \label{alg:querytrianglerecursive}
    }
 }
 \Return{$R$}\label{alg:querylinesegment:return}
\caption{Find a superset $R$ of all points in $P$ with distance
less than $\alpha$ from a query polytope $F$.}
 \label{alg:querylinesegment}
\end{alg}

If $d=k+1$, we approximate the region of interest by the
polytope
$F_{\diamond} = \{x \in \R^d \mid d_1(x, F) \leq \alpha\}$,
where $d_1(\cdot, \cdot)$ denotes the $\ell_1$-metric
in $\R^d$.
Then, we query the partition tree $\mathcal{T}$ to find all points
of $P$ that lie inside $F_{\diamond}$.
We prove in Lemma~\ref{lem:querytime} that $F_{\diamond}$ is
a polytope with
$O(d^{O(k^2)})$ facets;
see Algorithm~\ref{alg:querylinesegment} for details.
The following two lemmas analyze the correctness and
query time of the algorithm.

\begin{lemma}
\label{lem:distancerror}
 The set $R$ contains all $p \in P$ with  $d(p,F) \leq \alpha$ and
 no $p \in P$ with $d(p,F) > \kappa\alpha$, where
 $\kappa=\lowdimapprox$.
\end{lemma}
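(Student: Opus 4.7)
The plan is to prove the lemma by induction on the ambient dimension $d$ (driven by the slab-structure recursion), nested with induction on $|P|$ at a fixed dimension (driven by the partition-tree recursion), with both completeness and soundness handled together. Writing $\kappa_d = (4k+3)(d-k-1) + \sqrt{k+1}$, the claim becomes that at dimension $d$ the returned set $R$ contains every $p \in P$ with $d(p,F) \leq \alpha$ and no $p$ with $d(p,F) > \kappa_d \alpha$.

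For the two base cases, the situation is straightforward. When $|P| = O(1)$ the algorithm computes the exact Euclidean distance to $F$, so correctness is immediate and no approximation is incurred. When $d = k+1$ the algorithm returns every point of $P$ inside the polytope $F_{\diamond}$; I would verify that $F_{\diamond}$ sandwiches the Euclidean $\alpha$-neighborhood of $F$ between the relevant $\ell_1$/$\ell_\infty$ and $\ell_2$ neighborhoods, so that it contains every point with $d(p,F) \leq \alpha$ and is contained in the $\sqrt{k+1}\alpha$-Euclidean-neighborhood of $F$ in $\R^{k+1}$. This accounts for the additive $\sqrt{k+1}$ in $\kappa_d$.

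For the inductive step, I would analyze the slab calls and the partition-tree recursion separately. For completeness, given $p \in P$ with a witness $q \in F$ satisfying $d(p,q) \leq \alpha$: if $\text{proj}_E(p)$ lies inside some processed slab $S_j$, then $p$ lies in $D_j$; using $d(p, h_j) \leq w_j/2$ and the triangle inequality $d(q, h_j) \leq w_j/2 + \alpha$, the point $q$ belongs to $\widetilde{F}$, so $\text{proj}_{h_j}(q) \in F_h$, and since orthogonal projection is a contraction we get $d(\text{proj}_{h_j}(p), F_h) \leq \alpha$; the inductive hypothesis in dimension $d-1$ then returns $p$. Otherwise $\text{proj}_E(p)$ lies in a simplex of $\Xi_{j^*+1}$ whose distance to $\widehat{F}$ is at most $\alpha$, so this simplex is in $\mathcal{D}'$, and the recursive partition-tree call locates $p$ by the $|P|$-induction.

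For soundness, take any $p \in R$. The base cases and the partition-tree recursion preserve the bound $d(p,F) \leq \kappa_d \alpha$ directly (the latter because both $d$ and $\alpha$ are unchanged). When $p$ is returned from a slab call on $D_j$, the inductive hypothesis in dimension $d-1$ yields $q' \in F_h$ with $\|\text{proj}_{h_j}(p) - q'\| \leq \kappa_{d-1}\alpha$, and $q' = \text{proj}_{h_j}(q)$ for some $q \in \widetilde{F} \subseteq F$. Decomposing $p - q$ into its components along $h_j$ and along its normal, and using $d(p, h_j) \leq w_j/2 \leq (2k+1)\alpha$ together with $d(q, h_j) \leq w_j/2 + \alpha \leq (2k+2)\alpha$, the triangle inequality gives
\[
  d(p, F) \leq d(p, q) \leq \|\text{proj}_{h_j}(p) - q'\| + d(p, h_j) + d(q, h_j) \leq \kappa_{d-1}\alpha + (4k+3)\alpha = \kappa_d \alpha,
\]
which closes the induction. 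The main obstacle is the arithmetic bookkeeping in this soundness step: one must combine the threshold $w_{j^*} \leq (4k+2)\alpha$, the additional $+\alpha$ thickening in the definition of $\widetilde{F}$, and the possibility that $p$ and $q$ lie on opposite sides of $h_j$ to obtain the clean increment $(4k+3)\alpha$ per recursion level. Once this constant is pinned down, the rest is routine use of the triangle inequality and of the fact that orthogonal projections are contractions.
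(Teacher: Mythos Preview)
Your proposal is correct and follows essentially the same argument as the paper: nested induction on $d$ and $|P|$, the $\ell_1$/$\ell_2$ comparison for the $d=k+1$ base case, contraction of orthogonal projections for completeness, and a three-term triangle-inequality bound (projection distance plus the two offsets from $h_j$) for soundness, yielding the same increment $(4k+3)\alpha$ per slab level. Your completeness argument is in fact slightly more explicit than the paper's in verifying that the witness $q$ lands in $\widetilde{F}$.
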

\begin{proof}
  The proof is  by induction on the size $n$ of $P$ and on
  the dimension $d$.
  If $n = O(1)$, we return all points with distance at most
  $\alpha$ to $F$. If $d=k+1$, we report the
  points inside the polytope
  $F_\diamond$
  (lines~\ref{alg:querylinesegment:d2start}--\ref{alg:querylinesegment:d2end})
  using $\mathcal{T}$.
  Since $\|x\|_2 \leq \|x\|_1 \leq \sqrt{k+1}\|x\|_2$
  holds for all $x\in \R^{k+1}$, the polytope $F_\diamond$ contains
  all points with distance at most $\alpha$ from $F$ and no point
  with distance more than $\alpha\sqrt{k+1}$ from $F$. Thus,
  correctness also follows in this case.

  \begin{figure}[htbp]
    \centering
    \includegraphics[scale=0.6]{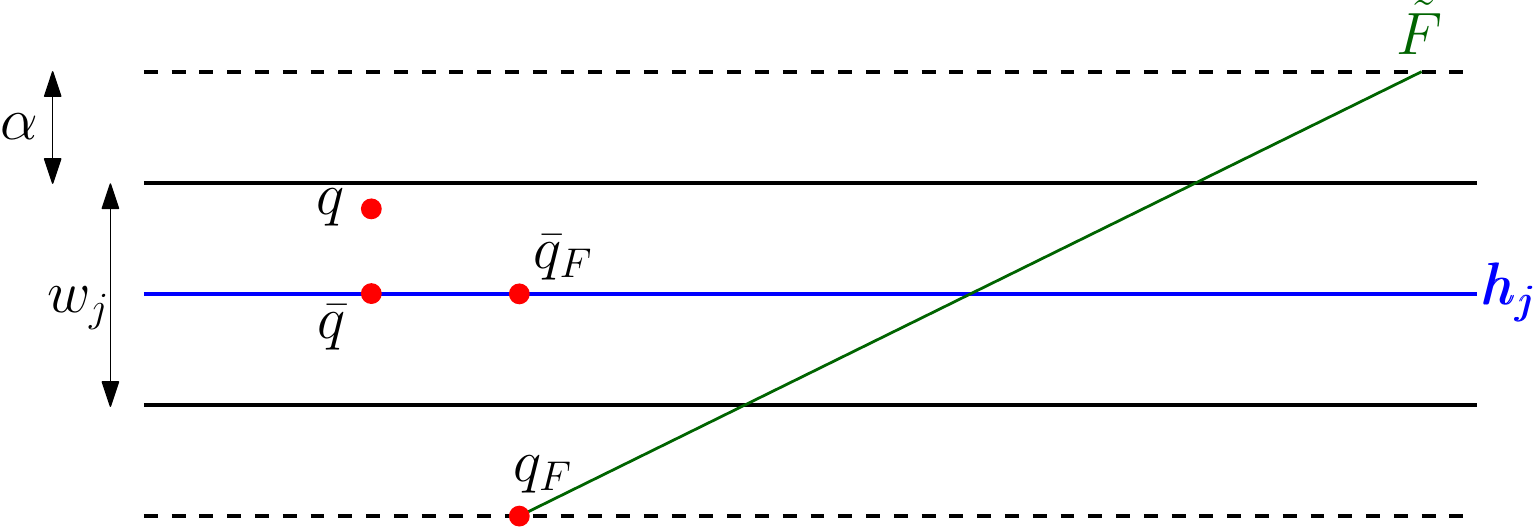}
    \caption{The distance error due to the reduction of the dimension
      in the slab structure $D_j$.}
    \label{fig:distanceerror}
  \end{figure}

  In the general case  ($d > k+1$ and $n$ not constant), we prove
  the lemma individually for the slab structures and for the
  partition tree.  Let $D_j$ be a slab
  structure and $P_j$ the corresponding $d$-dimensional point set.
  Fix some point $p \in P_j$ with $d(p, F) \leq \alpha$. To
  query $D_j$, we take the subpolytope $\widetilde{F} \subseteq F$
  with distance at most $\alpha + w_j/2$ from the
  median hyperplane $h_j$, and we project it onto $h_j$. Let
  $F_h$ be this projection.
  Since
  orthogonal projections can only
  decrease distances, we have
  $p \in D_j$\texttt{.query(}$F_h$, $\alpha$\texttt{)} by induction.
  Now fix a point $q \in D_j$\texttt{.query(}$F_h$, $\alpha$\texttt{)}.
  We must argue that $d(q, F_h) \leq \lowdimapprox$.
  Let $\bar{q}\in\R^{d-1}$ be the projection of $q$ onto $h_j$
  and $\bar{q}_{F} \in F_h$ the closest point to
  $\bar{q}$ in $F_h$.
  Let $q_F \in \widetilde{F}$ be the corresponding $d$-dimensional
  point (see Fig.~\ref{fig:distanceerror}).
  By triangle inequality and the induction hypothesis,
  \begin{align*}
      d(q,F) &=
        d(q,\widetilde{F}) \leq d(q, \bar{q}) + d(\bar{q}, \bar{q_F}) +
         d(\bar{q_F}, q_F) \\
      & \leq w_j/2 + ((4k+3) (d-k-2) + \sqrt{k+1}) \alpha +
         (\alpha + w_j/2).
  \end{align*}
  By construction, we have $w_j \leq (4k+2)\alpha$, so
  $d(q,F) \leq (\lowdimapprox)\alpha$, as claimed.

  Consider now a child in the partition tree queried in
  line~\ref{alg:querytrianglerecursive}, and let $P_j$ be the
  corresponding $d$-dimensional point set. Since
  $|P_j|  < |P|$, the claim follows by induction.
\end{proof}

\begin{lemma}
  \label{lem:querytime}
  The query time is $O(n^{k/(k+1)}\log^{d-k-1} n + |R|)$.
\end{lemma}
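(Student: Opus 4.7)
The plan is induction on the ambient dimension $d$, with the base case $d = k+1$ and an inductive step that reduces $d$ by one through the slab structures attached to each node of $\mathcal{T}$.

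\textbf{Base case $d = k+1$.} Here the algorithm replaces the $\alpha$-tube around $F$ by the polytope $F_{\diamond} = \{x \in \R^{k+1} \mid d_1(x, F) \leq \alpha\}$ and reports the points of $P$ inside $F_{\diamond}$ via the $(k+1)$-dimensional partition tree $\mathcal{T}$. The first step is to verify that $F_{\diamond}$ has $d^{O(k^2)}$ facets: $F$ itself is a $k$-dimensional convex polytope with $O(d)$ facets (intersection of the query flat with the enlarged bounding box), and its $\ell_1$-Minkowski sum with an $\alpha$-scaled cross-polytope can be bounded by a direct face-enumeration argument. Each facet is a hyperplane in $\R^{k+1}$ and, by Theorem~\ref{thm:optpartition}(v), crosses $O(n^{k/(k+1)})$ simplices of $\mathcal{T}$. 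Since there are only constantly many facets in the low-dimensional regime, the algorithm visits $O(n^{k/(k+1)})$ nodes of $\mathcal{T}$ and outputs $|R|$ points, matching the claim.

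\textbf{Inductive step $d > k+1$.} Let $T(n,d)$ denote the query time. I split $T(n,d)$ into two contributions: the partition-tree traversal inside $\mathcal{T}$, and the work inside the slab structures, each of which is a $(d{-}1)$-dimensional search structure to which the induction hypothesis applies. The technical heart is to show that at every visited node $v$ of $\mathcal{T}$, the recursion set $\mathcal{D}'$ has size less than $r^{2/3}$. This is where Lemma~\ref{lem:pert} enters: in the $(k+1)$-dimensional subspace $E$, the projection $\widehat{F}$ is a hyperplane, and any simplex in $\mathcal{D}'$ lies in a slab of width $2\alpha$ around $\widehat{F}$. If $r^{2/3}$ simplices of $\Xi_{j^*+1}$ lay in this slab, then their vertex set would lie within distance $\alpha$ of $\widehat{F}$; Lemma~\ref{lem:pert}, applied in $E$ (where a $k$-flat is exactly a hyperplane), would yield a hyperplane through $k+1$ of these vertices whose $(4k+2)\alpha$-slab contains them all, contradicting the definition of $j^*$ and the greedy construction of the slab structures.

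\textbf{Putting the pieces together.} Unrolling the partition-tree recursion with the per-node branching bound $|\mathcal{D}'| < r^{2/3}$ and combining it with Theorem~\ref{thm:optpartition}(v) applied to the hyperplane $\widehat{F}$ in the $(k+1)$-dimensional partition tree (which independently limits, at each level, the total simplices crossing $\widehat{F}$), the total number of visited nodes across $\mathcal{T}$ is bounded by $O(n^{k/(k+1)})$, contributing $O(n^{k/(k+1)})$ to the traversal overhead. For the slab queries, each visited node launches at most $r^{1/3}$ substructure queries in dimension $d-1$, the total point count across slab structures at any fixed level of $\mathcal{T}$ is at most $n$, and concavity of $x \mapsto x^{k/(k+1)}$ together with the induction hypothesis bounds their combined cost by $O(n^{k/(k+1)} \log^{d-k-2} n \cdot \log n) = O(n^{k/(k+1)} \log^{d-k-1} n)$, picking up precisely one extra logarithmic factor per dimension drop. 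Adding the $|R|$ reporting term yields the stated bound. The main obstacle will be reconciling the two simultaneous branching bounds — the slab-absence bound $r^{2/3}$ coming from Lemma~\ref{lem:pert} and the hyperplane-crossing recurrence of Theorem~\ref{thm:optpartition}(v) — and carefully tracking how the recursion between $\mathcal{T}$ and the slab structures accumulates exactly $d-k-1$ logarithmic factors.
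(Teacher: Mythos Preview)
Your architecture matches the paper's: induction on $d$, base case $d=k+1$ via $F_\diamond$ and the partition tree, and for $d>k+1$ a split of the visited children into those crossing the slab boundary (handled by Theorem~\ref{thm:optpartition}(v)) versus those contained in the $2\alpha$-slab $S$ around $\widehat F$ (handled by the slab construction), with the slab substructures picking up one $\log n$ factor per dimension drop. Your explicit appeal to Lemma~\ref{lem:pert} to explain \emph{why} few simplices of $\Xi_{j^*+1}$ can lie inside $S$ is exactly the content the paper hides behind the phrase ``by construction.''

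There is a real gap, though, in two connected places. First, your sentence ``any simplex in $\mathcal{D}'$ lies in a slab of width $2\alpha$ around $\widehat F$'' is false: $\mathcal{D}'$ consists of simplices with distance at most $\alpha$ to $\widehat F$, and a large simplex can meet the slab without being contained in it. The paper therefore splits visited nodes into $V_B$ (crossing $\partial S$) and $V_C$ (contained in $S$); your Lemma~\ref{lem:pert} argument only bounds the $V_C$ part. Second, and more seriously, even granting the split, the per-node bound of $r^{2/3}$ on $V_C$-children that you derive does \emph{not} give $O(n^{k/(k+1)})$ visited nodes for small $k$: over $\log_r n$ levels the $V_C$ branching alone contributes $(r^{2/3})^{\log_r n}=n^{2/3}$, which already exceeds $n^{1/2}$ when $k=1$ and matches $n^{2/3}$ when $k=2$. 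The paper's recurrence instead uses the bound $r^{1/3}$ on the number of $V_C$-children per visited node, and it is precisely this smaller exponent (note $1/3 < k/(k+1)$ for all $k\ge 1$) that lets the recurrence
\[
m_\ell = O\bigl(r^{\ell k/(k+1)} + (r^{(k-1)/k}+r^{1/3})\,m_{\ell-1} + r\ell\log r\log n\bigr)
\]
solve to $m_\ell=O(r^{\ell k/(k+1)})$. You need to either justify the stronger $r^{1/3}$ bound or rework the recurrence; the $r^{2/3}$ bound from Lemma~\ref{lem:pert} alone is not enough to reach the stated query time for $k\le 2$.
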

\begin{proof}
  Let $Q(n,d)$ be the total query time.

  First, let $d > k+1$.
  We bound the time to query the partition tree $\mathcal{T}$.
  Let $\widehat{F}$ be the projection of $F$ onto $E$.
  Furthermore, let $V$ be the set of nodes in $\mathcal{T}$ that
  are visited during a query, and let $\mathcal{D}$ be the
  corresponding
  simplices. By construction, all simplices in $\mathcal{D}$
  have distance at most $\alpha$ from $\widehat{F}$.
  Consider the $2\alpha$-slab $S$ whose
  median hyperplane contains $\widehat{F}$.
  We partition $V$ into two sets: the nodes
  $V_B$ whose simplices intersect $\partial S$, and the nodes
  $V_C$ whose simplices lie completely in $S$.
  First, since the simplex for each node in $\mathcal{T}$ is
  contained in the simplex for its parent node, we observe that
  $V_B$ constitutes a connected subtree of $\mathcal{T}$, starting
  at the root.  The nodes of $V_C$ form several connected subtrees,
  each hanging off a node in $V_B$. Furthermore, by construction,
  each node from $V$ has at most $r^{1/3}$ children from $V_B$.
  Let $V_\ell$ be the set of nodes in $V$ with level $\ell$,
  for $\ell = 0, \dots, \log_r n$,
  and let $m_\ell = |V_\ell|$.
  By Theorem~\ref{thm:optpartition}, we have
  $|V_\ell \cap V_B| \leq
  O(r^{\ell k/(k+1)} + r^{(k-1)/k}m_{\ell - 1} +
  r\ell \log r \log n)$. Since
  $|V_\ell \cap V_C| \leq r^{1/3} m_{\ell - 1}$,
  we get
  \[
    m_\ell = |V_\ell| =
    O\left(r^{\ell k/(k+1)} + (r^{(k-1)/k} + r^{1/3})m_{\ell-1} +
    r\ell \log r \log n\right).
  \]
  For any $\delta > \max(0, 1/3-(k-1)/k)$, if we choose
  $r$ large enough depending on $\delta$, this solves to
  $m_\ell = O(r^{\ell k/(k+1)} +
  r^{\ell ((k-1)/k + \delta)}\log n)$.
  Thus, we get
  \begin{equation}\label{equ:qrec}
    Q(n, d) =
    \sum_{\ell = 0}^{\log_r n} O(r^{\ell k/(k+1)} +
  r^{\ell ((k-1)/k + \delta)}\log n)(O(r)  + Q(n/r^{\ell}, d-1)).
  \end{equation}

  \newcommand{\magicnumber}{((2k+2)(5d_0 - 2k)^{k/2})^{(k+1)/2}}
  For $d = k+1$, we use $\mathcal{T}$ directly.
  Thus, by Theorem~\ref{thm:optpartition}, the query time $Q(n, k+1)$
  is $O(f_{k+1}n^{1-1/k} + |R_F|)$, where
  $f_{k+1}$ is the number of facets of $F_\Diamond$ and
  $R_F$ is the answer set.
  We claim that
  $f_{k+1}$ is bounded by $\magicnumber$.
  Recall that $F_{\diamond}$ is the Minkowski sum of $F$ and
  the $\ell_1$-ball with radius
  $\alpha$ as in Algorithm~\ref{alg:querylinesegment}
  line~\ref{querylinesegment:computerectangle}.
  Initially $F$ is the intersection of the query $k$-flat with the
  extended bounding box of $P$. This intersection can
  be described by at most
  $d_0-k+2d_0=3d_0-k$ oriented half-spaces, where $d_0$ denotes
  the initial dimension. In each recursive
  step, we intersect $F$ with the $2$ bounding hyperplanes of a slab.
  Therefore, the descriptive complexity of $F$ in the base case is
  at most $5d_0-2k$. By duality and the Upper Bound
  theorem~\cite{Matousek02}, the $\mathcal{V}$-description of
  $F$ consists of at most $(5d_0 - 2k)^{k/2}$ vertices. Using
  that the Minkowski sum of two
  polytopes with $v_1$ and $v_2$ vertices has at most
  $v_1v_2$ vertices, we deduce that $F_\diamond$ has at
  most $(2k+2)(5d_0 - 2k)^{k/2}$ vertices.
  Applying the upper bound theorem again, it follows that
  $f_{k+1} = \magicnumber$, as claimed.

  Thus, plugging the base case into (\ref{equ:qrec}),
  we get that the overall query time $Q(n,d)$ is  bounded by
  $O(n^{k/(k+1)}\log^{d-k-1} + |R|)$.
\end{proof}

Theorem~\ref{thm:lowdimstructure} follows immediately from
Lemmas~\ref{lem:space}, \ref{lem:distancerror}, and
\ref{lem:querytime}.

\subsection{Approximate $k$-Flat Nearest Neighbor Queries}
\label{sec:approximateNNlowdim}
We now show how to extend our
data structure from Section~\ref{sec:lowdimpreproc}
for approximate $k$-flat nearest neighbor queries with
multiplicative error $\lowdimapprox$.
That is, given an $n$-point set
$P \subset \R^d$, we want to find for any
given query flat $F \subset \R^d$
a point $p \in P$ with $d(p,F) \leq (\lowdimapprox) d(P,F)$.
We reduce this problem to a near neighbor query by choosing an
appropriate
threshold $\alpha$ that ensures
$|R| = O(\sqrt{n})$, using random sampling.
For preprocessing we build the data structure $D$ from
Theorem~\ref{thm:lowdimstructure} for $P$.

Let a query flat $F$ be given.
The \emph{$F$-rank} of a point $p \in P$ is the number of
points in $P$ that are closer to $F$ than $p$.
Let $X \subseteq P$ be a random sample obtained by
taking each point in $P$ independently with probability $1/\sqrt{n}$.
The expected size of $X$ is $\sqrt{n}$, and if $x \in X$ is the
closest point
to $F$ in $X$, then the expected $F$-rank of $x$ is $\sqrt{n}$. Set
$\alpha = d(x,F)/(\lowdimapprox)$. We query
$D$ with $F$ and $\alpha$ to obtain a set $R$.
If $d(P,F) \leq \alpha$, then
$R$ contains the nearest neighbor. Otherwise, $x$ is a
$(\lowdimapprox)$-approximate nearest neighbor for $F$.
Thus, it suffices to return the nearest neighbor in $R \cup \{x\}$.
Since with high probability all
points in $R$ have $F$-rank at most $O(\sqrt{n}\log n)$,
we have $|R| = O(\sqrt{n}\log n)$, and the query time is
$O(n^{k/(k+1)}\log^{d-k-1} n)$.
This establishes the following corollary of
Theorem~\ref{thm:lowdimstructure}.
\begin{corollary}
\label{cor:lowdimapprox}
Let $P \subset \R^d$ be an $n$-point set.
We can preprocess $P$ into an
$O(n\log^{d-k-1} n)$ space data structure
for approximate $k$-flat nearest neighbor queries:
given a flat $F$, find a point
$p \in P$ with $d(p,F) \leq (\lowdimapprox) d(P,F)$.
The expected query time is
$O(n^{k/(k+1)}\log^{d-k-1} n)$.
\end{corollary}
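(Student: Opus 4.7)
The plan is to follow the reduction already sketched in the paragraph preceding the corollary, fleshing out the random-sampling analysis. For preprocessing, I simply build the near-neighbor data structure $D$ of Theorem~\ref{thm:lowdimstructure} on $P$; the space bound $O(n\log^{d-k-1}n)$ is inherited directly. To answer a query $F$, I draw a random sample $X\subseteq P$ by including each point independently with probability $1/\sqrt{n}$, find by brute force the closest point $x\in X$ to $F$, set $\alpha = d(x,F)/\kappa$ with $\kappa=\lowdimapprox$, and query $D$ on $(F,\alpha)$ to obtain a set $R$. I return the point of $R\cup\{x\}$ nearest to $F$.

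Correctness is a clean case distinction. Let $p^{*}\in P$ be a true nearest neighbor. If $d(p^{*},F)\leq \alpha$, then Theorem~\ref{thm:lowdimstructure} guarantees $p^{*}\in R$, and moreover every point returned by $D$ has distance at most $\kappa\alpha = d(x,F)$ from $F$; the returned point is therefore at least as close as $p^{*}$, yielding an exact nearest neighbor. If instead $d(p^{*},F)>\alpha$, then $d(x,F)=\kappa\alpha<\kappa\, d(P,F)$, so $x$ itself is a $\kappa$-approximate nearest neighbor.

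The main technical point is bounding $|R|$. Since Theorem~\ref{thm:lowdimstructure} reports only points within distance $\kappa\alpha=d(x,F)$ from $F$, we have $R\subseteq\{p\in P\mid d(p,F)\leq d(x,F)\}$, whose cardinality equals the $F$-rank of $x$. I need to show this rank is $O(\sqrt{n}\log n)$ with high probability. Order the points of $P$ by distance to $F$ as $p_{1},p_{2},\dots,p_{n}$. The $F$-rank of the closest sample exceeds $c\sqrt{n}\log n$ exactly when none of $p_{1},\dots,p_{\lceil c\sqrt{n}\log n\rceil}$ are sampled, which occurs with probability at most
\[
\left(1-\tfrac{1}{\sqrt{n}}\right)^{c\sqrt{n}\log n}\leq n^{-c}.
\]
Taking $c$ a suitable constant gives $|R|=O(\sqrt{n}\log n)$ with high probability, and also bounds the expected rank by $O(\sqrt{n})$. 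Sampling $X$ and the brute-force scan on $X$ cost $O(\sqrt{n})$ expected time; the query to $D$ costs $O(n^{k/(k+1)}\log^{d-k-1}n+|R|)$ by Theorem~\ref{thm:lowdimstructure}. Since $k\geq 1$ implies $\sqrt{n}\log n=O(n^{k/(k+1)}\log^{d-k-1}n)$, everything collapses into the claimed expected query time $O(n^{k/(k+1)}\log^{d-k-1}n)$.

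The only substantive obstacle is the sampling lemma in the previous paragraph; everything else is routine bookkeeping. Care must also be taken to observe that the approximation guarantee of $D$ is a \emph{multiplicative} one on distances (points returned have distance at most $\kappa\alpha$), which is precisely what lets the two cases of the correctness analysis fit together without further loss in the approximation factor.
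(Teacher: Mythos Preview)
Your proposal is correct and follows exactly the approach the paper takes: sample $X$ with probability $1/\sqrt n$, set $\alpha=d(x,F)/\kappa$ for the closest sample point $x$, query the near-neighbor structure of Theorem~\ref{thm:lowdimstructure}, and return the best of $R\cup\{x\}$. You have in fact supplied more detail than the paper does---the explicit tail bound $(1-1/\sqrt n)^{c\sqrt n\log n}\le n^{-c}$ for the $F$-rank is only asserted in the paper---and your two-case correctness argument matches the paper's verbatim.
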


\section{Projection Structures}
\label{sec:projectionstructures}

We now describe how to answer queries of type
\textbf{Q1} and \textbf{Q3} efficiently. Our approach is
to project the points into  random subspace of constant dimension
and to solve the problem there using our data structures from
Theorem~\ref{thm:lowdimstructure} and
Corollary~\ref{cor:lowdimapprox}.
For this, we need a Johnson-Lindenstrauss-type lemma that bounds the
distortion, see Section~\ref{sec:dimReduct}.

Let $0 < t \leq 2/(2+40k)$ be a parameter
and let $P \subset R^d$ be a high dimensional $n$-point set.
Set $d' = 2/t + 2$ and
let $M \in \R^{d'\times d}$ be a random projection from
$\R^{d}$ to $\R^{d'}$, scaled by $\sqrt{d/4d'}$.
We obtain $\bar{P} \subset \R^{d'}$ by projecting $P$ using $M$.
We build for $\bar{P}$ the
data structure $D_1$ from Corollary~\ref{cor:lowdimapprox} to answer
\textbf{Q1} queries and $D_2$ from Theorem~\ref{thm:lowdimstructure}
to answer \textbf{Q3} queries.
This needs $O(n \log^{O(d')}n) = O(n\log^{O(1/t)}n)$ space.
For each $p \in P$ we write $\bar{p}$ for the
$d'$-dimensional point $Mp$ and $\bar{F}$ for the projected
flat $MF$.

\subsection{Dimension Reduction}\label{sec:dimReduct}

We use the following variant of the Johnson-Lindenstrauss-Lemma,
as proved by Dasgupta and Gupta~\cite[Lemma~2.2]{DasguptaGu03}.

\begin{lemma}[JL-Lemma]\label{lem:jl}
  Let $d' < d$, and let $M \in \R^{d'\times d}$ be the projection
  matrix onto a random $d'$-dimensional subspace, scaled by a factor
  of $\sqrt{d/d'}$.
  Then, for every vector $x \in \R^d$ of unit length and every
  $\beta > 1$, we have
  \begin{enumerate}
    \item\label{itm:jl-up}
      $\Pr\left[\|Mx\|^2 \geq \beta\right] \leq
      \exp\left(\tfrac{1}{2}d'  (1 - \beta + \ln \beta)\right)$, and
    \item\label{itm:jl-low}
      $\Pr\left[\|Mx\|^2 \leq 1/\beta\right] \leq
      \exp\left(\tfrac{1}{2}d'(1 - 1/\beta - \ln \beta)\right)
      \leq \exp\left(\tfrac{1}{2}d'(1 - \ln \beta)\right)$.\qed{}
  \end{enumerate}
\end{lemma}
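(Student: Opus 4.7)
The plan is to follow the standard Chernoff-bound argument of Dasgupta and Gupta. The first step is to exploit rotational symmetry: projecting a fixed unit vector $x$ onto a uniformly random $d'$-dimensional subspace has the same distribution as projecting a uniformly random unit vector of $\R^d$ onto the fixed subspace spanned by the first $d'$ coordinate axes. Hence I would replace $Mx$ by the $d'$-truncation of $Y/\|Y\|$, where $Y = (Y_1, \dots, Y_d)$ has i.i.d.\ standard Gaussian entries. After the prescribed $\sqrt{d/d'}$ rescaling, the quantity of interest becomes
\[
  \|Mx\|^2 \;=\; \frac{d}{d'}\cdot\frac{Y_1^2 + \cdots + Y_{d'}^2}{Y_1^2 + \cdots + Y_d^2}.
\]

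Next, I would turn each tail event into a Chernoff/Markov bound. For the upper tail, $\Pr[\|Mx\|^2 \ge \beta]$ equals the probability that $(1 - \beta d'/d)\sum_{i=1}^{d'} Y_i^2 - (\beta d'/d)\sum_{i=d'+1}^{d} Y_i^2 \ge 0$. Applying Markov to $\exp(s\,\cdot)$ with a free parameter $s > 0$ and using independence together with the one-dimensional chi-squared MGF $\mathbb{E}[\exp(t Y_i^2)] = (1 - 2t)^{-1/2}$ (valid for $t < 1/2$) bounds the probability by
\[
  \bigl(1 - 2s(1 - \beta d'/d)\bigr)^{-d'/2} \cdot \bigl(1 + 2s\beta d'/d\bigr)^{-(d-d')/2}.
\]

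Finally, I would optimize over $s$: differentiating the logarithm of this expression in $s$ and setting it to zero produces the balancing choice that depends on $\beta$ and $d'/d$; substituting it back and simplifying collapses the bound to exactly $\exp\bigl(\tfrac{1}{2}d'(1 - \beta + \ln\beta)\bigr)$, which is the first inequality. For the lower tail, the same strategy with $s < 0$ (equivalently, expanding $\Pr[\|Mx\|^2 \le 1/\beta]$ with the inequality reversed and repeating the MGF computation) yields, after optimization, $\exp\bigl(\tfrac{1}{2}d'(1 - 1/\beta - \ln\beta)\bigr)$; the weaker second form stated in the lemma then follows by dropping the non-positive term $-1/\beta$.

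The main obstacle is purely algebraic bookkeeping: one must keep the MGF arguments inside the convergence region $(\cdot) < 1/2$, verify that the optimal $s$ respects these constraints (using $\beta > 1$ and $d' < d$, so that $1 - \beta d'/d$ may have either sign), and reduce the product of square roots, together with the $d$-versus-$d'$ ratios, to the clean logarithmic expression $1 - \beta + \ln\beta$. No conceptual difficulty arises beyond this calculus exercise, which is why the authors simply cite Dasgupta and Gupta rather than reproduce it.
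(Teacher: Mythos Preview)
Your proposal is correct and faithfully reproduces the Dasgupta--Gupta Chernoff argument that the paper cites; the paper itself offers no proof beyond the citation and the \qed, so there is nothing to compare against except the cited source, with which your sketch agrees.
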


\begin{lemma}\label{lem:jl-kflat}
  Let $p \in \R^d$ be a point and let $F \subset \R^d$ be a $k$-flat.
  For $d' \in \{40k, \dots, d-1\}$, let $M \in \R^{d' \times d}$ be
  the projection matrix into a random $d'$-dimensional
  subspace, scaled by $\sqrt{d/4d'}$. Let $\bar{p} = Mp$ and
  $\bar{F} = MF$
  be the projections of $p$ and of $F$, respectively.
  Then, for any $\beta \geq 40k$,
     (i)
         $\Pr[\dist(\bar{p}, \bar{F}) \leq \dist(p, F)] \geq 1 - e^{-d'/2}$; and (ii)
         $\Pr[\dist(\bar{p}, \bar{F}) \geq \dist(p, F) /\beta] \geq 1 - \beta^{-d'/2}$.
\end{lemma}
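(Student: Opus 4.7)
For part~(i), let $q \in F$ realize $\dist(p, F)$ and set $u = (p-q)/\|p-q\|$. Since $\bar q = Mq \in \bar F$, we have $\dist(\bar p, \bar F) \leq \|\bar p - \bar q\| = \|p-q\|\cdot\|Mu\|$, so it suffices to prove $\|Mu\| \leq 1$ with probability at least $1 - e^{-d'/2}$. The rescaled matrix $2M$ is a random $d'$-dimensional projection scaled by $\sqrt{d/d'}$, exactly matching the hypothesis of Lemma~\ref{lem:jl}. Applying its part~(1) to the unit vector $u$ with JL-parameter $4$ then gives $\Pr[\|2Mu\|^2 \geq 4] \leq \exp\bigl((d'/2)(1 - 4 + \ln 4)\bigr) \leq e^{-d'/2}$, using $3 - \ln 4 \geq 1$.

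For part~(ii), let $V \subseteq \R^d$ be the $k$-dimensional direction space of $F$ and let $V' = V + \R u$, a $(k+1)$-dimensional linear subspace containing both $V$ and the normal direction $u$. Writing $\bar F = \bar q + MV$, we obtain
\[
  \dist(\bar p, \bar F) \;=\; \min_{v \in V} \|M(p-q) - Mv\| \;=\; \|p - q\|\cdot \min_{v \in V} \|M(u - v)\|.
\]
Because $u \perp V$, every element $u-v \in V'$ has norm $\sqrt{1 + \|v\|^2} \geq 1$, so $\|M(u - v)\| \geq \sigma_{\min}(M|_{V'})\cdot \|u-v\| \geq \sigma_{\min}(M|_{V'})$. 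Hence it suffices to establish $\Pr[\sigma_{\min}(M|_{V'}) \geq 1/\beta] \geq 1 - \beta^{-d'/2}$.

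This singular-value estimate follows from a standard $\delta$-net argument on the unit sphere $S \subset V'$. Take a $\delta$-net $N \subseteq S$ with $|N| \leq (3/\delta)^{k+1}$. For each $w \in N$, apply Lemma~\ref{lem:jl}(2) to $2Mw$ with JL-parameter $\gamma = \Theta(\beta^2)$ to obtain $\|Mw\| \geq 1/(2\sqrt{\gamma}) = \Theta(1/\beta)$ with pointwise failure probability $\exp((d'/2)(1 - \ln\gamma))$; simultaneously, apply part~(1) to bound $\|Mw\| \leq 1$ pointwise. After a union bound over $N$, both inequalities hold on the whole net. A routine extension via the triangle inequality then yields $\|M|_{V'}\|_{\mathrm{op}} \leq 1/(1 - \delta)$ and $\sigma_{\min}(M|_{V'}) \geq \Theta(1/\beta) - \delta/(1-\delta)$, which exceeds $1/\beta$ after choosing, say, $\delta = 1/(4\beta)$ and $\gamma = \beta^2/16$.

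The main obstacle is bookkeeping rather than a new idea. The net contributes a factor $(3/\delta)^{k+1} = \beta^{\Theta(k)}$ to the union bound, and each pointwise lower-bound failure has probability roughly $\beta^{-d'}$ multiplied by a constant factor like $(16 e)^{d'/2}$ that comes from the specific choice of $\gamma$. The hypotheses $d' \geq 40k$ and $\beta \geq 40k$ are precisely what is needed so that $d'/2 - O(k+1)$ dominates the exponents on $\beta$ and absorbs these constants, leaving the claimed $\beta^{-d'/2}$ failure probability with room to spare; any reasonable choice of $\delta$ and $\gamma$ of the orders indicated goes through.
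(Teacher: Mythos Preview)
Your argument for part~(i) is correct and coincides with the paper's: both project the foot of the perpendicular and apply Lemma~\ref{lem:jl}(1) with parameter~$4$.

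For part~(ii) you take a genuinely different route. The paper does \emph{not} bound $\sigma_{\min}(M|_{V'})$; instead it lays down a lattice in $F$ with spacing $\|p^\perp\|/\beta^2$, shows that every projected lattice point stays at distance at least $3\|p^\perp\|/\beta$ from $\bar p$ (a union bound over infinitely many points, with layer-wise decaying failure probabilities), and separately shows that the projected basis vectors expand by at most a factor $\beta/\sqrt{k}$, so every point of $\bar F$ is within $\|p^\perp\|/\beta$ of a projected lattice point. Your reduction to a singular-value estimate via a net on the unit sphere of $V'$ is cleaner and more standard.

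However, the constants are not ``bookkeeping with room to spare''; they are tight in the paper's proof and your suggested parameters actually fail. Two concrete problems:
\begin{itemize}
\item Your operator-norm step union-bounds $\Pr[\|Mw\|>1]\le e^{-d'/2}$ over the \emph{fine} net of size $(12\beta)^{k+1}$. At $k=1$, $\beta=d'=40$ this gives roughly $480^2\cdot e^{-20}\approx 5\times 10^{-4}$, whereas the target is $\beta^{-d'/2}=40^{-20}\approx 10^{-32}$. You would need to bound the operator norm by a larger constant using a \emph{separate} coarse net, and even then the slack is marginal.
\item With $\gamma=\beta^2/16$, the per-point lower-tail failure is $(16e)^{d'/2}\beta^{-d'}$; since $16e\approx 43.5>40$, the factor $(16e/\beta)^{d'/2}$ is at least $1$ when $\beta=40$, so the fine-net union bound cannot be pushed below $\beta^{-d'/2}$ regardless of $d'$.
\end{itemize}
The paper's lattice trick avoids the first issue entirely: its ``upper-tail'' step only asks that $k$ fixed vectors do not expand by more than $\beta/\sqrt{k}$, which fails with probability $\exp(-\Theta(\beta^2 d'/k))$, utterly negligible. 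Your sphere-net approach can very likely be salvaged with a more careful two-scale net and smaller $\gamma$ (say $\gamma\approx\beta^2/5$), but this requires a real computation, not a wave of the hand; as written, part~(ii) has a gap.
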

\begin{proof}
  Let $N = 2M$, and set $q = Np$ and $K = NF$.
  Defining $\Delta_p = \dist(p, F)$ and $\Delta_{q} = \dist(q, K)$,
  we must bound the probabilities
  $\Pr[\Delta_{q} \leq 2\Delta_p]$ for (i) and
  $\Pr[\Delta_{q} \geq 2\Delta_p/\beta]$ for (ii).

  We begin with (i).  Let $p^\parallel$
  be the orthogonal projection of $p$ onto $F$, and let
  $p^\perp = p - p^\parallel$.
  Let $q^\perp = Np^{\perp}$. Then,
  $\Delta_p = \| p^\perp \|$ and $\Delta_q \leq \|q^\perp\|$.
  By Lemma~\ref{lem:jl}(\ref{itm:jl-up}),
  \begin{multline*}
    \Pr\left[\|q^\perp\| \geq 2 \Delta_p\right]
     = \Pr\left[\|Np^\perp\|/\|p^\perp\| \geq 2\right]
     = \Pr\left[\|N(p^\perp/\|p^\perp\|)\|^2 \geq 4\right]\\
     \leq \exp\left(\tfrac{1}{2}d'(1 - 4 + \ln 4)\right)
     \leq \exp(- d'/2).
  \end{multline*}
  Thus, $\Pr[\Delta_q \leq 2\Delta_p] \geq
  \Pr[\|q^\perp \| \leq 2\Delta_p] \geq 1 - \exp(-d'/2)$, as desired.

  For (ii),  choose $k$ orthonormal vectors
  $e_1, \dots, e_k$ such that
  $F = \left\{p^\parallel + \sum_{i=1}^{k} \lambda_i e_i \mid
  \lambda_i \in \R\right\}$.
  Set $u_i = e_i\|p^\perp\|/\beta^2$, and
  consider the lattice
  $L = \left\{ p^\parallel + \sum_{i=1}^k \mu_i u_i \mid
   \mu_i \in \Z \right\} \subset F$.
  Let $\bar{L} = NL$ be the projected lattice.
  We next argue that with high probability
  (i) all points in $\bar{L}$ have
  distance at least $3\|p^\perp\|/\beta$ from $q$; and (ii)
  for $i = 1, \dots, k$, we have $\|Nu_i\| < \|p^\perp\|/\beta\sqrt{k}$.

  To show (i), we partition $L$ into \emph{layers}: for
  $j \in \{0, 1, \dots, \}$, let
  \begin{align*}
    L_j = \left\{p^\parallel + \sum_{i=1}^k \mu_i u_i \mid
      \mu_i \in \{-j, \dots, j\},
     \max_i |\mu_i| = j\right\} \subset L.
  \end{align*}
  Now for any $j \in \N$ and
  $r = p^\parallel + \sum_{i = 1}^k \mu_i u_i \in L_j$,
  Pythagoras gives
  \begin{align*}
      \|p - r\| = \sqrt{\|p^\perp\|^2 +
       \left\|\sum_{i = 1}^k \mu_i u_i \right\|^2}
    = \|p^\perp\|\sqrt{1 + \sum_{i = 1}^k |\mu_i|^2/\beta^4}
    \geq \|p^\perp\|\sqrt{1 + j^2/\beta^4}.
  \end{align*}
  Thus, using Lemma~\ref{lem:jl}(\ref{itm:jl-low}),
  \begin{align*}
    \Pr\left[\|N(p - r)\| \leq 3\|p^\perp\|/\beta\right]
    & = \Pr\left[\|N(p - r)\| / \|p - r\| \leq
        3\|p^\perp\|/\beta\|p - r\|\right] \\
    & \leq \Pr\left[\left\|N (p - r)/\|p - r\|\right\|^2 \leq
      9/\beta^2(1+j^2/\beta^4)\right] \\
    & \leq \exp(\tfrac{1}{2}d'(1 + \ln (9/\beta^2(1+j^2/\beta^4)))) \\
    & \leq (5/\beta)^{d'}(1 + j^2/\beta^4)^{-d'/2},
  \end{align*}
  as $\sqrt{9e} \leq 5$. Now we use a union bound to obtain
  \begin{align*}
    \Pr\left[\exists{r \in L}: \|N(p - r)\| \leq
        3\|p^\perp\|/\beta \right]
    & = \sum_{j = 0}^{\infty}
       \Pr\left[\exists {r \in L_j}: \|N(p - r)\|
           \leq 3\|p^\perp\|/\beta \right]. \\
    \intertext{Grouping the summands into groups of
      $\beta^2$ consecutive terms, this is}
    & = \sum_{l = 0}^{\infty}
        \sum_{j = l\beta^2}^{l\beta^2+\beta^2-1}
       \Pr\left[\exists {r \in L_j}: \|N(p - r)\|
           \leq 3\|p^\perp\|/\beta \right]\\
    & \leq (5/\beta)^{d'} \sum_{l = 0}^{\infty}
      |L_{\leq(l + 1)\beta^2}|(1 + (l\beta^2)^2/\beta^4)^{-d'/2},
    \intertext{where $L_{\leq a} = \bigcup_{i=0}^{a} L_i$.
    Using the rough bound $|L_\leq{a}| \leq (3a)^k$, this is}
    & \leq (5/\beta)^{d'} \sum_{l = 0}^{\infty}
    (3(l + 1)\beta^2)^k (1 + l^2)^{-d'/2}\\
    & = 5^{d'}3^k\beta^{2k - d'} \sum_{l = 0}^{\infty}
    (l + 1)^k (1 + l^2)^{-d'/2}.
    \intertext{For $d' \geq 4k$, we have
       $(l + 1)^k (1 + l^2)^{-d'/2} \leq
      (l + 1)^k (1 + l^2)^{-2k} \leq 1/(1+l^2)$, so we can bound
      the sum by $\sum_{l = 1}^\infty 1/(1 + l^{2}) \leq \pi^2/6$.
    Thus, we have derived}
    \Pr\left[\exists r \in L: \|N(p - r)\| \leq
        3\|p^\perp\|/\beta \right]
      &\leq 5^{d'}3^k\beta^{2k - d'} (\pi^2/6) \leq
        5^{d' + k}\beta^{2k - d'},
     \addtocounter{equation}{1}\tag{\theequation}\label{equ:bound_i}
    \intertext{since $\pi^2/6 \leq 5/3$.}
  \end{align*}
  To show (ii), we use a union bound with
  Lemma~\ref{lem:jl}(\ref{itm:jl-up}). Recalling
  $\|u_1\| = \|p^\perp\|/\beta^2$,
  \begin{align*}
    \Pr\left[\exists i = 1,\dots, k: \|Nu_i\| > \|p^\perp\|/\sqrt{k}\beta
      \right]
     & \leq k\Pr\left[\|Nu_1\| > \|p^\perp\|/\sqrt{k}\beta\right] \\
  &=  k\Pr[\|N(u_1/\|u_1\|)\|^2 > \beta^2/k] \\
  & \leq k \exp\left(\tfrac{1}{2}d'(1 - \beta^2/k + \ln(\beta^2/k))\right) \\
  & \leq k \exp\left(-\beta^2d'/4k\right),
   \addtocounter{equation}{1}\tag{\theequation}\label{equ:bound_ii}
  \end{align*}
  since $c^2/k \geq 2(1 + \ln (\beta^2/k))$ for $\beta^2/k \geq 6$.
  By (\ref{equ:bound_i}) and (\ref{equ:bound_ii}), and recalling
  $\beta, d' \geq 40k$,
  the probability that events (i) and (ii) do not both happen
  is at most
  \begin{multline*}
    5^{d' + k}\beta^{2k - d'} + ke^{-\beta^2d'/4k}  \leq
    5^{(1 + 1/40)d'}\beta^{(1/20 - 1) d'} +
    (\beta/40)e^{-10 cd'} \\
    \leq
    \left(\frac{5^{41/40}}{40^{9/20}}\right)^{d'} \beta^{-d'/2}
    + \frac{1}{10} \beta^{-d'/2}
\leq \beta^{-d'/2}.
\end{multline*}
Suppose (i) and (ii) happen. Fix a point
$w \in K$, and let $\bar{r} \in \bar{L}$ be the point in the projected
lattice that is closest to $w$.
By (i), $\dist(q, \bar{r}) > 3\|p^\perp\|/\beta$.
By (ii)
and the choice of $\bar{r}$,
the $k$-dimensional cube with center $\bar{r}$ and side length
$\|p^\perp\|/\beta\sqrt{k}$ contains $w$. This cube has diameter
$\|p^\perp\|/\beta$.
By triangle inequality,
$\dist(q, w) > \dist(q, \bar{r}) - \dist(\bar{r}, w) \geq
(3/\beta - 1/\beta)\|p^\perp\| = 2\|p^\perp\|/\beta$.
\end{proof}

\subsection{Queries of Type Q1}
Let a query flat $F$ be given.
To answer \textbf{Q1} queries, we compute $\bar{F}$ and query
$D_1$ with $\bar{F}$ to obtain a $(\lowdimapprox['])$-nearest
neighbor $\bar{p}$. We return the original point $p$.
To obtain Theorem~\ref{thm:q1}, we argue that if $\bar{p}$ is a
$(\lowdimapprox)$-nearest neighbor for $\bar{F}$, then $p$ is
a $n^t$-nearest neighbor for $F$ with high probability.

Let $p^* \in P$ be a point with $d(p^*,F) = d(P,F)$.
Set $\delta_{p^*} = d(p^*,F)$ and
$\bar{\delta}_{p^*} = d(\bar{p^*}, \bar{F})$.
Denote by $A_1$ the event that $\bar{\delta}_{p^*} \leq \delta_{p^*}$.
By Lemma~\ref{lem:jl-kflat},
$\Pr[A_1] \geq 1 - e^{-d'/2} = 1 - e^{-1/t - 1}$.
Let $A_2$ be the event that for all points $p \in P$ with
$\delta_{p} = d(p, F) > n^t \delta_{p^*}$ we have
$\bar{\delta}_{p} = d(\bar{p},\bar{F}) >
(\lowdimapprox['])\delta_{p^*}$.
For a fixed $p \in P$, by setting $\beta = n^t/(\lowdimapprox['])$
in Lemma~\ref{lem:jl-kflat},
this probability is

\begin{align*}
  \Pr[\bar{\delta}_{p} > (\lowdimapprox['])\delta_{p^*}]  &
  \geq 1 - (n^t/(\lowdimapprox[']))^{-d'/2} \\
  & = 1 - n^{-1 - t}((4k+3)(2t+1-k)\sqrt{k+1})^{1/t+1} \\
  &\geq   1  - n^{-1 - t/2},
\end{align*}
for $n$ large enough. By the union bound, we
get $\Pr[A_2] \geq 1 - n^{-t/2}$, so the event $A_1 \cap A_2$ occurs
with constant probability. Then, $p$ is a
$n^t$-approximate nearest neighbor for $F$, as desired.

\subsection{Queries of Type Q3}

To answer a query of type \textbf{Q3},
we compute the projection $\bar{F}$ and query $D_2$ with parameter
$\alpha$. We obtain
a set $\bar{R} \subset \bar{P}$ in time
$O(n^{k/(k+1)}\log^{O(1/t)} n + |\bar{R}|)$.
Let $R \subset P$ be the corresponding $d$-dimensional set.
We return a point $p \in R$ that minimizes $d(p,F)$.
If $\delta_{p*} \leq \alpha$, the event $A_1$ from above implies
that $\bar{p^*} \in \bar{R}$, and we correctly return $p^*$.

To bound the size of $|\bar{R}|$, and thus the running time, we use
that $P$ is $\alpha n^t/(2k+1)$-cluster-free.
Let $A_3$ be the event that for all $p \in P$ with
$d(p,F) > \alpha n^t/(2k+1)$, we have
$d(\bar{p}, \bar{F}) > (\lowdimapprox['])\alpha$.
By the definition of cluster-freeness and the guarantee of
Theorem~\ref{thm:lowdimstructure}, we have $|\bar{R}| = m$ in
the case of $A_3$.
Using $\beta = n^t/((2k+1)(\lowdimapprox))$ in Lemma~\ref{lem:jl-kflat}
and doing
a similar calculation as above yields again $\Pr[A_3] \geq 1 - n^{-t/2}$.
Thus, we can answer queries of type \textbf{Q3} successfully
in time $O(n^{k/(k+1)}\log^{O(1/t)}n + m)$ with constant probability,
as claimed in
Theorem~\ref{thm:q3}.

\section{Conclusion}
We have described the first provably efficient
data structure for general $k$-ANN.
Our main technical contribution consists of two
new data structures: the cluster data structure
for high-dimensional $k$-ANN queries, and the
projection data structure for $k$-ANN queries
in constant dimension. We have only
presented the latter structure for a constant approximation factor
$\lowdimapprox$, but we believe that it is possible to extend
it to any fixed approximation factor $c > 1$.
For this, one would need to subdivide the slab structures by
a sufficiently fine sequence of parallel hyperplanes.

Naturally, the most pressing open question is to improve the
query time of our data structure. Also, a further generalization
to more general query or data objects would be of interest.

\section*{Acknowledgments}
This work was initiated while WM, PS, and YS
were visiting the Courant Institute of Mathematical Sciences.
We would like to thank our host Esther Ezra for her hospitality
and many enlightening discussions.

\bibliographystyle{abbrv}
\bibliography{literature}

\end{document}